 \def\b{\beta} \def\d{\delta} \def\e{\varepsilon}   \def\l{\lambda} \def\s{\sigma}  
\def\D{\Delta} \def\G{\Gamma}  \def\S{\Sigma}
     \def\N{\mathcal N} 
\def\P{\mathcal P}   
\def\Re{\mathbb{R}}
\def\rho{\varrho}
\def\E{\mathcal E}
\def\E{{\rm \mathcal{E}}\,}
\def\mG{\mathbb{G}}
\begin{document}\openup 1\jot

\title[Index Theory]{Polytope-form games and Index/Degree Theories for Extensive-Form Games}
\author{Lucas Pahl}
\thanks{I am grateful to Paulo Barelli and Hari Govindan for their guidance and encouragement. I thank Sven Rady, Rida Laraki, Klaus Ritzberger, Carlos Pimienta, Eilon Solan, and Heng Liu for comments and suggestions. An older version of this paper circulated with the title ``Index Theory of Strategic-form Games with an application to Extensive-form Games''. I acknowledge financial support from the Hausdorff Center for Mathematics (DFG project no. 390685813)} 
\address{Institute for Microeconomics, University of Bonn, Adenauerallee 24-42, 53113 Bonn, Germany.}
\email{pahl.lucas@gmail.com}
\date{\today}
\maketitle

\begin{abstract}
We develop index and degree theories for extensive form games allowing the identification of equilibria that are robust to payoff perturbations directly from the extensive form. Our approach is based on index and degree theories for games where the strategy sets are polytopes (and not necessarily simplices) and payoff functions are multaffine. Polytope strategy sets arise naturally from topologically identifying equivalent mixed strategies of a normal form game.

\end{abstract}

\section{Introduction}

The index and degree theories of equilibria offer a selection criterion for equilibria in games which has wide applications both in the literature of dynamic as well as strategic stability (cf. \cite{RWW2022}, \cite{KR1994},  \cite{GLP2022}, \cite{GW2005} and \cite{DR2003}). 

The index of equilibria is essentially an integer number assigned to each connected component of equilibria of a finite game that measures whether an equilibrium is robust to payoff perturbations of this game. It can be readily defined using the characterization of equilibria as fixed points of the best-reply correspondence. The degree requires a bit more work to be properly defined. It is also an integer number that was shown to be identical to the index, thereby also capturing robustness of the equilibria to payoff perturbations. There are many distinct ways in which indices or degrees can be computed. Depending on the specifics of the problem considered, some formulas used to compute the degree might be more applicable than index-theoretic ones, or the opposite might be true. 

Index and degree theories of equilibria were formulated for normal-form games. This implies that whenever an extensive-form game is given, it has to be represented in normal form so that the index or degree can be computed. There are at least two computational inconveniencies that the representation of extensive-form games in normal form implies. First the number of pure strategies of a player (even in the reduced normal form of an extensive-form game) might be an exponential function of the number of terminal nodes of the game tree, causing this representation to be computationally intractable. Second, formulas used for the computation of the degree of equilibria in normal form involve considering small payoff perturbations of the original payoffs to nearby \textit{regular} ones, i.e., payoffs that define games with finitely many equilibria and where the equilibria are smooth and invertible functions of the payoffs of the game. Though regular games are generic, perturbing a given game to a generic one might involve varying the normal-form payoffs in a space whose dimension is exponentially larger than that of the terminal payoffs of the extensive form. This fact alone implies that the formulas for the degree in normal form are hard to compute, which ultimately implies that checking for robustness of equilibria to payoff perturbations is also computationally intractable.   

The first of these problems can be overcome by considering an alternative representation of extensive-form games with perfect recall, developed by \cite{BS1996}, called the \textit{sequence form}. In this alternative representation, the number of strategies of each player grows linearly with the terminal nodes of the game tree and the payoff functions are sufficiently well-behaved to allow standard algorithms for equilibrium computation to work (cf. \cite{KMS1996}).

This paper addresses the second problem listed above. We construct index and degree theories \textit{for extensive-form games}. We label these theories ``extensive-form'' because, ultimately, the formulas we provide for computation of the index or degree of equilibria in an extensive-form game do not rely on perturbations in the whole normal-form payoff space, but only on the terminal payoffs of the game tree. 

Showing these results will require developing index and degree theories for \textit{polytope-form games}, which are games in which the strategy set of a player is a polytope and his payoff function is multiaffine\footnote{``Multiaffine'' will be understood as affine in each coordinate.} in the product of these polytopes.\footnote{These games have been studied by \cite{JM2004}, and were called ``strategic-form games''. In order to avoid confusion with current terminology, we opted for the term ``polytope-form games''.} Developing these theories is conceptually interesting also because it shows that the degree and index of equilibiria are invariant to disposing of \textit{all} strategically redundant data of the game. More precisely, when \textit{all} equivalent mixed strategies of a player are identified topologically, the resulting space of strategies of the player might not be a simplex anymore, but is a polytope. This identification process does not alter the degree or index of the equilibria. This is not a simple matter: once redundant strategies of the players are identified in the normal form of a game, the dimension of the payoff space of the players decreases. It is not clear \textit{a priori} that in the game resulting from the identification of redundant strategies, the equilibria identified as robust by the polytope-form index/degree theory will also remain robust in the original normal-form game: the space of perturbations of the latter is typically much larger than the former.

 By showing the equivalence between polytope- and normal-form index theories, we show that indeed no information regarding robustness to payoff perturbation is lost by restricting to a polytope form of the game. For example, we show how to define a polytope form of an extensive game using the sequence form of \cite{BS1996}, which appears as one of many possible  representations of the strategies in an extensive game and the result of a particular identification of the mixed strategy set of the players. 


The representation of normal- or extensive-form games in polytope form will provide new formulas for computation of the degree and index of equilibria, by an application of the same reasoning to the one outlined above for normal form. As in this paper, we are particularly interested in the computation of degrees/indices of equilibria in extensive games, we are especially interested in formulas that can be computed from perturbation of the terminal payoffs of the game tree directly. Our second main result addresses this question. In  \cite{GW2002}, the authors provide (for an arbitrary game tree with perfect recall) an alternative construction of the strategy set of the players which is essentially equivalent to the sequence form in \cite{BS1996}. They call these strategy sets \textit{enabling strategies}. Under a perturbation of the enabling strategy sets of the players, the authors prove a structure theorem for the graph of equilibria in enabling strategy sets and terminal payoffs of this game tree. Without going into too many details about their arguments, the structure theorem of Govindan and Wilson enables the construction of a degree theory and gives general formulas for the computation of this degree which involve only perturbations of the terminal payoffs of the game tree. This theory is however inherently incomplete: the structure theorem only holds under perturbed strategy sets and therefore the degree theory it implies is not really applicable to verifying robustness of equilibria in extensive games. We show that this incompleteness of the theory can be overcome, and we can produce formulas for the degree of an equilibrium that rely simply on perturbations of the terminal payoffs of the tree. 

This paper is organized as follows. Section \ref{introdeg} gives an introduction to the degree theory of equilibria in normal form, without referencing any machinery in algebraic topology. This allows an intuitive introduction to the main tools and ideas that will be discussed in the context of polytope-form games in the subsequent sections. Section \ref{polytope-form games} develops the theory of polytope-form games: we construct index and degree theories for polytope-form games, and establish their relations with the normal-form theories. The main result is Theorem \ref{PFNFequivalence}. Section \ref{extensiveform} recalls the sequence form representation of the extensive-form game and shows that it is a \textit{reduction} of the normal-form representation of the extensive-form game. The main result is Theorem \ref{stabilization} and its main implications: a degree theory of equilibria defined for extensive-form games together with formulas for the computation of the degree that rely on perturbations of the terminal payoffs only. The Appendix contains the formal definitions of the concepts of index and degree, stated using homology, and additional technical material which is necessary for some of the proofs. The omitted proofs of the main text are also located in the Appendix.

\section{Index and Degree in Normal form: An Introduction}\label{introdeg}

The index and degree theories of equilibria are in principle quite different: the degree derives from an analysis of the graph of the Nash correspondence, i.e., the correspondence which assigns to every game its equilibria, whereas the index comes from an analysis of fixed-point problems generated from the best-reply correspondence of the players. An open problem in normal-form games was whether degree and index were identical - solved affirmatively in \cite{DG2000}. They showed therefore that the differences between the two concepts were just apparent. Depending on the type of problem at hand, however, either the index or the degree might be a more suitable tool. For example, in counting problems related to the number of equilibria in generic finite games, the index of equilibria has been used multiple times fruitfully (cf. \cite{GPS1993} or, more recently, \cite{CS2020}). For problems of computation of the degree or index of an equilibrium component, the formulas for the degree - as the ones we will present in subsection \ref{GWsec} or subsection \ref{polydegcomputation} - are more tractable computationally. 

	In this section we recall the construction of the degree and index theories of equilibria in normal form. As mentioned in the introduction, it could be said that the degree is conceptually more complicated to grasp than the index, which is why our introduction to it is lenghtier than for the index.

\medskip
\subsection{Preliminary Definitions and Notation} Given $X$ a topological space and $U$ a subset of $X$, let $\text{cl}_X(U)$ denote the closure of $U$ in $X$. When the underlying topological space is understood, we omit the subscript $X$ and write cl$(U)$, only. We denote by $\Vert \cdot \Vert$ the usual Euclidean norm. Let $\N = \{1,...,N\}$ be the set of \textit{players}, which from now on is fixed. A tuple $\mathbb{G} = (\N, (S_n)_{n \in \N}, (\mathbb{G}_n)_{n \in \N})$ is a \textit{finite normal-form game}, where $S_n$ is the set of \textit{pure strategies} of player $n$, $\mathbb{G}_n: S \to \Re$ is the \textit{payoff function} of player $n$, where $S \equiv \times_n S_n$. The set of mixed strategies of player $n$ is denoted $\S_n$ and is identified with the unit simplex of $\Re^{S_n}$, where the canonical vector $e_{s_n}$ (i.e., the vector with $1$ in the $s_n$ coordinate) is identified with $s_n \in S_n$. The function $\mathbb{G}_n$ is extended to $\S \equiv \times_n \S_n$ as follows (for notational convenience, we denote the extension also by $\mathbb{G}_n$): 

\begin{equation} 
\mathbb{G}_n(\s) \equiv \mathbb{G}_n (\s_1,...,\s_N) \equiv \sum_{s \in S}\mathbb{G}_n(s)\prod_{m \in \N}\s_{s_m},
\end{equation}

where $\s \equiv (\s_m)_{m \in \N}$ and $\s_m \equiv (\s_{s_m})_{s_m \in S_m}$.

The set of Nash equilibria in a finite game is described by a finite system of polynomial equalities and inequalities with real coefficients and real variables and is therefore a semi-algebraic set. This implies that there are finitely many connected components of solutions to the system, i.e., finitely many connected components of equilibria. (cf. \cite{BCR2013}). Fixing $\S \equiv \times_n \S_n$, the set of payoff functions of player $n$ over $\S$ is identified with $\Re^{|S_1|...|S_N|}$ and the set of payoffs for all players is $\mathcal{P} \equiv \Re^{N(|S_1|...|S_N|)}$. Given a finite normal-form game $\mathbb{G}$, we denote for notational convenience the vector of payoffs $(\mathbb{G}_n)_{n \in \N}$ by $\mathbb{G}$. For a fixed $\S$, the \textit{Kohlberg-Mertens equilibrium graph} ($KM$-equilibrium graph) is the set $\mathcal{E}^{KM} = \{ (\s, \mathbb{G}) \in \S \times \mathcal{P} \mid \s \text{ is an Nash equilibrium of } \mathbb{G}\}$. Recall that the topological space given by the one-point compactification of $\P$ is a sphere $\mathbb{S}$ of dimension equal to the dimension of $\P$.

\subsection{Index and Degree Theories in Normal form}

We start with an exposition on the degree theory of equilibria in normal form in subsection \ref{degtheory}. Right after, we define the theory of index of equilibria in normal form.  

\subsubsection{Structure Theorem, Degree Theeory and Robustness of equilibria}\label{degtheory}Let proj$: \mathcal{E}^{KM} \to \mathcal{P}$ be the map defined by $(\s, \mathbb{G}) \mapsto \mathbb{G}$. An equilibrium $\s \in \S$ of a game $\mathbb{G}$ obviously satisfies the following equation: 

\begin{equation}\label{eq graph}
\text{proj}(\s, \mathbb{G}) = \mathbb{G}
\end{equation}

Asking whether $\s$ is robust to small perturbations of $\mathbb{G}$ amounts to asking whether equation \eqref{eq graph} has a solution $\s'$ close $\s$, when $\mathbb{G}$ is perturbed to a sufficiently close $\mG'$. In other words, $\s$ is \textit{payoff-robust}\footnote{The terminology \textit{essential} has also been in the literature (cf. \cite{WJ1962}) in place of of \textit{payoff-robust}.} if for any $\varepsilon>0$, there exists $\d>0$ such that if $\Vert \mG - \mG' \Vert < \d$, there exist $\s'$ an equilibrium of $\mG'$ with $\Vert \s - \s' \Vert < \varepsilon$. 

The $KM$-structure theorem allows us to define a ($KM$)-\textit{degree theory of equilibria} which is a tool to identify which equilibria are robust to payoff perturbations. The $KM$-structure theorem has two parts: in the first part, a homeomorphism $\theta^{KM}: \mathcal{E}^{KM} \to \mathcal{P}$ is explicitly constructed. In the second part, proj$\circ (\theta^{KM})^{-1}: \mathcal{P} \to \mathcal{P}$ is shown to be (linearly) homotopic to the identity map $id_{\mathcal{P}}$ on $\P$, by a homotopy that extends to the one-point compactification $\mathbb{S}$ of $\mathcal{P}$. We explain how these two parts of the theorem play a role in defining a degree theory and ultimately help us in identifying robust equilibria.


The $KM$-homeomorphism is defined as follows: given $(\s, \mG) \in \mathcal{E}^{KM}$, the vector $\mG_n \in \Re^{|S_1|....|S_N|}$ is orthogonally decomposed as $\tilde \mG_n \oplus g_n$:  $\tilde \mG_n$ satisfies for each $s_n \in S_n$, $\sum_{t_{-n} \in S_{-n}}\tilde \mG_{n}(s_n, t_{-n}) =0$  and $g_n \in \Re^{|S_n|}$ lies is the orthogonal complement to $\tilde{\mathbb{G}}_n$. Then, for each $s \in S$

\begin{equation}
\theta^{KM}_{n}(\s, \mG) = \theta^{KM}_n(\s, \tilde \mG, g) = (\tilde \mG_n, z_n), \text{where}
\end{equation}

$g \equiv (g_n)_{n \in \N}$, $\mG_n(\s_{-n}) \equiv (\mG_n(s_n, \s_{-n}))_{s_n \in S_n}$, and  $z_n =\ \s_n + \mG_n(\s_{-n})$. Let $\theta^{KM} \equiv \times_{n \in \N}\theta^{KM}_n$. The inverse homeomorphism $(\theta^{KM})^{-1}: \mathcal{P} \to \E$ is then:

\begin{equation}
(\theta^{KM})^{-1}_n(\tilde \mG_n, z_n) = (\tilde \mG_n \oplus z_n - r_n(z_n) - \tilde \mG_n(r_{-n}(z_{-n})), r_n(z_n))
\end{equation}

 where $r_n: \Re^{S_n} \to \S_n$ is the nearest-point projection and $r_{-n} \equiv \times_{m \neq n}r_{m}$. 

Equation \eqref{eq graph} can then be rewritten as:

\begin{equation}\label{z-graph}
\text{proj} \circ (\theta^{KM})^{-1}(\tilde \mG, z) = \mathbb{G}.
\end{equation}

Equation \eqref{eq graph} is now rewritten in \eqref{z-graph} as an equation of a map defined from $\P$ to itself. A solution $z = (z_n)_{n \in \N}$ of the equation \eqref{z-graph} is such that $r(z) \equiv \times_n r_n(z_n)$ is an equilibrium of $\mG$. Evidently, payoff-robustness of an equilibrium $\s$ of $\mG$ with respect to payoff-perturbations from $\mG$ can now be rewritten in terms of robustness of solutions $z$ to small perturbations of $\mG$. Let us assume that $z_0$ is a solution of equation \eqref{z-graph}, that $\text{proj} \circ (\theta^{KM})^{-1}$ is smooth in an open neighborhood $U$ of $(\tilde \mG, z_0)$ and that the determinant Jacobian $\partial (\text{proj} \circ (\theta^{KM})^{-1})(\tilde \mG, z_0)$ of $\text{proj} \circ (\theta^{KM})^{-1}$ at $(\tilde \mG, z_0)$ is different from $0$. This guarantees the solution $z_0$ is isolated. The inverse function theorem implies that there exists an open neighborhood $W \subset U$ of $(\tilde \mG, z_0)$ and $O$ a neighborhood of $\text{proj} \circ (\theta^{KM})^{-1}(\tilde \mG, z_0)$ such that $\text{proj} \circ (\theta^{KM})^{-1}: W \to O$ is a diffeomorphism. It is now clear that for each $\e>0$, there exists $\d>0$ such that if $\mG' \in O$, $\Vert \mG - \mG' \Vert < \d$ there exists $(\tilde \mG', z') \in W$ satisfying \eqref{z-graph} with $\Vert (\tilde \mG, z) - (\tilde \mG',z') \Vert < \e$. Therefore, $\text{sign}(\partial(\text{proj} \circ (\theta^{KM})^{-1})(\tilde \mG, z_0)) \neq 0$ implies that $r(z_0)$ is a payoff-robust equilibrium of $\mG$.


Though $(\theta^{KM})^{-1}$ is not a smooth map, it can be shown that there exists an open and dense set $\mathcal{Q} \subset \mathcal{P}$ such that if $\mG \in \mathcal{Q}$, then equation \eqref{z-graph} has finitely many solutions in $z$ (cf. \cite{JH1973} or \cite{GW2003}), around each of which $\text{proj} \circ (\theta^{KM})^{-1}$ is smooth and on each of which $\text{proj} \circ (\theta^{KM})^{-1}$ has invertible determinant Jacobian. In this case,  the \textit{degree of a solution $z_0$ (of \ref{z-graph}) w.r.t. $\mathbb{G}$} is defined as $\text{sign}(\partial(\text{proj} \circ (\theta^{KM})^{-1})(\tilde \mG, z_0)) \in \{-1, +1\}$. Since $\s_0 \equiv r(z_0)$ is an equilibrium of $\mG$, the \textit{degree of $\s_0$ w.r.t. $\mG$}, denoted deg$_{\mG}(\s_0)$, is defined as $\text{sign}(\partial(\text{proj} \circ (\theta^{KM})^{-1})(\tilde \mG, z_0))$. The \textit{total degree of $\mG$} is the sum of the degree of all solutions. Intuitively, the total degree counts the number of solutions to the equation \eqref{z-graph} for a fixed $\mG$, together with their ``multiplicities'', that is, the signs of $+1$ or $-1$ associated to the determinant Jacobian at the solution. 

The definition of the degree of an equilibrium presented in the previous paragraph applies only to games which have isolated equilibria and where the Jacobian of proj$\circ (\theta^{KM})^{-1}$ around a solution is smooth and invertible. But in general games might have a continuum of equilibria in mixed strategies, a particularly prevalent feature in extensive-form games. Therefore, not all normal-form games have payoffs which belong to $\mathcal{Q}$. There are, however, finitely many connected components of solutions to \eqref{z-graph}, as there are finitely many connected components of equilibria for any finite normal-form game. The definition of the degree must then be refined to account for this situation. For games in $\mathcal{Q}$, all equilibria are isolated and therefore the degree is assigned to each of these isolated points. When there are nondegenerate connected components of equilibria, the degree is assigned to components of equilibria. The next proposition tells us how to do so. Denote $f^{KM} \equiv \text{proj} \circ (\theta^{KM})^{-1}$.

\begin{proposition}\label{degnormalform}
Let $X$ be an equilibrium component of a normal-form game $\mG$ and let $X_z \equiv \theta^{KM}(X, \mG)$. There exists $d \in \mathbb{Z}$, such that for any open neighborhood $U$ of $X_z$ with $\text{cl}(U) \cap \{ (\tilde \mG, z) \in \P \mid f^{KM}(\tilde \mG, z) = \mG \} = X_z$, there exists $\bar \e>0$  such that for each $\e \in (0,\bar \e)$ and $\mG' \in \mathcal{Q}$ satisfying $\Vert \mG' - \mG \Vert < \e$, 
\begin{equation}
\sum_{(\tilde \mG', z') \in U : f^{KM}(\tilde \mG', z') = \mG'} \text{sign}(\partial(\text{proj} \circ (\theta^{KM})^{-1})(\tilde \mG', z')) = d.
\end{equation} 
\end{proposition}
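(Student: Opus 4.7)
The plan is to define $d$ as the local Brouwer degree $\deg(f^{KM}, U, \mG)$ of the continuous map $f^{KM}:\P\to\P$ on the open set $U$ at the value $\mG$, and then to read off the identity in the statement from the regular-value characterization of Brouwer degree, combined with homotopy invariance and excision.

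First, I would verify that this local degree is well-defined with the $U$'s allowed by the hypothesis. The preimage $(f^{KM})^{-1}(\mG)$ equals $\theta^{KM}(E\times\{\mG\})$, where $E\subset\S$ is the equilibrium set of $\mG$: by semi-algebraicity and compactness of $\S$, this preimage has finitely many compact connected components, one of which is $X_z$. Hence $X_z$ is isolated from the other components, and bounded open neighborhoods $U\supset X_z$ with $\text{cl}(U)\cap(f^{KM})^{-1}(\mG)=X_z$ exist. For any such $U$, the boundary $\partial U$ is compact and disjoint from $(f^{KM})^{-1}(\mG)$, so $f^{KM}(\partial U)$ is compact and does not contain $\mG$. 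I set $\bar\e:=\text{dist}(\mG,f^{KM}(\partial U))>0$ and $d:=\deg(f^{KM},U,\mG)\in\mathbb{Z}$.

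Next, for any $\mG'$ with $\Vert\mG-\mG'\Vert<\bar\e$, the straight-line path $s\mapsto \mG+s(\mG'-\mG)$ stays at positive distance from $f^{KM}(\partial U)$, so homotopy invariance of Brouwer degree yields $d=\deg(f^{KM},U,\mG')$. If in addition $\mG'\in\mathcal{Q}$, then by the regularity property of $\mathcal{Q}$ the preimages $(\tilde\mG',z')\in U$ of $\mG'$ are finite in number and $f^{KM}$ is smooth with invertible differential at each of them. The regular-value formula for the Brouwer degree then gives exactly the asserted identity. Independence of $d$ from the particular $U$ chosen inside the hypothesis of the statement follows from the excision property of Brouwer degree.

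The main technical subtlety will be dealing with the fact that $f^{KM}$ is only piecewise smooth, because the nearest-point projections $r_n$ appearing in $(\theta^{KM})^{-1}$ are merely $C^0$ across the faces of $\S_n$. Continuity alone is nevertheless enough for the Brouwer degree to be defined and to enjoy homotopy invariance and excision. The regular-value computation is legitimate because, by the results of \cite{JH1973} and \cite{GW2003} recalled in the excerpt, for $\mG'\in\mathcal{Q}$ all preimages lie in the smooth locus of $f^{KM}$ and have invertible Jacobian. Once these points are in place, the proposition is essentially a direct application of the standard axioms of Brouwer degree theory.
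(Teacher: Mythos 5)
Your argument is essentially the paper's: both define $d$ as the local degree of $f^{KM}$ over $\mG$ on $U$, use local constancy of that degree under small perturbations of the target value, and conclude with the regular-value (sum of Jacobian signs) formula for $\mG'\in\mathcal{Q}$. The only difference is packaging — the paper invokes properness of $f^{KM}|_U$ over a small ball and Dold's Proposition 5.12 where you use the boundary-distance estimate and homotopy invariance of the Brouwer degree; note that properness is also what cleanly handles neighborhoods $U$ that are unbounded, where your $\mathrm{dist}(\mG, f^{KM}(\partial U))$ could a priori vanish.
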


\begin{proof} The map $\text{proj} \circ (\theta^{KM})^{-1})|_{U}$ is proper over a ball $B$ around $\mG$ with sufficiently small radius. Proposition 5.12 in Chapter IV of \cite{D1972} implies that the local degree is constant for each $\mG' \in B$. For each $\mG' \in \mathcal{Q} \cap B$, the degree of $\text{proj} \circ (\theta^{KM})^{-1})|_{U}$ over $\mG'$ is constant in $\mG' \in B$ and equals the $\sum_{(\tilde \mG', z') \in U : f^{KM}(\tilde \mG', z') = \mG'} \text{sign}(\partial(\text{proj} \circ (\theta^{KM})^{-1})(\tilde \mG', z'))$.\end{proof}

Proposition \ref{degnormalform} tells us that the number of solutions $z$ of \eqref{z-graph} for a game $\mG'$ which is sufficiently close to $\mG$, counted with their multiplicities, is a constant $d$.	We can therefore define the \textit{degree of the component $X$ w.r.t. $\mG$} as the integer $d$, and denote it by $\text{deg}_{\mG}(X)$. Similarly to what happens in the generic case, if $\text{deg}_{\mG}(X) \neq 0$, then for any $\varepsilon>0$, there exists $\d>0$ such that for any game $\mG'$ with $\Vert \mG' - \mG \Vert < \d$, there exists an equilibrium $\s'$ of $\mG'$ with $d(X,\s')< \varepsilon$, where $d(\cdot,\cdot)$ is the set distance.  

The second part of the $KM$-structure theorem constructs a linear homotopy between $\text{proj} \circ (\theta^{KM})^{-1}$ and $\text{id}_{\mathcal{P}}$ and shows this homotopy can be extended to the one-point compactification $\mathbb{S}$ of $\mathcal{P}$. This second part implies an additional consequence for the degree of equilibrium components: \textit{the sum of degrees of the equilibrium components of a finite game $\mG$ is $+1$}. This allows us to say immediately that, for games $\mG \in \mathcal{Q}$, the number of equilibrium components must be odd, since their individual degrees are either $+1$ or $-1$, and their sum must be $+1$. Explaining precisely the reason for this implication requires a deeper dive into more demanding topological machinery. Indeed, a definition of the degree of a component of equilibria can be given directly using (singular) homology and the properties of the degree highlighted in this subsection follow seamlessly from this definition. In subsection \ref{degreetheory} of the Appendix, we present a short introduction to normal-form degree theory using this topological machinery, and highlight these useful properties.

\subsubsection{Index of Equilibria}\label{indextheory}In fixed point theory, the index of fixed points contains information about their robustness when the map is perturbed to a nearby map. (See \cite{D1972}, Ch. VII for an account of index theory) Since Nash equilibria are obtainable as fixed points, index theory applies directly to them. When an equilibrium is robust to perturbation of its associated map, then it is in particular robust to payoff perturbations (cf. \cite{KR1994}), which is why the index has an immediate interest to game theory. 

For simplicity, suppose $f: U \to \S$ is a differentiable map defined on a neighborhood $U$ of $\S$ in $\Re^{\sum_n |S_n|}$ and such that the fixed points of $f$ are the Nash equilibria of a game $\mG$. Let $d_{f}$ be the displacement of $f$, i.e., $d_{f}(\s) = \s - f(\s)$. Then the Nash equilibria of $\mG$ are the zeros of $d_{f}$.  Suppose now that the Jacobian of $d_f$ at a Nash equilibrium $\s$ of $\mG$ is nonsingular. Then we can define the {\it index} of $\s$ under $f$ as $\pm 1$ depending on whether the determinant of the Jacobian of $d_{f}$ is positive or negative.  As it happens with the degree of equilibria, we can obtain a definition of the index of a component of equilibria by considering a perturbation of the displacement that is differentiable and has finitely many zeros: the indices of the isolated zeros of any sufficiently small perturbation of the initial displacement sum to the same constant, which can be defined as the index of the component. As we did with the degree, the general definition of index of equilibria we will operate with in this paper will be done using singular homology and can be found in subsection \ref{indextheorygeneral} in the Appendix. 

A potential problem with the definition of index is the dependence of the definition on the function $f$. For a game $\mG$ there are many maps from $\S$ to $\S$ whose fixed points are the equilibria of game $\mG$ and one would like to know whether the index of each component of equilibria is the same under each map. Under some regularity assumptions on $f$, we can show that the index is independent of $f$. Specifically, consider the class of  continuous maps $f: \P \times \S \to \S$ with the property that  the fixed points of the restriction of $f$ to $\{\, \mG \, \} \times \S$ are the equilibria of $\mG$.  \cite{DG2000} show that the index of equilibria is independent of the particular map in this class that is used to compute it;  \cite{GW1997a} show that the degree is equivalent to the index computed using one of the maps in this class, the fixed-point map defined by \cite{GPS1993}. Thus, \textit{the index and degree of equilibria coincide}---see \cite{DG2000} for an alternate, more direct, proof of this equivalence.

\section{Polytope-form Games}\label{polytope-form games}

In this section, we introduce the notion of \textit{polytope-form games}, which are games in which the strategy sets of the players are polytopes. Examples throughout this section motivate the use of this notion. We introduce the notion of \textit{reduction} of a polytope-form game and prove a series of results about it, which will later be important in showing invariance properties of degree and index theories for polytope-form games.

\subsection{Polytope-form games and reductions}For each $n \in \N$, let $P_n$ be a polytope in some finite dimensional real vector space and denote $P \equiv \times_{n \in \N}P_n$. For each $n \in \N$, let $V_n: P \to \Re$ be an affine function in each coordinate $P_n$. The tuple $V = (\N, (P_n)_{n \in \N}, (V_n)_{n \in \N})$ defines a \textit{polytope-form game}, where $P_n$ is the \textit{strategy set} of player $n$ and $V_n$ the \textit{payoff function} of player $n \in \N$. Nash equilibrium is defined in the exact same fashion as for normal-form games, i.e., $p \in P$ is a \textit{Nash equilibrium} of $V$ if for each $n \in \N$ and each $t_n \in P_n$, $V_n(p) \geq V_n(t_n, p_{-n})$. We denote by $E(V)$ the set of equilibria of a polytope-form game $V$. Any normal-form game is evidently a polytope-form game, where the strategy set of each player is the unit simplex of some Euclidean space. 

\medskip
Let $V = (\N, (P_n)_{n \in \N}, (V_n)_{n \in \N})$ be a polytope-form game and, for each $n \in \N$, let $\bar P_n$ be a polytope. Let $q_n: P_n \to \bar P_n$ be an affine and surjective map satisfying the following condition:

\begin{equation}\label{identcondition}
 \forall p_n, p'_n \in P_n, q_n(p_n) = q_n(p'_n) \implies \forall m \in \N, \forall t_{-n} \in P_{-n}, V_m(p_n, t_{-n}) = V_{m}(p'_n, t_{-n}).
\end{equation}

Note that because $q_n$ is affine and surjective, it admits a right-inverse $j_n: \bar P_n \to P_n$, i.e., $q_n \circ j_n = id_{\bar P_n}$. Let $j \equiv \times_{n} j_n$. Given a polytope-form game $V$ and a map $q \equiv \times_{n \in \N} q_n$ whose coordinate maps satisfy \ref{identcondition}, we define the $q$-\textit{reduction $\bar V$ of $V$} as the polytope-form game $\bar V = (\N, (\bar P_n)_{n \in \N}, (\bar V_n)_{n \in \N})$, where $\bar V_n \circ q (p) = V_n(p)$. The map $q$ is called a \textit{reduction map of} $V$.  The reduction map $q$ defines a unique $q$-reduction of $V$: given $\bar p \in \bar P$, there exists some $p \in P$ such that $q(p) = \bar p$, and we can set, for each $n \in \N$, $\bar V_n(\bar p) \equiv V_n(p)$. By equation \eqref{identcondition}, $V_n$ is constant in the fibers of $q$ and so $\bar V_n$ is uniquely defined. 

\begin{definition}\label{reduction}
Two polytope-form games $V = (\N, (P_n)_{n \in \N}, (V_n)_{n \in \N})$ and $\bar V = (\N, (\bar P_n)_{n \in \N}, (\bar V_n)_{n \in \N})$ are \textit{equivalent} if there exist two reduction maps $q: P \to P'$ and $\bar q: \bar P \to P'$ and a polytope-form game $V' = (\N, (P'_n)_{n \in \N}, (V'_n)_{n \in \N})$ such that for each player $n$, $V'_n \circ q = V_n$ and $V'_n \circ \bar q = \bar V_n$.
\end{definition}

Put differently, Definition \ref{reduction} states that the two polytope-form games $V$ and $\bar V$ are equivalent if they have a common reduction $V' = (\N, (P'_n)_{n \in \N}, (V'_n)_{n \in \N})$.

\begin{example}For a normal-form game $\mG = (\N, (S_n)_{n \in \N}, (\mG_n)_{n \in \N})$, the reduced normal form $\mG^{r} = (\N, (S^r_n)_{n \in \N}, (\mG^r_n)_{n \in \N})$ (cf. \cite{KM1986}), where for all $n \in \N, S^r_n \subset S_n$, is a reduction of $\mG$. Define the reduction map $q^r_n: \S_n \to \S^r_n$, where $\S^r_n$ is the mixed strategy set of player $n$ in $\mathbb{G}^r$ and $\S_n$ the mixed strategy set of player $n$ in $\mathbb{G}$: let $q^r_n(s_n) \equiv s_n$, if $s_n \in S^r_n$. For $s_n \notin S^r_n$, there exists $\s^r_n \in \S^r_n$, such that $\forall m \in \N, \forall s_{-n} \in S_{-n}, \mathbb{G}_m(s_n, s_{-n}) = \mathbb{G}_{m}(\s^r_n, s_{-n})$. Define then $q^r_n(s_n) \equiv \s^r_n$. Extend the map $q^r_n$ to $\S_n$ by linear interpolation to obtain a reduction map. It follows immediately that for each $n \in \N$, $\mathbb{G}^r_n \circ q^r = \mathbb{G}_n$.\end{example}


\begin{remark}Definition \ref{reduction} indeed gives rise to an equivalence relation between polytope-form games: define $V \sim \bar V$ iff $V$ and $\bar V$ have a common reduction $V'$. The relation $\sim$ is an equivalence relation among polytope-form games. Symmetry and reflexivity are immediate from the definition. The only remaining property to check is transitivity: let $V \sim \bar V$  and $\bar V \sim \tilde V$. We claim that $V \sim \tilde V$.  Let $V'$ be the common reduction of $V$ and $\bar V$. Let $q$ be the reduction map from $V$ to $V'$ and $j$ the right-inverse of the reduction map from $\bar V$ to $V'$. Let $\tilde V'$ be the common reduction from $\bar V$ and $\tilde V$ and let $ q'$ be the reduction map from $\bar V$ to $\tilde V'$. Consider the map $\tilde q = q' \circ j \circ q$. The map $\tilde q$ is a reduction map from $V$ to $\tilde V'$, which shows that $V$ and $V'$ have a common reduction, i.e., $V \sim V'$. \end{remark}

\subsection{The reduced polytope form}Let $V = (\N, (P_n)_{n \in \N}, (V_n)_{n \in \N})$ be a polytope-form game and $\{v^n_1,...,v^n_{k_n}\}$ the set of vertices of $P_n$. Let $S_n = \{e_{1},...,e_{k_n}\}$ be the canonical basis of $\Re^{k_n}$. For each $n \in \N$, fix $h_n: S_n \to \{v_1,..., v_{k_n}\}$ a bijection. The map $h_n$ defines uniquely an affine and surjective map $q_n: \S_n \to P_n$, where $\S_n \equiv \D(S_n)$, such that for each $j \in \{1,..,k_n\}, q_n(e_j) = h_n(e_j)$.
Let $\mathbb{G} = (\N, (S_n)_{n \in \N}, (\mathbb{G}_n)_{n \in \N})$ be the normal-form game where payoff functions are defined by $\mathbb{G}_n (s_1,...,s_N) \equiv V_n (q_1(s_1),...,q_N(s_N))$. Clearly, $V$ is a reduction of $\mathbb{G}$. 

\begin{remark}We observe that when defining $\mathbb{G}$ from $V$ above, the only arbitrary choice made is the map $h_n$, as there are multiple ways of mapping the set $S_n$ to the vertices of $P_n$. Any two normal-form games $\mathbb{G}$ and $\mathbb{G}'$ obtained from from $V$ by considering two distinct maps $h_n$ and $h'_n$, respectively, are identical, modulo a relabeling of the pure strategies of the players. Therefore, up to relabeling, $\mathbb{G}$ and $\mathbb{G}'$ are the same game. \end{remark}

Each payoff function $\mathbb{G}_n$ can be uniquely extended to a multilinear function on $\times_n \mathbb{R}^{S_n}$, thus we still denote by $\mG_n$ the extension. We define an equivalence relation $\approx$ between points in $\mathbb{R}^{S_n}$: 

\begin{equation}\label{redeq}
x_n, y_n \in \mathbb{R}^{S_n}, x_n \approx y_n   \iff \forall m \in \N,  \forall s_{-n} \in S_{-n},  \mathbb{G}_m(x_n, s_{-n}) = \mathbb{G}_m(y_n, s_{-n}). 
\end{equation}

Let $\mathbb{R}^{S_n}/\approx$ be the set of equivalence classes given by equivalence relation $\approx$. We endow this set with the quotient topology given by this equivalence relation. Since for each $m \in \N$, $\mathbb{G}_m$ is multilinear, the quotient space $\mathbb{R}^{S_n}/\approx$ can be given a real vector space structure from operations in $\mathbb{R}^{S_n}$: for addition, if $[x_n], [y_n] \in \mathbb{R}^{S_n}/\approx$, define $[x_n] \oplus [y_n] := [x_n  + y_n]$; for scalar multiplication, let $\alpha \in \mathbb{R}$ and define $\alpha[x_n] := [\alpha x_n]$.  Because $\mathbb{R}^{S_n}$ is finite dimensional, so is $\mathbb{R}^{S_n}/\approx$. Therefore, $\mathbb{R}^{S_n}/\approx$ is a finite-dimensional vector space  and we denote it by $J_n \equiv \mathbb{R}^{S_n}/\approx$.  

Consider now the partition mapping $\pi^r_n: \mathbb{R}^{S_n} \rightarrow J_n$ from $\approx$ given by  $x_n \mapsto [x_n]$ and its restriction to $\Sigma_n$ denoted $\pi^r_n|_{\Sigma_n}$. By construction, $\pi^r_n|_{\Sigma_n}$ is an affine and surjective map which implies that $\pi^r_n(\Sigma_n)$ is a polytope. The map $\pi^r|_{\S^r}$ is called the \textit{maximal reduction map of $V$} and is obviously a reduction map. For notational convenience we will denote $\pi^r|_{\S^r}$ by $\pi^r$. The polytope-form game $V^r = (\N, (P^r_n)_{n \in \N}, (V^r_n)_{n \in \N})$ associated to this map is a reduction of $\mathbb{G}$, which is uniquely defined from $V$ (modulo the the labeling map $h_n$ defined in the previous paragraph). We call $V^r$ the \textit{reduced polytope form of $V$}.

\begin{ex}\label{running example} In Figure \ref{first game tree} we depict an extensive-form game $G$ and show that the strategy polytope of player 2 obtained from the identification of his equivalent mixed strategies is not a simplex.

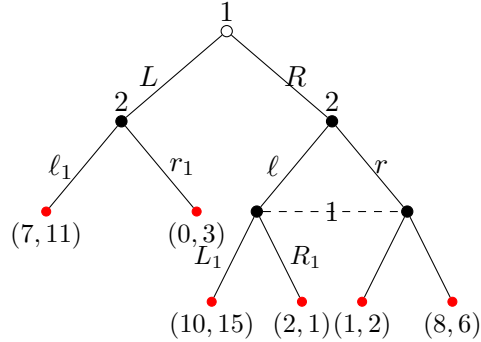
\begin{figure}[t]
\centering{}%
\caption{Extensive-form game $G$}
\label{first game tree}
\tikzstyle{hollow node}=[circle,draw,inner sep=1.5]
                         \tikzstyle{solid node}=[circle,draw,inner sep=1.5,fill=black]
                         \tikzset{
                           red node/.style={circle,draw=red,fill=red,inner sep=1.2},
                           blue node/.style={rectangle,draw=blue,inner sep=2.5}
                         }

\begin{tikzpicture}[scale = 0.8]
 
\tikzstyle{level 1}=[level distance=15mm,sibling distance=35mm]
    \tikzstyle{level 2}=[level distance=15mm,sibling distance=25mm]
 		\tikzstyle{level 3}=[level distance=15mm,sibling distance=15mm]
 
  \node(0)[hollow node]{}
     child{node (0-1)[solid node]{}
     child{node (1-1)[red node]{} edge from parent node[left]{$\ell_1$}}
     child{node (1-2)[red node]{} edge from parent node[right]{$r_1$}}
	edge from parent node[left,xshift=-2]{$L$}    
     }    
     child{node(0-2)[solid node]{}     
       child{node(2-0)[solid node]{} 
       	child{node(2-1)[red node]{} edge from parent node[left]{\begin{small}$L_1$\end{small}}}
       	child{node(2-2)[red node]{} edge from parent node[right]{\begin{small}$R_1$\end{small}}}
		edge from parent node[left,xshift=-2]{$\ell$}      
       }
       child{node(3-1)[solid node]{}
       	child{node(3-2)[red node]{}}
       	child{node(3-3)[red node]{}}
		edge from parent node[right,xshift=-2]{$r$}      	
      	}
     edge from parent node[right,xshift=-2]{$R$} 	
     };
     \draw[dashed](2-0)to(3-1);
     \node[above] at (0){$1$};
     \node[above] at (0-1){$2$}; 
	 \node at($(2-0)!.5!(3-1)$){$1$};
	 \node[above] at (0-2){$2$};
	 \node[below] at (2-1){\begin{small}$(10,15)$\end{small}};
	 \node[below] at (2-2){\begin{small}$(2,1)$\end{small}};
	 \node[below] at (3-2){\begin{small}$(1, 2)$\end{small}};
	 \node[below] at (3-3){\begin{small}$(8, 6)$\end{small}};
	 \node[below] at (1-1){\begin{small}$(7, 11)$\end{small}};
	  \node[below] at (1-2){\begin{small}$(0, 3)$\end{small}};

\end{tikzpicture}
\end{figure}

\medskip
\medskip

\begin{table}[b] 
\centering{}%
\caption{Normal-form $\mG$ game of $G$}
\setlength{\extrarowheight}{2pt}
\scalebox{0.85}{
\begin{tabular}{cc|c|c|c|c|}
  & \multicolumn{1}{c}{} & \multicolumn{4}{c}{$2$} \\
  & \multicolumn{1}{c}{} & \multicolumn{1}{c}{$\ell_1\ell$}  & \multicolumn{1}{c}{$\ell_1 r$}  & \multicolumn{1}{c}{$r_1 \ell$} & \multicolumn{1}{c}{$r_1 r$} \\ \cline{3-6}
           		 & $LL_1$ & $7, 11$ & $7, 11$ & $0, 3$  & $0, 3$ \\ \cline{3-6}
$1$  & $LR_1$ & $7, 11$ & $7, 11$ & $0, 3$ & $0, 3$ \\ \cline{3-6}
           		 & $RL_1$ & $10,15$ & $1, 2$ & $10,15$ & $1, 2$ \\ \cline{3-6}
				 & $RR_1$ & $2,1$ & $8, 6$ & $2,1$ & $8, 6$ \\ \cline{3-6}

\end{tabular}}
\end{table}

We can identify the pure strategies $LL_1$ with $LR_1$ of player $1$, since they are equivalent, which gives us the reduced normal form of Table \ref{first ex normal form}. In the reduced normal form, the strategy sets of the players are still simplices. Now, notice that the equal mixture of pure strategies $\ell_1r$ and $r_1\ell$ of player $2$ is equivalent to the equal mixture of $\ell_1 \ell$ and $r_1r$. When identified, these mixed strategies give rise to a parallelogram, as illustrated in Figure \ref{FIG:identification}. 

\begin{table}[t]
\centering{}%
\caption{Reduced normal form of $\mG$}
\label{first ex normal form}
\setlength{\extrarowheight}{2pt}
\scalebox{0.8}{
\begin{tabular}{cc|c|c|c|c|}
  & \multicolumn{1}{c}{} & \multicolumn{4}{c}{$II$} \\
  & \multicolumn{1}{c}{} & \multicolumn{1}{c}{$\ell_1\ell$}  & \multicolumn{1}{c}{$\ell_1 r$}  & \multicolumn{1}{c}{$r_1 \ell$} & \multicolumn{1}{c}{$r_1 r$} \\ \cline{3-6}
 				 & $LR_1$ &  $7, 11$ & $7, 11$ & $0, 3$ & $0, 3$ \\ \cline{3-6}
           $I$ & $RL_1$ & $10,15$ & $1, 2$ & $10, 15$ & $1, 2$ \\ \cline{3-6}
				 & $RR_1$ & $2,1$ & $8, 6$ & $2,1$ & $8, 6$ \\ \cline{3-6}

\end{tabular}}
\end{table}

\begin{figure}[t]
\centering{}%
\caption{Identification}
\label{FIG:identification}
\medskip
\includegraphics[scale=0.6]{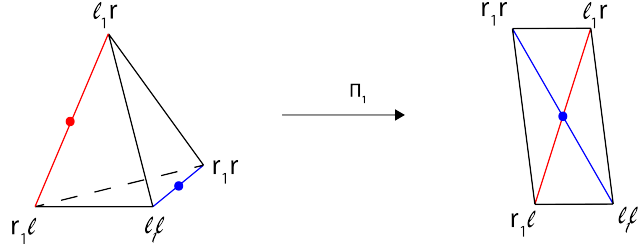}

\end{figure}

\end{ex}

\begin{proposition}\label{isoreduc}Let $V = (\N, (P_n)_{n \in \N}, (V_n)_{n \in \N})$ and $\bar V = (\N, (\bar P_n)_{n \in \N}, (\bar V_n)_{n \in \N})$ be two equivalent polytope-form games, and let $V^r = (\N, (P^r_n)_{n \in \N}, (V^r_n)_{n \in \N})$ and $\bar V^r = (\N, (\bar P^r_n)_{n \in \N}, (\bar V^r_n)_{n \in \N})$ be reduced polytope-form games of $V$ and $\bar V$ respectively. Then, for each $n \in \N$, $P^r_n$ is affinely isomorphic to $\bar P^r_n.$\end{proposition}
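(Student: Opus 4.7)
The plan is to give an intrinsic, reduction-map-free description of the reduced polytope $P_n^r$, and then show this description is preserved by reduction maps. Then, since $V$ and $\bar V$ share a common reduction $V'$ by Definition \ref{reduction}, both $P_n^r$ and $\bar P_n^r$ will be affinely isomorphic to $P'^{\,r}_n$, and hence to each other.

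\textbf{Step 1: Intrinsic characterization of $P_n^r$.} Define on $P_n$ the relation $p_n \sim_V p_n'$ iff $V_m(p_n,p_{-n}) = V_m(p_n',p_{-n})$ for every $m \in \N$ and every $p_{-n} \in P_{-n}$. I claim that the natural map $\Sigma_n \to P_n$ induced by the vertex-labeling $h_n$, composed with the projection $P_n \to P_n/\!\sim_V$, factors through the quotient $\pi_n^r:\Sigma_n \to P_n^r$ and yields an affine isomorphism $P_n^r \cong P_n/\!\sim_V$. Indeed, because $\mG_m(s) = V_m(q(s))$ by construction, and because $V_m$ is affine in each coordinate, the condition $x_n \approx y_n$ of \eqref{redeq} is equivalent to $V_m(q_n(x_n),p_{-n}) = V_m(q_n(y_n),p_{-n})$ for every $m$ and every $p_{-n} \in P_{-n}$ (the extension from vertices of $P_{-n}$ to all of $P_{-n}$ uses multiaffinity). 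Since $q_n:\Sigma_n \to P_n$ is affine and surjective and (trivially) $q_n(x_n)=q_n(y_n)$ implies $x_n \approx y_n$, the equivalence classes of $\approx$ are precisely the preimages under $q_n$ of equivalence classes of $\sim_V$. The induced affine map $P_n/\!\sim_V \to P_n^r$ is therefore a bijection between polytopes, hence an affine isomorphism.

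\textbf{Step 2: Reduction maps preserve the quotient.} Suppose $\bar V$ is a $q$-reduction of $V$, with reduction map $q_n:P_n \to \bar P_n$. I claim $q_n$ descends to an affine isomorphism $P_n/\!\sim_V \,\to\, \bar P_n/\!\sim_{\bar V}$. Affinity and surjectivity are inherited from $q_n$. For the bi-implication on equivalence classes, use that $V_m = \bar V_m \circ q$ and that $q_{-n}$ is surjective: for all $m$ and $t_{-n} \in P_{-n}$,
\begin{equation*}
V_m(p_n,t_{-n}) = \bar V_m(q_n(p_n),q_{-n}(t_{-n})),
\end{equation*}
so $p_n \sim_V p_n'$ iff $\bar V_m(q_n(p_n),\bar t_{-n}) = \bar V_m(q_n(p_n'),\bar t_{-n})$ for all $\bar t_{-n} \in \bar P_{-n}$, which is exactly $q_n(p_n) \sim_{\bar V} q_n(p_n')$. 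This gives both well-definedness on classes and injectivity, so the induced map is an affine bijection between polytopes.

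\textbf{Step 3: Conclusion via the common reduction.} By Definition \ref{reduction}, there exists $V' = (\N,(P'_n)_{n \in \N},(V'_n)_{n \in \N})$ together with reduction maps $q:P \to P'$ and $\bar q:\bar P \to P'$. Applying Step 2 to $q$ and to $\bar q$, we obtain affine isomorphisms
\begin{equation*}
P_n/\!\sim_V \;\cong\; P'_n/\!\sim_{V'} \;\cong\; \bar P_n/\!\sim_{\bar V}.
\end{equation*}
Combining with Step 1 (applied to $V$, to $\bar V$, and to $V'$) yields $P_n^r \cong P'^{\,r}_n \cong \bar P_n^r$ as affine polytopes, as required.

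The only subtle step is Step 1, where one must check that the seemingly extrinsic equivalence $\approx$ on $\Sigma_n$, defined via the auxiliary normal-form game $\mG$ and the labeling $h_n$, is in fact the pullback under $q_n$ of the intrinsic strategic-equivalence relation $\sim_V$ on $P_n$; Steps 2 and 3 are then formal consequences of multiaffinity together with surjectivity of the reduction maps.
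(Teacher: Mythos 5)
Your proof is correct and follows essentially the same route as the paper's: both arguments come down to showing that the fibers of the maximal reduction map are exactly the classes of the payoff-equivalence relation, and that reduction maps preserve this relation because they are surjective on the opponents' strategy sets, so the common reduction $V'$ transfers the isomorphism. Your packaging via the intrinsic quotient $P_n/\!\sim_V$ is a slightly more modular presentation of the same chain of equivalences the paper establishes in its central claim.
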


\begin{proof}Let $\pi^r$ be the maximal reduction map from $V$ and $\bar \pi^r$ the maximal reduction map from $\bar V$. Let $h: \S \to P$ be the reduction map from the normal-form game $\mathbb{G}$ to $V$ and, analogously, let $\bar h: \S \to \bar P$ be the reduction map from the normal-form game $\bar{\mathbb{G}}$ to $\bar{V}$, as defined in the begining of this subsection. Let $\tilde V$ be the common reduction of $V$ (with reduction map $q$) and $\bar V$ (with reduction map $\bar q$). Note that from equations \eqref{identcondition} and \eqref{redeq}, for each $n \in \N$, $\pi^r_n$ is constant in the fibers of $q_n \circ h_n$. So we can define  $\tilde \pi^r_n: \tilde P_n \to P^{r}_n$  by $\tilde \pi^r(\tilde p) \equiv \pi^r(\s)$, for all $\s \in \S$ with $(q \circ h) (\s) = \tilde p$.  The map $\tilde \pi^r_n(\cdot)$ is the unique map that satisfies $\pi^r_n = \tilde \pi^r_n \circ (q_n \circ h_n)$, and is obviously affine and surjective. Similarly, for each $n \in \N$, there exists a unique map $\tilde{\bar \pi}^r_n : \bar P_n \to \bar P^r_n$ that satisfies $\bar \pi^r_n = \tilde{\bar{\pi}}^r_n \circ (\bar q_n \circ \bar h_n)$, which implies $\tilde{\bar{\pi}}^r_n$ is affine and surjective. We claim that for each $\tilde p, \tilde p' \in \tilde P$, the following holds:  

\begin{equation} 
\tilde{\bar \pi}^r(\tilde p) = \tilde{\bar \pi}^r(\tilde p') \iff \tilde{\pi}^r(\tilde p) =  \tilde{\pi}^r(\tilde p').
\end{equation}  

We assume the claim for now just to conclude the proof, and provide a proof of the claim right after. From this claim, we have that for each $n \in \N$, there exists a unique (affine and surjective) map $\bar g^r_n: \bar P^{r}_n \to  P^r_n$ such that the following equation is satisfied: $\tilde \pi^r_n = \bar g^r_n \circ \tilde{\bar{\pi}}^r_n$. Similarly, there exists an affine and surjective map $g^r_n: P^r_n \to \bar P^r_n$ such that $\tilde{\bar{\pi}}^r_n = g^r_n \circ \tilde{\pi}^r_n$. Note that $g^r_n \circ (\bar g^r_n \circ \tilde{\bar{\pi}}^r_n) = g^r_n \circ \tilde{\pi}^r_n = \tilde{\bar{\pi}}^r_n$. Therefore, $g^r_n \circ \bar g^r_n = \text{id}_{\bar P^r_n}$. By the same reasoning, $\bar g^r_n \circ  g^r_n = \text{id}_{P^r_n}$. Therefore, $P^r_n$ and $\bar P^r_n$ are affinely isomorphic. 


We now prove the claim. Let $\tilde p$ and $\tilde p'$ be elements of $\tilde P$ and fix $\s$ and $\s'$ in $\S$ such that $(q \circ h)(\s) = \tilde p$ and $(q \circ h)(\s') = \tilde p'$. Note that $\tilde{\pi}^r(\tilde p) =  \tilde{\pi}^r(\tilde p')$ if and only if $\tilde{\pi}^r((q \circ h)(\s)) =  \tilde{\pi}^r((q \circ h)(\s'))$ if and only if $\pi^r(\s) =  \pi^r(\s')$ if and only if for each $n, m \in \N, s_{-n} \in S_{-n}, \mG_m(\s_n, s_{-n}) = \mG_m(\s'_n, s_{-n})$ if and only if for each $n, m \in \N, s_{-n} \in S_{-n}, \tilde V_m((q_n \circ h_n)(\s_n), (q_{-n} \circ h_{-n})(s_{-n})) = \tilde V_m((q_n \circ h_n)(\s'_n), (q_{-n} \circ h_{-n})(s_{-n}))$ if and only if for each $n, m \in \N, \tilde{t}_{-n} \in \tilde P_{-n}, \tilde V_m(\tilde p_n, \tilde{t}_{-n}) = \tilde V_m(\tilde p'_n, \tilde{t}_{-n})$ if and only if $\tilde{\bar{\pi}}^r(\tilde p) =  \tilde{\bar{\pi}}^r(\tilde p')$, which concludes the proof.\end{proof}
 
Proposition \ref{isoreduc} tells us that equivalent polytope-form games have ``isomorphic'' reduced polytope forms, in the sense that the polytope-form strategy sets of the players are affinely isomorphic. Therefore, reduced polytope forms of two equivalent polytope-form games are unique (up to isomorphism).

\subsection{Polytope-form Degree Theory}\label{polytope-form degree} In order to define a degree theory for polytope-form games, we follow the same path adopted for normal form: we first establish a structure theorem for polytope-form games. We start with some preliminary definitions. 
For each $n \in \N$, fix $P_n \subset \Re^{d_n}$ a polytope and let $P \equiv \times_{n}P_n$. Let $(\Delta_n)$ be the affine space generated by the unit simplex $\Delta_n$ in $\mathbb{R}^{d_n}$. The polytope $P_n$ is called \textit{standard} if: 
\begin{enumerate}

\item $P_n \subset (\Delta_n)$;

\item $P_n$ has dimension $d_n-1$.

\end{enumerate}

Given $V = (\N, (P_n)_{n \in \N}, (V_n)_{n \in \N})$ a polytope-form game with dim$(P_n)= d_n - 1$, there exists an affine an bijective mapping $e_n: P_n \to P^s_n \subset \Re^{d_n}$, where $P^s_n$ is a standard polytope. Let $e \equiv \times_n e_n$ and $V^{s}_n \equiv V_n \circ e^{-1}|_{P^s} $, where $P^s \equiv \times_n P^s_n$. The polytope-form game $V^{s} = (\N, (P^s_n)_{n \in \N} ,(V^{s}_n)_{n \in \N})$ is called a \textit{standard polytope form of $V$}. A standard polytope form of a game $V$ is unique up to the isomorphism $e$ which is used to ``standardize'' $V$. Because each $V^s_n$ is defined over $P^s$, each $V^s_n$ is uniquely defined over $\times_n (\Delta_n) \subset \times_n \mathbb{R}^{d_n}$. This implies it has an unique extension to a multilinear functional over $\times_n \mathbb{R}^{d_n}$. Then $V^s_n$ can be represented as vector $(V^s_n(x_1,...,x_N))_{x_i \in \mathbb{R}^{d_i}, i =1,..,N} \in \mathbb{R}^{D}$, where $D = d_1  ...  d_N$ and  $x_i \in \mathbb{R}^{d_i}$ denotes a canonical vector of $\mathbb{R}^{d_i}$. Let $\mathcal{E}^{s} \equiv \{ (V^s, \s) \in \Re^{DN} \times P^s \mid  \text{$\s$ is an equilibrium  $V^s$}\}$ be the graph of equilibria for standard polytope-form games over $P^s$. Let $\text{proj}_{\mathbb{R}^{DN}}: \mathcal{E}^s \rightarrow \mathbb{R}^{DN}$ be defined by $\text{proj}_{\mathbb{R}^{DN}}(V^s, \s) = V^s$. We will denote by $V^{s,n}(\sigma_{-n})$ the vector ($V^s_n(x_n,\sigma_{-n}))_{x_n \in \mathbb{R}^{d_n}}$, where $$V^s_n(x_n,\sigma_{-n}) \equiv \sum_{x_m \in \mathbb{R}^{d_m} : m \neq n}V^s_{n}(x_n, x_{-n})\Pi_{m \neq n}\sigma_{m x_m},$$ where $\sigma_m \equiv (\sigma_{m x_m})_{x_m \in \mathbb{R}^{d_m}} \in \mathbb{R}^{d_m}$.

Let $\sigma^{*}_n$ be the uniform distribution over the vertices of the polytope $P^{s}_n$ and $V^s \in \mathbb{R}^{DN}$. Defining $g_{nx_n}\equiv V^s_n(x_n, \sigma^{*}_{-n})$, we have that $V^s_n(x_1,...,x_N)=\bar{V}^s_n(x_1,..,x_N) + g_{nx_n}$, where $\bar{V}^s_n(x_1,...,x_N) \equiv V^s_n(x_1,...,x_N)-g_{nx_n}$. Letting $g_n \equiv (g_{nx_n})_{x_n \in \Re^{d_n}}$, the decomposition of $V^s_n$ in $\bar{V}^s_n$ + $g_n$ is unique and we denote it by $\bar{V}^s_n \oplus g_n$. Let $\sigma \in P^{s}$ be an equilibrium of the polytope-form game $V$. Define $\theta^s: \mathcal{E}^s \rightarrow \mathbb{R}^{DN}$ by $\theta^s_{nx_n}(\bar V^s,g,\s) \equiv \bar{V}^s_n(x_n,\sigma_{-n}) + g_{nx_n} + \sigma_{nx_n}$. Lemma \ref{ST} is the structure theorem for the graph of equilibria for standard polytope-form games over $P^s$. Its proof is entirely similar to the proof of the $KM$-structure theorem in \cite{KM1986}. We include it in the subsection \ref{ProofST} in the Appendix for completeness.

\begin{lemma}\label{ST}
The map $\theta^s: \mathcal{E}^s \rightarrow \mathbb{R}^{DN}$ is a homeomorphism. Moreover, there is a homotopy between $\text{proj}_{\mathbb{R}^{DN}} \circ (\theta^s)^{-1}$ and the identity function on $\mathbb{R}^{DN}$ and this homotopy extends to a homotopy on the one-point compactification of $\mathbb{R}^{DN}$.\end{lemma}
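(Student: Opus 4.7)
The plan is to mirror the proof of the Kohlberg-Mertens normal-form structure theorem, replacing the nearest-point projection onto simplices by the nearest-point projection onto the polytopes $P^s_n$.

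The map $\theta^s$ is polynomial in its arguments, hence continuous, so it is enough to exhibit a continuous inverse. For each $n \in \N$, let $r_n : \mathbb{R}^{d_n} \to P^s_n$ denote the nearest-point projection. Given $(\bar V^s, z) \in \mathbb{R}^{DN}$ written in the block coordinates used to describe the image of $\theta^s$, I would set $\sigma_n \equiv r_n(z_n)$, $g_{nx_n} \equiv z_{nx_n} - \sigma_{nx_n} - \bar V^s_n(x_n, \sigma_{-n})$, and $V^s \equiv \bar V^s + g$. The nontrivial point is to verify that $\sigma$ is an equilibrium of $V^s$. I would use the standard variational fact that, for a closed convex set $C$ and a linear functional $f$, a point $\sigma \in C$ maximizes $f$ on $C$ if and only if $\sigma$ equals the nearest-point projection of $\sigma + f$ onto $C$. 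Applied coordinate-wise with $C = P^s_n$ and $f = V^s_n(\sigma_{-n})$, the equilibrium condition reduces to $\sigma_n = r_n(\sigma_n + V^s_n(\sigma_{-n})) = r_n(z_n)$, which holds by construction. Continuity of the inverse follows from continuity of each $r_n$, and the two composition identities are routine verifications.

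For the homotopy, I would take the linear interpolation $H((\bar V^s, z), t) \equiv (\bar V^s, \, (1-t) g + t z)$, with $g$ recovered from $(\bar V^s, z)$ via the inverse just constructed. At $t = 0$ this map is $\text{proj}_{\mathbb{R}^{DN}} \circ (\theta^s)^{-1}$, and at $t = 1$ it is the identity on $\mathbb{R}^{DN}$. To extend $H$ continuously to the one-point compactification of $\mathbb{R}^{DN}$, I would establish properness: substituting $g_{nx_n} = z_{nx_n} - \sigma_{nx_n} - \bar V^s_n(x_n, \sigma_{-n})$, the $z$-component of $H$ equals $z - (1-t)(\sigma + \bar V^s(\sigma_{-n}))$, so $\Vert H((\bar V^s, z), t) - (\bar V^s, z) \Vert \le C(1 + \Vert \bar V^s \Vert)$ for a constant $C$ depending only on the bounded polytopes $P^s_n$, uniformly in $t \in [0,1]$. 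Hence if $\Vert (\bar V^s, z) \Vert \to \infty$, either $\Vert \bar V^s \Vert \to \infty$ (so the first block of $H$ diverges) or $\Vert \bar V^s \Vert$ stays bounded while $\Vert z \Vert \to \infty$ (so the second block diverges, since the correction is bounded). In either case $\Vert H \Vert \to \infty$ uniformly in $t$, giving a continuous extension to the one-point compactification sending $\infty$ to itself at every time.

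The step I expect to require the most care is the polytope-projection characterization of equilibria. In the simplex case it is routinely used (this is essentially the Gul-Pearce-Stacchetti fixed-point map); for a general polytope $P^s_n$ one needs the first-order optimality condition $\langle f, p - \sigma_n \rangle \le 0$ for all $p \in P^s_n$ and must identify this with the projection equation via the classical variational characterization of metric projections onto closed convex sets. This is standard convex analysis but is the one place where the polytope-form version genuinely goes beyond the simplex case.
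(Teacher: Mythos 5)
Your proposal is correct and follows essentially the same route as the paper: the same explicit inverse $(\bar V^s, z) \mapsto (\bar V^s, g, r(z))$ with $g_n = z_n - r_n(z_n) - \bar V^{s,n}(r_{-n}(z_{-n}))$, the same variational characterization of the nearest-point projection onto $P^s_n$ to identify fixed points of the projection equation with equilibria, and the same linear homotopy extended to the one-point compactification via a uniform bound of the form $\Vert z_n - g_n\Vert \le C(1+\Vert \bar V^s\Vert)$ exploiting boundedness of the polytopes. No gaps.
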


From Lemma \ref{ST} we can then obtain a structure theorem for the equilibrium graph over products of polytopes $P \equiv \times_n P_n$ which are possibly not standard. The space $A(P)$ of \textit{multiaffine functions over $P$} is an $D$-dimensional linear space, where the linear space structure is given by pointwise addition and scalar multiplication. Let $\mathcal{E}^{PF} \equiv \{ (V,p) \in \times_{n}A(P) \times P \mid  \text{$p$ is an equilibrium of $V$}\}$. The linear space $\times_n A(P)$ is a $DN$-dimensional Euclidean space and we denote its one-point compactification by $\overline{\times_{n} A(P)}$. Recall that the one-point compactification $\overline{\times_{n} A(P)}$ is homeomorphic to the sphere $\mathbb{S}^{DN}$. Let proj$_{V}: \E^{PF} \to \times_{n} A(P)$ be the natural projection over the payoff coordinate.

For the next proposition fix, for each $n \in \N$, $e_n: P_n \to P^s_n$ an affine isomorphism between a polytope $P_n$ and a standard polytope $P^s_n$. Let $e \equiv \times_n e_n$. Let $\bar{e}: A(P) \rightarrow \mathbb{R}^{D}$ be defined as $\bar{e}(V_n) = V^s_n$, where $V^s_n \equiv V_n \circ e^{-1}|_{P^s}$. It is easy to check that $\bar{e}$ is a linear isomorphism. Therefore, for $V \in \mathbb{R}^{DN}$ the mapping $T \equiv \times_{n}\bar{e}: \times_n A(P) \rightarrow \mathbb{R}^{DN}$ is also a linear isomorphism. 

\begin{proposition}\label{NSST}
There exists a homeomorphism $\theta^{PF}: \mathcal{E}^{PF} \rightarrow \times_n A(P)$. Moreover, there exists a homotopy between $\text{proj}_V \circ (\theta^{PF})^{-1}$ and the identity function on $\times_n A(P)$ which extends to a homotopy on the one-point compactification of $\times_n A(P)$.\footnote{\cite{AP2009} has investigated structure theorems for more general Nash-graphs than the ones explored in this paper, where the strategy sets of the players are compact convex sets and the the payoff functions are not necessarily multiaffine in the product of these sets.}
\end{proposition}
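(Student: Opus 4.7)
The plan is to transfer the conclusion of Lemma \ref{ST} from the standard setting to the general one by conjugating with the affine isomorphism $e$ (on strategies) and the induced linear isomorphism $T$ (on payoffs). The key observation is that affine isomorphisms of strategy polytopes preserve the Nash best-reply property, so the equilibrium graphs $\mathcal{E}^{PF}$ and $\mathcal{E}^s$ are naturally homeomorphic via these isomorphisms, and the homotopy-theoretic content of Lemma \ref{ST} is invariant under this conjugation.

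First, I would define the map $\Phi : \mathcal{E}^{PF} \to \mathcal{E}^{s}$ by $\Phi(V, p) = (T(V), e(p))$. Since $e_n$ is an affine bijection and $V_n(p_n, p_{-n}) = V_n \circ e^{-1}|_{P^s}(e_n(p_n), e_{-n}(p_{-n})) = (T(V))_n(e_n(p_n), e_{-n}(p_{-n}))$, a point $p \in P$ is an equilibrium of $V$ if and only if $e(p)$ is an equilibrium of $T(V)$; thus $\Phi$ is well defined, and with inverse $(V^s, \sigma) \mapsto (T^{-1}(V^s), e^{-1}(\sigma))$ it is a homeomorphism between the two graphs. Next, I would define
\begin{equation}
\theta^{PF} \;\equiv\; T^{-1} \circ \theta^{s} \circ \Phi \;:\; \mathcal{E}^{PF} \to \times_{n} A(P).
\end{equation}
Being a composition of three homeomorphisms (using Lemma \ref{ST} for the middle factor and the linear isomorphism $T$ for the outer one), $\theta^{PF}$ is a homeomorphism.

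For the homotopy statement, observe that $\mathrm{proj}_{V} \circ \Phi^{-1}(V^{s}, \sigma) = T^{-1}(V^{s}) = T^{-1} \circ \mathrm{proj}_{\mathbb{R}^{DN}}(V^{s}, \sigma)$, so
\begin{equation}
\mathrm{proj}_{V} \circ (\theta^{PF})^{-1} \;=\; T^{-1} \circ \mathrm{proj}_{\mathbb{R}^{DN}} \circ (\theta^{s})^{-1} \circ T.
\end{equation}
By Lemma \ref{ST} there is a homotopy $H : \mathbb{R}^{DN} \times [0,1] \to \mathbb{R}^{DN}$ between $\mathrm{proj}_{\mathbb{R}^{DN}} \circ (\theta^{s})^{-1}$ and $\mathrm{id}_{\mathbb{R}^{DN}}$. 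Conjugating, the map $H^{PF}(\cdot, t) \equiv T^{-1} \circ H(T(\cdot), t)$ is a homotopy on $\times_n A(P)$ from $\mathrm{proj}_{V} \circ (\theta^{PF})^{-1}$ to $T^{-1} \circ \mathrm{id}_{\mathbb{R}^{DN}} \circ T = \mathrm{id}_{\times_{n} A(P)}$.

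Finally, I would extend $H^{PF}$ to the one-point compactification of $\times_n A(P)$. Because $T$ is a linear isomorphism of finite-dimensional Euclidean spaces, it is proper, so it extends to a homeomorphism $\widehat T$ between one-point compactifications sending $\infty$ to $\infty$; the same holds for $T^{-1}$. By Lemma \ref{ST}, $H$ itself extends to a homotopy $\widehat H$ on the one-point compactification of $\mathbb{R}^{DN}$. Then $\widehat{H}^{PF}(\cdot, t) \equiv \widehat{T^{-1}} \circ \widehat H(\widehat T(\cdot), t)$ is the desired extension, fixing the added point at infinity throughout the homotopy. The only point that requires any care is the verification that the extension is continuous at $\infty$, which follows from the properness of $T$ and $T^{-1}$ together with the continuity of $\widehat H$ at $\infty$ given by Lemma \ref{ST}; this is the single technical step in the argument, and no substantial new ideas are needed beyond Lemma \ref{ST}.
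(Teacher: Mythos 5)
Your proposal is correct and follows essentially the same route as the paper: both define the conjugating homeomorphism $(V,p)\mapsto (T(V),e(p))$ between $\mathcal{E}^{PF}$ and $\mathcal{E}^{s}$, set $\theta^{PF}=T^{-1}\circ\theta^{s}\circ e^{PF}$, and transport the homotopy of Lemma \ref{ST} by conjugation with $T$, using properness of $T$ to extend to the one-point compactification. Your write-up is slightly more explicit about why equilibria are preserved under $e$ and about continuity at infinity, but there is no substantive difference.
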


\begin{proof}Define the following mapping $e^{PF}: \mathcal{E}^{PF} \to \E^{s}$ by $e^{PF}(V,p) = (T(V), e(p))$. The map $e^{PF}$ is a homeomorphism. This implies that there exists a homeomorphism between $\mathcal{E}^{PF}$ and $\times_n A(P)$ given by $\theta^{PF} \equiv  T^{-1} \circ \theta^s \circ e^{PF}: \mathcal{E}^{PF} \rightarrow \times_n A(P)$. Let now $H: [0,1] \times \times_n A(P)\rightarrow \times_n A(P)$ be defined by $H(t,V) = T^{-1} \circ H(t, T(V))$, where $H$ is the homotopy from Lemma \ref{ST}. Since $T$ is a proper mapping, it follows that the homotopy $H$ also extends to a homotopy in the one-point compactification of $\times_n A(P)$ continuously. Notice now that $\text{proj}_V \circ (e^{PF})^{-1} = T^{-1} \circ \text{proj}_{\Re^{ND}}$ and from the definition of $\theta^{PF}$ we have that $(\theta^{PF})^{-1} = (e^{PF})^{-1} \circ (\theta^s)^{-1} \circ T$. Combining these two facts, it implies $\text{proj}_V \circ (\theta^{PF})^{-1} = \text{proj}_{V} \circ ((e^{PF})^{-1} \circ (\theta^s)^{-1} \circ T) = T^{-1} \circ \text{proj}_{\Re^{ND}} \circ (\theta^s)^{-1} \circ T$. This in turn implies that $H(0, \cdot) = \text{proj}_{V} \circ (\theta^{PF})^{-1}$ and  $H(1, \cdot) = \text{id}_{\times_nA(P)}$. \end{proof}

\begin{remark}The definition of the standard polytope form of a polytope-form game is not canonical, because it is based on the arbitrary choice of $e_n$, for each $n \in \N$. There are many different affine isomorphisms between $P_n$ and $P^s_n$ that could be used for this purpose. The choice of one of them defines a homeomorphism $\theta^{PF}$ and therefore generates a structure theorem which depends on those isomorphisms. As explained in section \ref{introdeg}, a structure theorem defines a degree theory for polytope-form games. Because this structure theorem depends on the arbitrary choice of $e$ so does the degree theory associated to it and therefore the immediate question is whether the degree of a component of equilbria $Q$ of a polytope-form game $V$ originated from a particular homeomorphism $\theta^{PF}$ - denoted $\text{deg}^{PF}_V(Q)$ - depends on the choice of the isomorphism $e_n$ for each $n$. Intuitively, this should not happen because $e$ simply defines a new ``representation'' of the polytope form game $V$ (a standard polytope form $V^s$), i.e., it does not create any new strategic possibilities for the players. Proposition \ref{identicaltheories}, which we will later present, implies that the polytope-form degree does not depend on the arbitrary choice of $e$. It actually shows more: the polytope-form degree is invariant to reductions.\end{remark}

\subsection{Polytope-form Index Theory}\label{polytope-form index}Defining an index theory of equilibria in polytope-form games is an immediate extension of the same exercise in normal-form theory. For specifics, see subsection \ref{indextheorygeneral} in the Appendix. We recall the main points of the construction for completeness. 

Let $V = (\N, (P_n)_{n \in \N}, (V_n)_{n \in \N})$ be a polytope-form game and let $P \equiv \times_n P_n$. The procedure to define the index is exactly analogous to the procedure we introduced for the normal form (cf. subsection \ref{indextheory}). For simplicity, suppose $f: U \to P$ is a differentiable map defined on a an open neighborhood $U$ of $P$ and such that the fixed points of $f$ are the Nash equilibria of game $V$. Let $d_{f}$ be the displacement of $f$. Then the Nash equilibria of $V$ are the zeros of $d_{f}$.  Suppose now that the Jacobian of $d_f$ at a Nash equilibrium $p$ of $V$ is nonsingular. Then we can define the {\it index} of $p$ under $f$ as $\pm 1$ depending on whether the determinant of the Jacobian of $d_{f}$ at $p$ is positive or negative. We can obtain a definition of the index of a component of equilibria by considering a perturbation of its displacement that is differentiable and has finitely many zeros: the indices of the isolated zeros of any sufficiently small perturbation of the initial displacement sum to the same constant, which can be defined as the index of the component.

Similarly to what we presented in subsection \ref{indextheory} for normal form, we could consider Nash-maps for polytope-form games. Fixing the cartesian product of polytopes $P \equiv \times_n P_n$  and the space $A(P)$, $f: A(P) \times P \to P$ is a Nash-map if it is continuous and the fixed points of $f(V, \cdot)$ are the equilibria of $V$. As in the normal-form case, one potential problem with the definition of index is the dependence of the computation on the function $f$. We show that the index is independent of $f$. A proof of the result for polytope-form games can be found in subsection \ref{indextheorygeneral} in the Appendix  (Proposition \ref{inv of nashmap index}).  In this Appendix, we also present a contruction of index theory for polytope-form games using general topological tools. This is done not only for completeness of exposition but because some of the details in the proofs in the polytope-form environment change when compared to normal form, and this requires verification. Similarly, the formal definition of index will be important in the proofs of Proposition \ref{identicaltheories} and Theorem \ref{PFNFequivalence}. Notationwise, if $Q$ is a component of equilibria of $V$ we denote the index of $Q$ w.r.t. $V$ by $\text{ind}_V(Q)$.

A Nash map which is particularly important for the remainder of the paper is the G\"ul, Pearce and Stachetti map ($GPS$ map, for short. Cf. \cite{GPS1993}). This Nash map can be defined as follows. Let $V^s = (\N, (P^s_n)_{n \in \N}, (V^s_n)_{n \in \N})$ be a standard polytope-form game. Let $\text{w}^n_{V^s}: P^s \rightarrow \mathbb{R}^{d_n}$ be given by $\text{w}^n_{V^s}(\sigma) = \s_n  + V^{s,n}(\sigma_{-n}) \in \mathbb{R}^{d_n}$. For each $n \in \N$, let $r_n:\mathbb{R}^{d_n} \to P^s_n$ be the closest-point retraction, $r \equiv \times_n r_n$ and $\text{w}_{V^s} \equiv \times_n \text{w}^n_{V^s}$. Let $\Phi_{V^s} \equiv r \circ \text{w}_{V^s}: P^s \rightarrow \times_{j}\mathbb{R}^{d_j} \rightarrow P^s$. We claim that $\sigma$ is an equilibrium of $V^s$ if and only if it is a fixed point of $\Phi_{V^s}$. The variational inequality \ref{closestpoint} characterizes the nearest-point retractions $r_n(z_n)$ of $z_n \in \Re^{d_n}$:

\begin{equation}\label{closestpoint}
 \langle\tau_n - r_n(z_n), z_{n} - r_{n}(z_n) \rangle \leq 0, \forall \tau_n \in P^{s}_n.
 \end{equation} 

If $\s \in P^s$ is a fixed point of $\Phi_{V^s}$, then $\s$ satisfies $\forall n \in \N$, $\langle \s'_n - \s_n, \s_n + V^{s,n}(\s_{-n}) - \s_n \rangle \leq 0, \forall \s'_n \in P^s_n$, which implies $\langle \s'_n - \s_n, V^{s,n}(\s_{-n}) \rangle \leq 0$. Note that $\s_n \cdot V^{s,n}(\s_{-n})$ is precisely the payoff to player $n$. Therefore, $\langle \s'_n - \s_n, V^{s,n}(\s_{-n}) \rangle \leq 0, \forall \s'_n \in P^s_n$ shows that $\s_n$ is indeed a best reply and that $\s$ is an equilibrium. Conversely, $\langle \s'_n - \s_n, V^{s,n}(\s_{-n}) \rangle \leq 0,\forall \s'_n \in P^s_n$ implies that $\s_n$ is the nearest-point retraction of $\s_n + V^{s,n}(\s_{-n})$ for each player $n$. Therefore, $\s$ is a fixed point of $\Phi_{V^s}$. The map $\Phi : \Re^{ND} \times P^s \to P^s$, given by $\s \mapsto \Phi_{V^s}(\s)$, is trivially continuous. This shows that $\Phi$ is indeed a Nash map. 

We will also make use of another map associated to $\Phi_{V^s}$. Let $\Psi_{V^s} \equiv \text{w}_{V^s} \circ r$. This is the \textit{commuted GPS map} of $V^s$. The commutativity property of the index (see \cite{D1972}, Chapter VII, Theorem 5.14) gives us that the sets of fixed points of $\Phi_{V^s}$ and of the permuted map $\Psi_{V^s}$ are homeomorphic and their indices agree.


\subsection{Equivalence of Index and Degree in Polytope-form Games}\label{equivdegindtheories}In this section we show that the degree and the index of an equilibrium component are identical and invariant under reductions. 

Let $V = (\N, (P_n)_{n \in \N}, (V_n)_{n \in \N})$ and $\bar V = (\N, (\bar P_n)_{n \in \N}, (\bar V_n)_{n \in \N})$ be two equivalent polytope-form games, and let the polytope-form game $V' = (\N, (P'_n)_{n \in \N}, (V'_n)_{n \in \N})$ be a common reduction. Let $q^{V}: P \to P'$ and $q^{\bar V}: \bar P \to  P'$ be the two reduction maps of $V$ and $\bar V$, respectively. Suppose $\bar Q$ is a component of equilibria of $\bar V$ and $Q$ is a component of equilibria of $V$, with $Q' \equiv q^{V}(Q) = q^{\bar V}(\bar Q)$ an equilibrium component of $V'$.

\begin{proposition}\label{identicaltheories}
The following statements hold: 
\begin{enumerate}
\item $\text{deg}^{PF}_{V'}(Q') = \text{deg}^{PF}_V(Q) = \text{deg}^{PF}_{\bar V}(\bar Q)$;
\medskip
\item $\text{ind}_{V'}(Q') = \text{ind}_{V}(Q) = \text{ind}_{\bar V}(\bar Q)$;
\medskip
\item $\text{deg}^{PF}_{V'}(Q') =\text{ind}_{V'}(Q')$.
\end{enumerate}
\end{proposition}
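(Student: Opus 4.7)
The plan is to prove the three equalities by first establishing (2) and (3), and then deriving (1) as their formal consequence. Since $\sim$ is an equivalence relation and the statement is symmetric in $V, \bar V$, and $V'$, it suffices to prove each assertion in the case where one game is a reduction of the other; applying this once to $V\to V'$ and once to $\bar V\to V'$ yields the full statement. So assume throughout that $V'$ is a reduction of $V$ via an affine surjection $q: P\to P'$ satisfying the identification condition \eqref{identcondition}, and let $j$ be a section of $q$.

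For part (2), my approach is to compute the index using the commuted GPS map $\Psi_{V^s}$ in a standard polytope form: by Proposition \ref{inv of nashmap index} the index is independent of the Nash map used, and by the commutativity property of the index (\cite{D1972}, Ch.~VII, Thm.~5.14) the index of $\Phi_{V^s}$ equals that of $\Psi_{V^s}$. Condition \eqref{identcondition} gives $V^s = V'^s\circ q$, so the payoff vector $V^{s,n}(\sigma_{-n})$ depends only on $q_{-n}(\sigma_{-n})$ and lies in a subspace dual to $\ker q_n$. I would choose a linear change of coordinates in each $\mathbb R^{d_n}$ that aligns $\ker q_n$ with a coordinate subspace; in these coordinates $\Psi_{V^s}$ decomposes into a ``base'' component which, under $q$, coincides with $\Psi_{V'^s}$, and a ``fiber'' component of the form $z_n\mapsto r_n(z_n)$ restricted to $\ker q_n\cap P_n$. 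The fiber component is a nearest-point retraction onto a convex polytope, carrying trivial index $+1$ on each component. Applying the product property of the fixed-point index to this splitting then gives $\mathrm{ind}_V(Q) = \mathrm{ind}_{V'}(Q')$.

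For part (3), I would adapt to the polytope-form setting the argument of \cite{GW1997a} that establishes degree-index equivalence in normal form. The key observation is that the ``$z$-block'' of the homeomorphism $\theta^s$ from Lemma \ref{ST} is precisely $\sigma + V'^s(\sigma_{-})$, which at a fixed point of $\Psi_{V'^s}$ equals $z$ itself. Computing the Jacobian of $f^s = \mathrm{proj}_{\mathbb R^{DN}}\circ(\theta^s)^{-1}$ at a regular preimage of $V'^s$, one obtains a block-triangular matrix: the block corresponding to the $\bar V^s$-coordinates is the identity, while the block corresponding to the $z$-coordinates is the negative of the Jacobian of the displacement $d_{\Psi_{V'^s}}(z) = z - r(z) - V'^s(r(z)_{-})$. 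The determinantal signs therefore agree up to a global orientation convention fixed once and for all. Summing over the finitely many preimages in a generic neighborhood and invoking the polytope-form analog of Proposition \ref{degnormalform} (which applies verbatim by Proposition \ref{NSST}) yields $\mathrm{deg}^{PF}_{V'}(Q') = \mathrm{ind}_{V'}(Q')$.

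Part (1) is then immediate by chaining (3), (2) and (3): $\mathrm{deg}^{PF}_V(Q) = \mathrm{ind}_V(Q) = \mathrm{ind}_{V'}(Q') = \mathrm{deg}^{PF}_{V'}(Q')$, and an identical chain with $\bar V$ replacing $V$. The main obstacle I anticipate is part (2): the algebraic fiber/base splitting provided by \eqref{identcondition} must be made compatible with the piecewise-linear nearest-point retraction $r$ appearing in $\Psi_{V^s}$, and with the possibility that $Q$ meets the boundary of $P$. I would handle this by localizing the index computation to a neighborhood of $Q$ in the relative interior of the smallest face of $P$ containing $Q$, after which the ``fiber'' piece becomes a retraction onto a contractible convex set and the product formula for the index applies.
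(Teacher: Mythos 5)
Your overall architecture --- prove (2) and (3) and chain them to obtain (1) --- matches the paper's, and your part (3) is essentially the paper's Claim \ref{eq index and deg} recast in Jacobian language: the identity $\text{proj}_{g}\circ\Theta^{-1}=d_{\Psi_{V^s}}$ for the commuted GPS map is indeed the crux (note it gives the displacement itself, not its negative). You should, however, also record the analogue of Claim \ref{standarddeg}: since $\theta^{PF}$ is built from an arbitrary standardization $e$, the equality $\text{deg}^{PF}_{V}(Q)=\text{deg}^{PF}_{T(V)}(e(Q))$ is not automatic and requires the diagram chase the paper carries out.

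The genuine gap is in part (2). The fiber/base splitting of $\Psi_{V^s}$ that you propose does not exist in general. Condition \eqref{identcondition} does give a linear splitting of the ambient space $\Re^{d_n}$ into a ``base'' complementary to $\ker q_n$ and a ``fiber'' $\ker q_n$, but the polytope $P_n$ is not a product of a base polytope with a fiber polytope in these (or any) coordinates: the fibers $q_n^{-1}(\bar p_n)\cap P_n$ vary in shape and dimension as $\bar p_n$ ranges over $P'_n$ (in Example \ref{running example} the fibers of $\Sigma_2\to C_2$ over the vertices of the parallelogram are points while the fiber over the center is a segment). Consequently the nearest-point retraction $r_n$ onto $P_n$ does not factor as a product of retractions, and $q_n\circ r_n$ is not the nearest-point retraction onto $P'_n$ composed with $q_n$ --- nearest-point projections do not commute with affine surjections. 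So neither the claimed ``base component coincides with $\Psi_{V'^s}$'' nor the product formula for the fixed-point index is available, and localizing to the smallest face containing $Q$ does not help, since the index must account for directions transverse to that face and $Q$ may meet several faces. The paper avoids all of this by working with the \emph{best-reply index}: it takes a continuous selection $h'$ with graph close to $\text{Graph}(BR^{V'})$, forms $h=j\circ h'\circ q$ (whose graph is close to $\text{Graph}(BR^{V})$ because best replies correspond under $q$ and $j$), and then the commutativity property of the fixed-point index (\cite{D1972}, Ch.~VII) applied to the factorization $h=j\circ(h'\circ q)$, $h'=(h'\circ q)\circ j$ yields $\text{ind}_{BR^{V}}(Q)=\text{ind}_{BR^{V'}}(q(Q))$ with no product structure needed (Claim \ref{BRindex}). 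This in turn requires the separate bridging step, your proposal's missing ingredient, that the GPS/Nash-map index agrees with the best-reply index (Claim \ref{eqindex}, proved by scaling the payoffs by $\lambda\to\infty$ so that the graph of the scaled GPS map enters an adequate neighborhood of $\text{Graph}(BR^{V^s})$).
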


Here are the main ideas of the proof of Proposition \ref{identicaltheories}. The proof is divided in a few steps, which together prove the three items of the proposition at once. To start, we prove that $\text{deg}^{PF}_V(Q) = \text{deg}^{PF}_{T(V)}(e(Q))$, which shows that the degree is invariant to (any) standartization. The proof itself is a technical exercise using the formal definition of the degree in terms of homology. It is expected that such a result would hold because $e^{PF}$ (cf. proof of Proposition \ref{NSST}) should be understood as just a reparametrization of the graph $\E^{PF}$ and therefore should not affect the degree. In normal form, invariance to standartization is immediate (since the unit simplex is by definition a standard polytope), so the result has a point only in polytope form. The second step of the proof is to show that $\text{deg}^{PF}_{T(V)}(e(Q)) = \text{ind}_{T(V)}(e(Q))$. This step follows essentially the same strategy used in normal-form games to prove the identity between the $KM$-degree and the index of a component assigned by the $GPS$ map,  and links the index assigned by this map to $e(Q)$ to the degree $\text{deg}^{PF}_{T(V)}(e(Q))$. Using the commuted $GPS$ map and the commutativity property of the index, one then establishes that $\text{deg}^{PF}_{V}(Q) = \text{ind}_{T(V)}(e(Q))$. We then use Proposition \ref{inv of nashmap index}, to ascertain that the index of $e(Q)$ with respect to $T(V)$ is invariant under any Nash map.  With these arguments, we establish $(3)$. Finally, once the link between the degree and the index is established, we prove that the index of $Q$ with respect to $V$ is invariant under reductions: this immediately establishes $(2)$, and given that we proved $(3)$, we finally have $(1)$. 

Theorem \ref{PFNFequivalence} below establishes that the $KM$ and $PF$ index and degree theories are identical and therefore capture the same robustness to perturbations of payoffs. Moreover, it establishes that robustness ultimately depends on the reduced polytope form. Let $\mathbb{G} = (\N, (S_n)_{n \in \N}, (\mathbb{G}_n)_{n \in \N})$ be a normal-form game. Let $V = (\N, (P_n)_{n \in \N}, (V_n)_{n \in \N})$ be the reduced polytope form of $\mathbb{G}$ and $\pi^{\mathbb{G}}: \S \to P$ the maximal reduction map of $\mathbb{G}$.

\begin{theorem}\label{PFNFequivalence}
Let $Q \subseteq P$ be a component of equilibria of $V$. Then $X \equiv (\pi^{\mathbb{G}})^{-1}(Q)$ is a component of equilibria of $\mathbb{G}$ and 
$$
\text{ind}_V(Q)  = \text{deg}^{PF}_V(Q) = \text{deg}^{KM}_{\mathbb{G}}(X) = \text{ind}_{\mathbb{G}}(X)
$$
\end{theorem}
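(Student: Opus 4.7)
The plan is to exhibit $\mathbb{G}$ itself as a polytope-form game (its simplex strategy sets are standard polytopes), observe that $V$ is then a reduction of $\mathbb{G}$ via $\pi^{\mathbb{G}}$, and combine Proposition \ref{identicaltheories} with the fact that Lemma \ref{ST} specializes to the $KM$-structure theorem of subsection \ref{degtheory} when the strategy polytopes are simplices.

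First I verify that $X=(\pi^{\mathbb{G}})^{-1}(Q)$ is a component of equilibria of $\mathbb{G}$. Since $\pi^{\mathbb{G}}$ is a reduction map, $V_n\circ\pi^{\mathbb{G}}=\mathbb{G}_n$ by \eqref{identcondition}, so $\sigma$ is an equilibrium of $\mathbb{G}$ iff $\pi^{\mathbb{G}}(\sigma)$ is an equilibrium of $V$; hence $X\subseteq E(\mathbb{G})$. Connectedness follows by picking any affine right-inverse $j:P\to\Sigma$ of $\pi^{\mathbb{G}}$ (which exists because $\pi^{\mathbb{G}}$ is an affine surjection between polytopes, obtained by interpolating any choice of preimages of the vertices of $P$): $j(Q)$ is connected, and every $\sigma\in X$ is joined to $j(\pi^{\mathbb{G}}(\sigma))\in j(Q)$ by the line segment inside the convex fiber $(\pi^{\mathbb{G}})^{-1}(\pi^{\mathbb{G}}(\sigma))\subseteq X$. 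Maximality: any connected $X'\subseteq E(\mathbb{G})$ containing $X$ projects to a connected $\pi^{\mathbb{G}}(X')\supseteq Q$ in $E(V)$, forcing $\pi^{\mathbb{G}}(X')=Q$ and hence $X'\subseteq X$.

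Next, Proposition \ref{identicaltheories} applied to $\mathbb{G}$ and $V$ with common reduction $V$ (reduction maps $\pi^{\mathbb{G}}$ and $\mathrm{id}_P$) yields at once
\begin{equation*}
\text{ind}_{\mathbb{G}}(X)=\text{ind}_V(Q)=\text{deg}^{PF}_V(Q)=\text{deg}^{PF}_{\mathbb{G}}(X),
\end{equation*}
giving three of the four equalities. The substantive remaining step is $\text{deg}^{PF}_{\mathbb{G}}(X)=\text{deg}^{KM}_{\mathbb{G}}(X)$. When each $P^s_n=\Sigma_n$ is a simplex, the uniform distribution $\sigma^*_n$ over the vertices of $P^s_n$ coincides with the uniform distribution over pure strategies, so the condition $\bar V^s_n(x_n,\sigma^*_{-n})=0$ defining $\bar V^s_n$ in Lemma \ref{ST} becomes the $KM$ condition $\sum_{t_{-n}}\tilde{\mathbb{G}}_n(s_n,t_{-n})=0$ of subsection \ref{degtheory}; thus the two orthogonal decompositions $\bar V^s_n\oplus g_n$ and $\tilde{\mathbb{G}}_n\oplus g_n$ are literally identical. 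A direct computation then shows
\begin{equation*}
\theta^s_{nx_n}(\bar V^s,g,\sigma)=\bar V^s_n(x_n,\sigma_{-n})+g_{nx_n}+\sigma_{nx_n}=\mathbb{G}_n(s_n,\sigma_{-n})+\sigma_{ns_n},
\end{equation*}
which is exactly the $z$-coordinate of $\theta^{KM}_n(\sigma,\mathbb{G})$, while the $\tilde{\mathbb{G}}$-coordinate matches trivially. Hence $\theta^s=\theta^{KM}$ on $\mathcal{E}^{KM}$, and the induced degrees coincide.

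The final equality $\text{deg}^{KM}_{\mathbb{G}}(X)=\text{ind}_{\mathbb{G}}(X)$ is the normal-form identification of index and degree of \cite{DG2000} recalled in subsection \ref{indextheory}, which closes the chain. The main obstacle is the third step: the argument is computational but requires careful verification that the two orthogonal decompositions and the two reparametrizations of the equilibrium graph coincide, without any normalization discrepancy between the $PF$ and $KM$ constructions. Everything else is essentially formal bookkeeping combining Proposition \ref{identicaltheories} with the known normal-form degree-index equivalence.
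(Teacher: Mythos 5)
Your proof is correct and follows essentially the same route as the paper: the heart of both arguments is Proposition \ref{identicaltheories} applied to the reduction map $\pi^{\mathbb{G}}$ (viewing $\mathbb{G}$ itself as a polytope-form game with $V$ as its reduction), combined with the known normal-form degree--index equivalence of \cite{DG2000}. Your extra step identifying $\text{deg}^{PF}_{\mathbb{G}}(X)$ with $\text{deg}^{KM}_{\mathbb{G}}(X)$ by checking that the polytope-form structure theorem specializes to the Kohlberg--Mertens one on simplices is a correct and worthwhile sanity check, but it is logically redundant, since that equality already follows by transitivity from $\text{deg}^{PF}_{\mathbb{G}}(X)=\text{ind}_{\mathbb{G}}(X)$ (Proposition \ref{identicaltheories}) and $\text{ind}_{\mathbb{G}}(X)=\text{deg}^{KM}_{\mathbb{G}}(X)$.
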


\begin{proof}  The first equality follows from (3) in Proposition \ref{identicaltheories} and the last equality is known. We show that $\text{ind}_{\mathbb{G}}(X)$ equals $\text{ind}_V(Q)$. This equality is a consequence of the fact that $\pi^{\mathbb{G}}: \S \to P$ is a reduction map and therefore (2) of Proposition \ref{identicaltheories} applies.\end{proof}

The proof of Theorem \ref{PFNFequivalence} is an immediate application of Proposition \ref{identicaltheories}. On a practical level, the Theorem tells us that if one wants to check for robustness of an equilibrium component of a normal-form game, it is sufficient to check that the index of the component is non-zero in its associated maximal reduction. In other words, the identifiation of duplicate pure and mixed strategies from the normal-form does not alter the index of a component of equilibria, eventhough the dimensions of the payoff space decrease in this process and therefore make `robustness to payoff perturbations' a seemingly less stringent requirement.

\subsection{Formulas for Computation of the Degree}\label{polydegcomputation}
Let $V = (\N, (P_n)_{n \in \N}, (V_n)_{n \in \N})$ be a polytope-form game and let $Q$ be a component of equilibria of $V$. For each $n \in \N$, let $e_n: P_n \to P^s_n$ be an affine bijection of $P_n$ with a standard polytope $P^s_n$, defining a standartization $V^s_n$ of $V_n$. From Proposition \ref{identicaltheories} we have $\text{deg}^{PF}_{V}(Q) = \text{deg}^{PF}_{V^{s}}(Q^{s})$, where $Q^s \equiv (\times_n e_n)(Q)$. Thus the problem of computing the degree of the equilibrium component $Q$ of the arbitrary polytope-form game $V$ can be reduced to the problem of computing the degree of $Q^{s}$ w.r.t. $V^s$.

The payoff functions $V^s_n$ for each player $n \in \N$ are now multilinear and can be identified with a vector in $\mathbb{R}^{ND}$ where $D = d_1... d_N$, and dim$(P_n)$ = dim($P^s_n$)  = $d_n - 1$. We reduce now the problem of computation of the degree of $Q$ even further. 

Let $V ^s\bigoplus g$ be the standard polytope-form game whose payoff functions are given by $V^s_n(\s_n, \s_{-n}) + \s_n \cdot g_n$, $g_n \in \mathbb{R}^{d_n}, \s \in P^s$ and $\mathcal{E}_{V^s} \equiv \{(g, \s) \in \times_n \mathbb{R}^{d_i} \times P^s \mid$ $\s$ is an equilibrium of $V^s \bigoplus g \}$. Because $V^s$ is standard, the payoffs $V^s_n(\s_n, \s_{-n})$ can be written as $\s_n \cdot V^{s,n}(\s_{-n})$, where $V^{s,n}(\s_{-n}) = \nabla_{\s_n}V_n(\s_n,\s_{-n})$.  Now notice that there is a structure theorem for $\mathcal{E}_{V^s}$ that is exactly analogous to the one obtained in Lemma \ref{ST}: let $\theta_{V^s}: \mathcal{E}_{V^s} \rightarrow \times_n \mathbb{R}^{d_n}$ be defined by $(\theta_{V^s})_n(g,\s) = \s_n + V^{s,n} (\s_{-n}) + g_n \in \mathbb{R}^{d_n}$. It follows that $\theta_{V^s}$ is continuous and it can be verified that it has a continuous inverse: $(\theta_{V^s})^{-1}(z) = (h(z), r(z))$, where $h_n(z) = z_n - \s_n - V^{s,n}(\s_{-n})$ and $\s_m = r_m(z_m)$, $\forall m \in \N$, where $r_m$ is the closest-point retraction to the standard polytope $P^s_m$. Also, $\text{proj}_{g} \circ \theta^{-1}_{V^s}$ is homotopic to the identity function in $\times_n \mathbb{R}^{d_n}$, where $\text{proj}_g: \mathcal{E}_{V^s} \rightarrow \times_n\mathbb{R}^{d_n}$ is the projection over the first coordinate. 

The degree of $Q$ can be computed from the function $\text{proj}_{g} \circ \theta^{-1}_{V^s}$ as follows: let $\theta_{V^s}(0,Q) = K$. If $U$ is an open neighborhood of $K$ such that its closure cl$(U)$ contains no other $z$ with $h(z) = 0$ besides those in $K$, then $\text{deg}_{V^s}(Q)$ equals the local degree of $\text{proj}_{g} \circ \theta^{-1}_{V^s}|_{U}$ over $0$. The problem of computing the degree of $Q$ is therefore reduced to computing the degree over $0$ of the map $\text{proj}_{g} \circ \theta^{-1}_{V^s}|_{U}$. We will show now how to calculate the local degree of $\text{proj}_{g} \circ \theta^{-1}_{V^s}|_{U}$ over $0$ by approximating the game $V$ with ``generic'' games. 

We first define precisely what genericity means in this context. Let $\mathbb{V}(P^s_n)$ be the vertex set of the standard polytope $P^s_n$ and let $T_n \subset \mathbb{V}_n(P^s_n)$ be the subset of vertices that generates a face $[T_n]$ of the polytope $P^s_n$. The restriction of $\text{proj}_{g} \circ \theta^{-1}_{V^s}$ to the set of $z \in \times_n \mathbb{R}^{d_n} $ such that for each $n$, $r_n(z_n)$ is in the relative interior of the face $[T_n]$ is a polynomial map of degree $N-1$. Let $g$ be \textit{generic} if $g$ is a regular value of each polynomial map obtained through this restriction. The set of regular values is then open and dense in $\times_n \mathbb{R}^{d_n}$, by Sard's Theorem (cf. \cite{BCR2013}). We say that a standard game $\bar V^s \equiv V^s \bigoplus g$ is generic if $g$ is generic. As a consequence of the inverse function theorem and our definition of genericity, it follows that a generic standard polytope-form game has finitely many equilibria.  

For a generic standard game $\bar V^s$, let $\s$ be an equilibrium of this game and let $\theta_{V^s}(0,\s) = z$. Then the local degree of $\text{proj}_{g} \circ \theta^{-1}_{V^s}|_{U}$ over $0$ is given by $\text{sign} (\text{det} [D(\text{proj}_g \circ \theta^{-1}_{V^s})(z)] )$. The computation of the degree of $\s$ can therefore be done explicitly through the computation of the determinant of the Jacobian matrix $D(\text{proj}_g \circ \theta^{-1}_{V^s})(z)$ and then checking its sign. This Jacobian is a square matrix of dimension $d_1 + ... + d_n$.

Fix now a nongeneric and standard polytope-form game $V^s$. Assume $Q$ is a nondegenerate component of equilibria of $V^s$ and $U$ an open neighborhood of $K = \theta_{V^s}(0,Q)$, defined as before. From Proposition 5.12, Chapter IV, in \cite{D1972}, it follows the the $\text{deg}_{V^s}(Q)$ is locally constant in $V^s$. This implies that for a generic perturbation $g \in \times_n \mathbb{R}^{d_n}$ sufficiently close to $0$, the game $V^s \bigoplus g$ has finitely many equilibria and the local degree of $\text{proj}_g \circ \theta^{-1}_{V^s}|_{U}$ over $g$ equals the local degree of $\text{proj}_{g} \circ \theta^{-1}_{V^s}|_{U}$ over $0$. The additivity property of the degree (see \cite{D1972}, Proposition 5.8, Chapter IV) now implies that the local degree of $\text{proj}_g \circ \theta^{-1}_{V^s}|_{U}$ over $g$ is the summation: $$\sum_{z \in U: z \in h^{-1}(g)} \text{sign} (\text{det} [D(\text{proj}_g \circ \theta^{-1}_{V^s})(z)] ).$$

This shows therefore that: 

$$ \text{deg}_{V^s}(Q) = \sum_{z \in U: z \in h^{-1}(g)} \text{sign} (\text{det} [D(\text{proj}_g \circ \theta^{-1}_{V^s})(z)] ).$$

The formula above shows how the computation of the degree of a nondegenerate component of equilibria depends on the dimension of the strategy polytopes: the dimension of the Jacobian matrix $[D(\text{proj}_g \circ \theta^{-1}_{V^s})(z)]$ at $z$ is $d_1 + ... + d_n$. Typically the number of pure strategies of a player in the normal-form representation of an extensive-form game grows exponentially with the size of the tree. If the formula above is used for computation of the degree in the normal form of an extensive-form game, the dimension of the Jacobian matrix is therefore typically exponential in the number of terminal nodes of the tree.

When we perform reductions of the strategy sets, the dimension of the polytope strategy set of each player decreases. Maximal reductions, therefore, imply a formula for the computation of the degree where the sum $\sum_{i}d_i$ is the smallest. But if an extensive game is given and one is interested in applying the formula described above to compute the index of an equilibrium component in the reduced polytope form, one must compute the reduced polytope form of the normal-form representation of the extensive game, which involves computing the normal form of the extensive game and then performing the (maximal) reduction. Again, this might be intractable, since it requires the computation of the normal form. In section \ref{extensiveform} we provide an alternative, more straightforward formula for the degree computation from extensive form data that circumvents this problem.  

\section{Polytope form of Extensive-form Games}\label{extensiveform}

In this section we present formulas for computing the degree of equilibria from extensive form data.  We introduce the \textit{enabling form of an extensive-form game}, which is a polytope-form game derived from the \textit{sequence form} (a concept defined in \cite{BS1996}). We show that the enabling form is a reduction of the normal formal of the extensive form that is particularly convenient to compute from the extensive form. The formulas for computation of the degree established in subsection \ref{polydegcomputation} then apply immediately to equilibrium components in the enabling form. In subsection \ref{GWsec} we show an alternative formula for computation that can also be applied more directly from the extensive form. 

We start with some preliminary definitions for extensive-form games. We fix from now on $\G \equiv  (T, \prec,U,\N, P_{*} )$ a game-tree with perfect recall. The set $T$ is the set of nodes and $\prec$ is the irreflexive binary relation of precedence in the tree $(T, \prec)$; that is, the relation $\prec$ is acyclic and totally orders the predecessors $\{t^{'} | t^{'} \prec t \}$ of $t$. The subset of terminal nodes -- those with no successors -- is $Z \subset T$,  $U$ is a partition of $T \setminus Z$ into information sets of players and Nature. The set $U_n \subset U$ is the collection of information sets for player $n \in \N$ and $A_n(u)$ is $n$'s set of actions available at his information set $u \in U_n$. Let $A_n = \cup_{n \in \N}A_n(u)$ be the entire set of $n$'s actions. Write $u \prec z$ if $t \prec z \in Z$, for some node $t \in u$, and write $(u,i) \prec z$ if there exists $t \prec t^{'} \preceq z$ for some node $t^{'}$ that follows $t \in u$ and action $i \in A_n(u)$. Perfect recall implies that each $(U_n, \prec)$ is a tree. Player $n$ set of pure strategies is $S_n \equiv \{s : U_n \rightarrow A_n | s(u) \in A_n(u) \}$. \cite{HK1950} shows that in  a game tree with perfect recall each player $n$ can implement a mixture of pure strategies by a payoff-equivalent behavior strategy $b_n = (b_n(u))_{u \in U_n}$ in which each $b_n(u) \in \Delta(A_n(u))$ is a mixture of actions in $A_n(u)$; i.e., $b_n(i|u)$ is the conditional probability at $u$ that $n$ chooses $i$.

The space of payoffs of $\G$ will be denoted $\mathcal{G} \equiv \mathbb{R}^{N|Z|}$. An element $G \in \mathcal{G}$ defines a payoff $G_n(z)$ to player $n$ at final node $z$. The space of outcomes is $\Omega = \Delta(Z)$, where an outcome $F \in \Omega$ assigns probability $F(z)$ to $z$. The probability $F_{*}(z) > 0$ is the probability that Nature's actions do not exclude the final node $z$. The probability $F_{*}(z) > 0$ is formally defined as follows: consider Nature as a player that plays a fixed behavior strategy. Then fix any mixed strategy $\sigma_* \in \Delta(S_*)$ that is equivalent to this behavior strategy, where $S_*$ are Nature's ``pure strategies'' . Let $S_*(z) = \{s \in S_* | (u,i) \prec z \Rightarrow s(u) = i  \}$. Then $F_*(z) \equiv \sum_{s_* \in S_*(z)}\sigma_*(s_*)$.


\subsection{Sequence- and Enabling-form strategy sets} 

The \textit{sequence form} of an extensive game is an alternative representation of an extensive game introduced by \cite{BS1996} in order to compute equilibria of extensive games more efficiently. A thorough discussion of the advantages of the sequence form for this purpose can be found also \cite{KMS1996}. Our purpose in this paper is to use this new representation of the game in order to compute degrees or indices of equilibria, thus obtaining methods to identify equilibria which are robust to payoff perturbations. More specifically, we aim at obtaining formulas for computation of degrees of equilibria that can be derived from the extensive-form data directly, without computing the normal-form representation. For completeness, we recall the main definitions. 


\begin{definition}[Definition 3.1 in \cite{BS1996}]
For each player $n$, a \textit{sequence} of choices of player $n$ \textit{defined} by a node $t$ of the game tree $\G$ is the set of actions of player $n$ on the path from the root to $t$. The set of sequences is denoted $\mathfrak{S}_n$.
\end{definition}

Every node in an information set $u$ of player $n$ defines the same sequence of actions for that player from the root to that information set (due to perfect recall). This sequence is denoted $\mathfrak{s}_u$ and is called the \textit{sequence leading to} $u$. An action $a_n \in A_n(u)$ and the sequence $\mathfrak{s}_u$ define another sequence $\mathfrak{s}_u \cup \{a_n\}$. This extended sequence is denoted $\mathfrak{s}_u a_n$. Therefore, a nonempty sequence of player $n$ is defined by its last action $a_n$ and the set of sequences can be represented as 

$$
\mathfrak{S}_n = \{\emptyset\} \cup \{\mathfrak{s}_u a_n \mid u \in U_n, a_n \in A_n(u)\}
$$

Consider now the space $\Re^{\mathfrak{S}_n}$. Each coordinate of a vector $r^n \in \Re^{\mathfrak{S}_n}$ is viewed as indexed by an element $\mathfrak{S}_n$. The realization plans $r^n$ of player $n$ are the set of solutions $r^n \in \Re^{\mathfrak{S}_n}$ to the following system of linear equations: 

\begin{equation}
r^n_{\emptyset} = 1
\end{equation}

\begin{equation}
- r^n_{\mathfrak{s}_u} + \sum_{a_n \in A_n(u)}r^n_{\mathfrak{s}_u a_n} = 0, \text{ for $u \in U_n$. }
\end{equation}

\begin{equation}\label{behaviortosequence}
r^n_{\mathfrak{s}_n} \geq 0, \text{ for $\mathfrak{s}_n \in \mathfrak{S}_n$. }
\end{equation}

It can be easily checked that the set of solutions satisfying the system above forms a polytope of $\Re^{\mathfrak{S}_n}$. We denote the polytope of realization plans by $\mathfrak{P}_n$. 

A behavior strategy $b_n$ of player $n$ defines a unique realization $r^n$ for player $n$ as follows (cf. 3.1 \cite{BS1996}): for each sequence $\mathfrak{s}_n \in \mathfrak{S}_n$, we have 

\begin{equation}\label{sequencetobehavior}
r^n_{\mathfrak{s}_n} = \prod_{a_n \in \mathfrak{s}_n}b_{n}(a_n|u).
\end{equation}

Conversely, the proof of Proposition 3.4 in \cite{BS1996} shows that each realization plan $r^n$ of player $n$ defines a collection of behavior strategies as follows: for each information set $u \in U_n$ and $a_n \in A_n(u)$, define 
\begin{equation}
b_n(a_n|u) = \frac{r^n_{\mathfrak{s}_u a_n}}{r^n_{\mathfrak{s}_n}},
\end{equation}
 if $r^n_{\mathfrak{s}_n}>0,$ and arbitrarily if $r^n_{\mathfrak{s}_n} =0$. Nature is usually considered as player $0$, who plays a fixed behavior strategy. Nature's realization plan is derived as in equation \eqref{sequencetobehavior} from its fixed behavior strategy. 


\textit{Enabling strategy sets} were introduced in \cite{GW2002} in order to obtain structure theorems for game trees. The set of enabling strategies of player $n$ can be defined by considering the natural projection of the set of realization plans of player $n$ over those coordinates which correspond to sequences defined by terminal nodes of the game tree. Formally, let $L_n \equiv \{ \mathfrak{s}_n \in \mathfrak{S}_n \mid \mathfrak{s}_n \text{ is defined by a terminal node $z$} \}$ and $\text{proj}_{L_n}: \Re^{\mathfrak{S}_n} \to \Re^{L_n}$ be defined by $r^n \mapsto \text{proj}_{L_n}(r^n) = (r^n_{\mathfrak{s}_n})_{\mathfrak{s}_{n} \in L_n}$. Note that $L_n = \emptyset$ iff $n$ is a dummy player and that each sequence defined by a terminal node $z$ has a unique \textit{last action} $\ell_n(z)$ which identifies that sequence. Therefore, we can view $L_n$ as the set of player $n$'s last actions: $i_n \in L_n \subset A_n$ if there exists $z \in Z$ such that $i$ is the $\prec$-maximal element in $A_n(z) \equiv \{i ^{'}_n \in A_n | i_n^{'} \prec z \}$. That is, $i_n = \ell_n(z) = \text{argmax} A_n(z)$.

\begin{definition}\label{enabling}
The \textit{enabling strategy set} of player $n$ is $\text{proj}_{L_n}(\mathfrak{P}_n)$ and is denoted $C_n$. 
\end{definition}

\begin{remark}Because $\text{proj}_{L_n}$ is affine and $\mathfrak{P}_n$ is a polytope, $C_n$ is also a polytope. In \cite{GW2002}, enabling stratagies are defined using the mixed strategy set of each player, instead of using behavior strategies, in an entirely analogous, but different procedure to the one showed above. \end{remark}

Let $L \equiv \times_n L_n$. We can now define a polytope-form game $V^{e} = (\N, (C_n)_{n \in \N}, (V^{e}_n)_{n \in \N})$, called the \textit{enabling form of $G$}, by defining payoffs as follows: for each $n \in \N$, let $g_n: L \to \Re$ be defined by $g_n(i) \equiv G_n(z)$, if for each $m \in \N$, the sequence (with last action) $i_m$ is defined by $z$. It is easy to see that each $z \in Z$ defines a unique sequence $i_m$ for each player $m$. If otherwise, then set $g_n(i) \equiv 0$. We can now define the payoff function $V^e_n$ as 

\begin{equation}\label{enablingpayoff}
V^{e}_n(r) \equiv \sum_{i \in L}g_n(i)\prod^{N}_{m=0}r^m(i_m).
\end{equation}

Note that this is precisely the same way \cite{BS1996} defines the payoffs of its sequence form. The payoff function $V^e_n$ is affine in each coordinate $r^m$ and indeed defines a polytope-form game. We denote $C \equiv \times_n C_n$.  

\begin{remark} In the Appendix, subsection \ref{addextensiveform}, we include additional results about the payoff space of the enabling form of an extensive game. We show there that the space of payoffs associated to the enabling form of an extensive game is a linear subspace of the space of multiaffine functions over $C$. The dimension of this linear subspace could be strictly lower than the space of multiaffine functions over $C$ but has dimension $N|Z|$.\end{remark}

\begin{proposition}\label{enablingreduction}
Let $\mathbb{G}$ be the normal form of an extensive-form game $G \in \mathcal{G}$. The polytope-form game $V^e$ is a reduction of $\mathbb{G}$. 
\end{proposition}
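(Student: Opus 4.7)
The plan is to exhibit an explicit reduction map $q \equiv \times_n q_n : \Sigma \to C$ from the mixed-strategy profiles of $\mathbb{G}$ to the enabling strategy profiles of $V^e$, and check the two conditions of a reduction: $q_n$ is affine and surjective, and the payoff identification \eqref{identcondition} holds. For each player $n$, define $q_n : \Sigma_n \to C_n$ by
\[
q_n(\sigma_n)(i_n) \;=\; \sum_{s_n \in S_n(i_n)} \sigma_n(s_n), \qquad i_n \in L_n,
\]
where $S_n(i_n) \subseteq S_n$ is the set of pure strategies consistent with the sequence $i_n$ (i.e., those prescribing every action in $i_n$ at the information sets through which $i_n$ passes). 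Equivalently, $q_n = \text{proj}_{L_n} \circ \rho_n$, where $\rho_n : \Sigma_n \to \mathfrak{P}_n$ sends $\sigma_n$ to the realization plan of the behavior strategy Kuhn-equivalent to $\sigma_n$ (with the realization plan of a pure $s_n$ being the $0/1$ indicator of consistency with each sequence).

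First I would check that $q_n$ is affine and surjective. Affinity (indeed linearity) in $\sigma_n$ is immediate from the summation formula. For surjectivity, let $c_n \in C_n$; by Definition \ref{enabling} there exists $r^n \in \mathfrak{P}_n$ with $\text{proj}_{L_n}(r^n) = c_n$. By \cite{BS1996}, every realization plan $r^n$ is induced by some behavior strategy $b_n$ via \eqref{sequencetobehavior}, and by Kuhn's theorem (\cite{HK1950}) $b_n$ is payoff-equivalent to some mixed strategy $\sigma_n \in \Sigma_n$; a direct check shows $q_n(\sigma_n) = c_n$.

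The core step, and the one demanding the most care, is verifying that $V^e_n \circ q = \mathbb{G}_n$ for every player $n$. By multilinearity of both $\mathbb{G}_n$ and $V^e_n \circ q$ (the latter because $V^e_n$ is multiaffine and each $q_m$ is affine), it suffices to check this on pure profiles $s \in S$. For pure $s$, $q_m(s_m)(i_m) = \mathbb{1}[s_m \text{ is consistent with } i_m]$, so
\[
V^e_n(q(s)) \;=\; \sum_{i \in L} g_n(i)\, r^0(i_0) \prod_{m=1}^N \mathbb{1}[s_m \text{ consistent with } i_m].
\]
The surviving terms are those tuples $i$ such that, for each $m \geq 1$, $s_m$ is consistent with $i_m$. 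Under perfect recall, the path from the root to any $z \in Z$ is uniquely determined by the tuple $(\ell_0(z),\ell_1(z),\dots,\ell_N(z))$ of last-action sequences, so the map $z \mapsto (\ell_m(z))_m$ is injective, and the only $i \in L$ giving a nonzero $g_n(i)$ are of this form. Indexing the sum by such $z$, we have $g_n(\ell(z)) = G_n(z)$ by definition, $r^0(\ell_0(z)) = F_*(z)$ (the probability Nature's actions do not exclude $z$), and the indicator $\prod_{m \geq 1}\mathbb{1}[s_m \text{ consistent with } \ell_m(z)]$ is exactly $\mathbb{1}[s \text{ reaches } z]$. Therefore
\[
V^e_n(q(s)) \;=\; \sum_{z \in Z} G_n(z)\, F_*(z)\, \mathbb{1}[s \text{ reaches } z] \;=\; \mathbb{G}_n(s),
\]
which is the standard normal-form payoff.

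Finally, condition \eqref{identcondition} follows immediately: if $q_n(\sigma_n) = q_n(\sigma_n')$, then for every $\sigma_{-n}$ and every $m$,
\[
\mathbb{G}_m(\sigma_n,\sigma_{-n}) = V^e_m(q_n(\sigma_n),q_{-n}(\sigma_{-n})) = V^e_m(q_n(\sigma_n'),q_{-n}(\sigma_{-n})) = \mathbb{G}_m(\sigma_n',\sigma_{-n}),
\]
so $q$ is a reduction map and $V^e$ is a reduction of $\mathbb{G}$. The main obstacle in this proof is the bookkeeping in the payoff check: one must confirm that distinct terminal nodes yield distinct tuples of last-action sequences (using perfect recall) and that Nature's fixed realization plan correctly supplies the weights $F_*(z)$; everything else is either Kuhn's theorem or linear extension.
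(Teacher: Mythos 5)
Your proposal is correct and follows essentially the same route as the paper: you construct exactly the map $q_n(\sigma_n)(i) = \sum_{s_n \in s_n(i)} \sigma_n(s_n)$ that the paper uses, and verify affinity, surjectivity, and $V^e_n \circ q = \mathbb{G}_n$, from which condition \eqref{identcondition} is immediate. The only difference is one of detail: the paper asserts these properties tersely, whereas you spell out the surjectivity argument via realization plans and the pure-profile payoff bookkeeping (injectivity of $z \mapsto (\ell_m(z))_m$ and Nature's weight $F_*(z)$), all of which is sound.
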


\begin{proof} Define for each $i \in L_n$, $Z_n(i) \equiv  \ell_n^{-1}(i) = \{z | \ell_n(z) = i\}$ and for each $z \in Z$, $S_n(z) \equiv \{s \in S_n | (u,i) \prec z$ implies $ s(u) = i\}$. If  $L_n = \emptyset$, then $n$ is a dummy player. Note that for each $z,z' \in Z_n(i), S_n(z) = S_n(z')$. Hence, we define $s_n(i) \equiv S_n(z), z \in Z_n(i)$. For each $n \in \N$, define the map $q_n: \S_n \to C_n$ given by $q_n(\s_n) \equiv \sum_{s_n \in s_n(i)}\s_n(s_n) = r^n_i$. The map $q_n$ is affine and surjective, so is a reduction map. Moreover, for each player $n$, $V^{e}_n \circ q = \mathbb{G}_n$.\end{proof}

Given an extensive game $G \in \mathcal{G}$, we denote by $q^{e}: \S \to C$ the reduction map from mixed to enabling strategies of this game. The next proposition compares the polytopes of the reduced polytope form of the extensive game and the enabling form. 

\begin{proposition}\label{enablingtoreduced}
Let $G \in \mathcal{G}$ be an extensive game and $\mathbb{G}$ the normal form of the extensive game. Let $V^r$ be the reduced polytope form of $\mathbb{G}$ and $V^e$ the enabling form. Then for each $n \in \N$, there exists an affine and surjective map $\bar{\pi}^{\mathbb{G}}_n: C_n \to P^r_n$ with $\bar{\pi}^{\mathbb{G}} \equiv \times_n \bar{\pi}^{\mathbb{G}}_n$ such that $V^{r} \circ \bar{\pi}^{\mathbb{G}} = V^e$. \end{proposition}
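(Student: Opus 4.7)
The plan is to exploit the universal property of the maximal reduction $\pi^{\mathbb{G}}$: since $V^r$ was constructed by identifying \emph{all} mixed strategies that are strategically equivalent in $\mathbb{G}$, any other reduction (in particular $V^e$, by Proposition \ref{enablingreduction}) must factor uniquely through $V^r$ at the level of strategy polytopes. So the map $\bar\pi^{\mathbb{G}}_n$ will be obtained essentially for free once we verify that $\pi^{\mathbb{G}}_n$ is constant on the fibers of $q^e_n$.

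First I would record the comparison on fibers. Fix $n$ and suppose $q^e_n(\s_n) = q^e_n(\s'_n)$. Applying the reduction condition \eqref{identcondition} to $q^e$ and using $V^e \circ q^e = \mathbb{G}$ from Proposition \ref{enablingreduction}, I get that for every $m \in \N$ and every $s_{-n} \in S_{-n}$,
\[
\mathbb{G}_m(\s_n, s_{-n}) = V^e_m(q^e_n(\s_n), q^e_{-n}(s_{-n})) = V^e_m(q^e_n(\s'_n), q^e_{-n}(s_{-n})) = \mathbb{G}_m(\s'_n, s_{-n}).
\]
By the definition \eqref{redeq} of $\approx$, this says $\s_n \approx \s'_n$, i.e.\ $\pi^{\mathbb{G}}_n(\s_n) = \pi^{\mathbb{G}}_n(\s'_n)$.

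Next I would define $\bar\pi^{\mathbb{G}}_n : C_n \to P^r_n$ by $\bar\pi^{\mathbb{G}}_n(c_n) \equiv \pi^{\mathbb{G}}_n(\s_n)$ for any $\s_n$ with $q^e_n(\s_n) = c_n$; the previous step guarantees this is well-defined, and surjectivity of $\bar\pi^{\mathbb{G}}_n$ is immediate from surjectivity of $\pi^{\mathbb{G}}_n$ together with the identity $\pi^{\mathbb{G}}_n = \bar\pi^{\mathbb{G}}_n \circ q^e_n$. Affineness is the only mildly technical point: given $c_n = \l c^1_n + (1-\l) c^2_n$ with $\l \in [0,1]$ and $c^1_n, c^2_n \in C_n$, choose preimages $\s^1_n, \s^2_n \in \S_n$. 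Since $\S_n$ is convex and $q^e_n$ is affine, $\l\s^1_n + (1-\l)\s^2_n$ lies in $\S_n$ and maps to $c_n$, so
\[
\bar\pi^{\mathbb{G}}_n(c_n) = \pi^{\mathbb{G}}_n(\l\s^1_n + (1-\l)\s^2_n) = \l\pi^{\mathbb{G}}_n(\s^1_n) + (1-\l)\pi^{\mathbb{G}}_n(\s^2_n) = \l\bar\pi^{\mathbb{G}}_n(c^1_n) + (1-\l)\bar\pi^{\mathbb{G}}_n(c^2_n),
\]
where I use that $\pi^{\mathbb{G}}_n$ is affine.

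Finally, setting $\bar\pi^{\mathbb{G}} \equiv \times_n \bar\pi^{\mathbb{G}}_n$, the identity on payoffs follows by chasing the diagram: $V^r \circ \bar\pi^{\mathbb{G}} \circ q^e = V^r \circ \pi^{\mathbb{G}} = \mathbb{G} = V^e \circ q^e$, and surjectivity of $q^e$ cancels it on the right to give $V^r \circ \bar\pi^{\mathbb{G}} = V^e$. There is no serious obstacle; the only thing to be slightly careful about is the affinity argument, since a priori $\bar\pi^{\mathbb{G}}_n$ is only defined as a set-theoretic quotient, but the convexity of $\S_n$ combined with affineness of both $q^e_n$ and $\pi^{\mathbb{G}}_n$ resolves this cleanly.
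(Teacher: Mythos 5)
Your proof is correct and follows the same route as the paper: observe that $\pi^{\mathbb{G}}_n$ is constant on the fibers of $q^e_n$ (via condition \eqref{identcondition} and the definition \eqref{redeq} of $\approx$), so the maximal reduction map factors through $C_n$ as an affine surjection, and the payoff identity follows by cancelling the surjection $q^e$. You simply spell out the well-definedness, affineness, and diagram-chase details that the paper leaves implicit.
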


\begin{proof}Let $\pi^{\mathbb{G}}_n : \S_n \to P^r_n$ be the maximal reduction map of game $\mathbb{G}$ to $V^r$. Let $q^e_n: \S_n \to C_n$ be the reduction map of $\S_n$ to enabling strategies $C_n$. Note that $\pi^{\mathbb{G}}_n$ is constant in the fibers of $q_n$. Therefore there exists an affine and surjective mapping $\bar{\pi}^{\mathbb{G}}_n: C_n \to P^r_n$ such that $\bar{\pi}^{\mathbb{G}}_n \circ q_n = \pi^{\mathbb{G}}_n$.\end{proof}

An immediate consequence of Proposition \ref{enablingtoreduced} is that $\text{dim}(C_n) \geq \text{dim}(P^r_n)$, and it is not hard to see that the weak inequality can be strict for certain extensive games, because $C_n$ by definition is a quotient space from $\S_n$ that produces identifications entirely based on outcome-equivalences of strategies that arise from the game tree only, without taking payoffs into account. The maximal reduction map $\pi^{\mathbb{G}}_n$, however, takes also payoffs into account, and so produces ``more'' identifications than $q_n$. 
\begin{ex}
\begin{figure}[h]
\centering{}%
\caption{Game-Tree of Example \ref{running example}}\label{FIG:running}

\tikzstyle{hollow node}=[circle,draw,inner sep=1.5]
                         \tikzstyle{solid node}=[circle,draw,inner sep=1.5,fill=black]
                         \tikzset{
                           red node/.style={circle,draw=red,fill=red,inner sep=1.2},
                           blue node/.style={rectangle,draw=blue,inner sep=2.5}
                         }

\begin{tikzpicture}[scale=0.78]
 
\tikzstyle{level 1}=[level distance=15mm,sibling distance=35mm]
    \tikzstyle{level 2}=[level distance=15mm,sibling distance=25mm]
 		\tikzstyle{level 3}=[level distance=15mm,sibling distance=15mm]
 
  \node(0)[hollow node]{}
     child{node (0-1)[solid node]{}
     child{node (1-1)[red node]{} edge from parent node[left]{$l_1$}}
     child{node (1-2)[red node]{} edge from parent node[right]{$r_1$}}
	edge from parent node[left,xshift=-2]{$L$}    
     }    
     child{node(0-2)[solid node]{}     
       child{node(2-0)[solid node]{} 
       	child{node(2-1)[red node]{} edge from parent node[left]{$L_1$}}
       	child{node(2-2)[red node]{} edge from parent node[right]{$R_1$}}
		edge from parent node[left,xshift=-2]{$l$}      
       }
       child{node(3-1)[solid node]{}
       	child{node(3-2)[red node]{}}
       	child{node(3-3)[red node]{}}
		edge from parent node[right,xshift=-2]{$r$}      	
      	}
     edge from parent node[right,xshift=-2]{$R$} 	
     };
     \draw[dashed](2-0)to(3-1);
     \node[above] at (0){$1$};
     \node[above] at (0-1){$2$}; 
	 \node at($(2-0)!.5!(3-1)$){$1$};
	 \node[above] at (0-2){$2$};
	 \node[below] at (2-1){$z_3$};
	 \node[below] at (2-2){$z_4$};
	 \node[below] at (3-2){$z_5$};
	 \node[below] at (3-3){$z_6$};
	 \node[below] at (1-1){$z_1$};
	  \node[below] at (1-2){$z_2$};

\end{tikzpicture}
\end{figure}
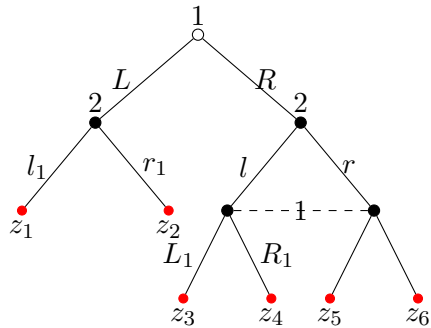

We show how to define payoff functions for player 1 in the polytope-form of the extensive-form game depicted in Figure \ref{FIG:running}, which is the same game tree as that of Example \ref{running example}. The payoff functions for player 2 can be defined using the same procedure. Since the identification of mixed strategies resulting in the enabling strategy set does not rely on payoffs, but only in the game tree, we substitute the specific terminal payoffs of Example \ref{running example}, by arbitrary terminal payoffs. Let $(G_{1}(z_i))_{i =1,..,6}$ be a vector of payoffs of player 1 over terminal nodes defined for the game tree in Figure \ref{FIG:running}. Then define: 

\begin{enumerate}

\item $V_{1}(L ,\ell_1) \equiv G_{1}(z_1)$ and $V_{1}(L, r_1) \equiv G_{1}(z_2)$;

\item $V_{1}(\ell, L_1) \equiv G_{1}(z_3)$ and $V_{1}(\ell, R_1) \equiv G_{1}(z_4)$;

\item $V_{1}(r, L_1) \equiv G_{1}(z_5)$ and  $V_{1}(r, R_1) \equiv G_{1}(z_6)$;

\end{enumerate}
Thus, $$V_1(p) = V_{1}(L ,l_1)p_1(L)p_2(l_1) + V_{1}(L,r_1)p_1(L)p_2(r_1) + V_{1}(l,L_1)p_1(L_1)p_2(l) +$$ $$V_{1}(l,R_1)p_1(R_1)p_2(l) + V_{1}(r, L_1)p_1(L_1)p_2(r) + V_{1}(r,R_1)p_1(R_1)p_2(r),$$ where $p_1 \in C_1$ and $p_2 \in C_2$. In this game tree, player 1 has 4 pure strategies, so his mixed strategy space is a 3-simplex. But the enabling strategy space $C_1$ is a 2-simplex. Now, player 2 also has 4 pure strategies but the enabling strategy set $C_2$ is 2-dimensional (it is actually the ``paralelogram'' of Example \ref{running example}). The strategy space $\Sigma_1 \times \Sigma_2$ in normal form has dimension 6 whereas $C_1 \times C_2$ has dimension 4. \end{ex}

As \cite{BS1996} points out, representing the extensive game in sequence form is convenient because it can be done directly from data of the extensive form, without computing the normal form. For the problem of computation of the degree of an equilibrium component, however, one still has to compute the payoffs $V^e$ in order to apply the formulas of subsection \ref{polydegcomputation}. We provide an alternative way to compute the degree of a component that depends on perturbing terminal payoffs of the game tree directly.

\subsection{The GW-Structure Theorem}\label{GWsec} \cite{GW2002} present a structure theorem for extensive games where the payoff space of the Nash-graph is the space of terminal payoffs $\mathcal{G}$ of a fixed game tree $\Gamma$. This structure theorem over terminal payoffs of the tree has however a limitation: the strategy space considered in this Nash-graph is a perturbation of the enabling strategy set to its relative interior. Govindan and Wilson show the perturbation is necessary: without it, there are game trees for which there are no structure theorems (cf. \cite{GW2002}). Nevertheless, this formulation of the structure theorem is the natural one for extensive-form games, because it involves terminal payoffs of the tree. We show that this particular structure theorem could be used to define a degree theory that allows us to compute the degree of equilibria with unperturbed enabling strategy sets. We start with some preliminary notation. 


For $\e>0$ and $n \in \N$, let $C^{\e}_n$ denote a subset of the enabling strategy $C_n$ which is a polytope, is contained in the relative interior of $C_n$ and is such that the Hausdorff-distance $d(C_n, C^{\e}_n) \leq \e$. Since $C^{\e}_n$ is a polytope that is a subset of $C_n$, the payoff functions $V^e_n$ are also defined over $C^{\e} \equiv \times_n C^{\e}_n$. We refer to the equilibria of the game $V^{e}$ with polytope strategy sets $C^{\e}_n$ for each player $n$ as \textit{$\e$-restricted equilibria of $G$}. Let $\mathcal{E}^{GW}_{\e} \equiv \{ (G,p) \in \mathcal{G} \times C^{\epsilon} \mid$ $p$ is an $\e$-restricted equilibrium of $G$ $\}$ be the \textit{$GW$ $\e$-equilibrium graph} of payoffs over terminal nodes of the tree (cf. \cite{GW2002}). This is the graph of the ``equilibria'' of the extensive-form game defined by $G \in \mathcal{G}$, when we restrict players to choose enabling strategies in $C^{\e}_n$.


\begin{remark}\label{remark}
For each player $n$, let $(G_n(z))_{z \in Z} \in \mathcal{G}$ be the vector of payoffs of player $n$ associated to the terminal nodes of the tree. We will use a similar notation to \cite{GW2002} and write $V^e_n(p_n, p_{-n}) = p_n \cdot \nu_n(p_{-n}) + \nu_{n}(\emptyset)$ where $\nu_n(p_{-n}) \equiv (\nu_n(i, p_{-n}))_{i \in L_n}$. Formally, given a vector of enabling strategies $(p_n)_{n \in \N} \in \times_n C_n$ and induced distribution $F \in \Delta(Z)$, we have $$V^e_n(p_n, p_{-n}) = \sum_{z \in Z}G_n(z)F(z) =$$ 
$$\sum_{i \in L_n}p_n(i)\sum_{z \in Z_n(i)}G_n(z)F^{n}(z) + \sum_{z| A_n(z) = \emptyset}G_n(z)F^{n}(z) = p_n \cdot \nu_n(p_{-n}) + \nu_n(\emptyset),$$where $F^{n}(z) = \Pi_{j \neq n}F_j(z)$.  Hence $p_n \cdot \nu_n(p_{-n})$ corresponds to the part of the multiaffine function $V^e_n$ that depends on the last actions of player $n$, whereas $\nu_n(\emptyset)$ depends exclusively on $p_{-n}$.
\end{remark}

Let $\text{proj}_{\mathcal{G}} : \mathcal{E}^{GW}_{\e} \rightarrow \mathcal{G}$ be defined by $\text{proj}_{\mathcal{G}}(G,p) = G$. For each fixed $\e>0$, there is a structure theorem for the graph $\E^{GW}_{\e}$: first, there is a homeomorphism $\Theta^{GW}_{\e}: \mathcal{E}^{GW}_{\e} \to \mathcal{G}$; second, $\text{proj}_{\mathcal{G}} \circ (\Theta^{GW}_{\e})^{-1}: \mathcal{G} \to \mathcal{G}$ is homotopic to the identity map on $\mathcal{G}$, by a homotopy that extends to the one-point compactification of $\mathcal{G}$. As we showed in section \ref{introdeg} for normal-form games, this structure theorem generates a degree theory which allows us to verify the robustness of connected components of solutions (in the variable $p \in C^{\e}$) of the equation \eqref{GWequation} to perturbations of the parameter $G \in \mathcal{G}$:

\begin{equation}\label{GWequation}
	\text{proj}_{\mathcal{G}}(G,p) = G.
\end{equation}
\medskip

Note that for a fixed $G$, there are finitely many $\e$-restricted equilibrium components $Q$ of $G$, because the set of $\e$-restricted equilibria of $G$ is semi-algebraic. Since the degree theory that immediately follows from this $GW$-structure theorem (we refer to this degree theory as \textit{GW-degree theory}) requires the $\e$-perturbation $C^{\e}$ of the strategy set $C$, it cannot be applied to verify the robustness of equilibrium components to perturbations of the terminal payoffs $G$. We therefore extend this theory to provide a method capable of executing this verification. 

We recall the definition of the $GW$-homeomorphism $\Theta^{GW}_\e: \E^{GW}_{\e}\to \mathcal{G}$ constructed in \cite{GW2002}. Denote by $F^u$ the uniform distribution over terminal nodes and let $E[ \cdot | \cdot ]$ be the conditional expectation operator for $F^u$. Define $\Theta^{GW}_\e(G,p) = H$ by: for the case $A_n(z) \neq \emptyset$,  $H_n(z) \equiv G_n(z) - g_n(\ell_n(z)) + p_n(\ell_n(z)) + \nu_n(\ell_n(z), p_{-n})$, with $g_n(i_n) \equiv E[G_n | Z_n(i_n)]$. In case $A_n(z) = \emptyset$, define $H_n(z) \equiv G_n(z)$. Theorem 5.2 in \cite{GW2002} shows this map is a homeomorphism by constructing an explicit inverse. Analogously to the $KM$ and $PF$ structure theorems we defined before, we can view $\Theta^{GW}_\e$ as fixing a subspace of $\mathcal{G}$ as follows: let $\tilde G_n(z) \equiv G_n(z) - g_n(\ell_n(z))$ if $\ell_n(z) \neq \emptyset$ and $\tilde G_n(z) \equiv G_n(z)$, if otherwise. Then $G_n(z) = \tilde G_n(z) + g_n(\ell_n(z))$, for all $z \in Z$ and the decomposition $G_n = (\tilde G_n, g_n)$ is actually unique. Using this decomposition the $n$-th coordinate of the homeomorphism can be rewritten as: $(\Theta^{GW}_\e)_n(G,p) = (\Theta^{GW}_\e)_n(\tilde G, g, p)  = (\tilde G_n(z), p_n(\ell_n(z)) + \nu_n(\ell_n(z), p_{-n}))_{z \in Z}$. Therefore, analogously to the previously presented structure theorems, the $GW$-homeomorphism acts only on the pairs of bonus and strategies $(g,p)$.

If $Q \subset C^{\epsilon}$ is an $\e$-restricted equilibrium component of $G$, we denote the degree of this component by $\text{deg}^{\e}_G(Q)$. Let $E(G)$ denote the set of equilibria of the extensive game $G$ in enabling strategies.

The next theorem provides a formula for computing the polytope- or normal-form degree in terms of the $GW$-degree theory. It shows ultimately that if $\e>0$ is taken sufficiently small, than the $GW$-degree theory can be used to compute the degree of the equilibria in $KM$ (or $PF$) degree theory. 

For the statement of the next result, recall that $q^{e}: \S \to C$ denotes the reduction map from mixed to enabling strategies.

\begin{Theorem}\label{stabilization}
Let $Q \subset C$ be an equilibrium component of the extensive-form game $G$. Let $W \subset C$ be an open neighborhood (in $C$) of $Q$ such that cl$_{C}(W) \cap E(G) = Q$ and let $X \equiv (q^{e})^{-1}(Q)$. For each $\e>0$, let $W^{\e} \equiv W \cap C^{\e}$. There exists $\bar \e>0$, such that for each $\e \in (0, \bar \e)$, $W^{\e}$ has no $\e$-restricted equilibria in its boundary and the following holds: 
\begin{equation}\label{formula}
\sum_{Q'}\text{deg}^{\e}_{G}(Q') = \text{deg}^{KM}_{\mathbb{G}}(X) = \text{deg}^{PF}_{V^{e}}(Q),
\end{equation}

where the sum above is over the connected components $Q' \subset W^{\e}$ such that $(G,Q') \in \mathcal{E}^{GW}_{\e}$.

\end{Theorem}

The first difficulty with the proof of Theorem \ref{stabilization} is to show that for $\e>0$ small $\sum_{Q'}\text{deg}^{\e}_{G}(Q')$ is constant. Though this sum is an integer by construction, it could in principle be that as $\e$ tends to $0$, this integer oscilates and no limit exists. Our proof strategy is to show that for each $\e>0$, the degree of an $\e-$restricted equilibrium component $Q'$ of $G$ is equal to the index of an associated Nash map on the restricted polytope $C^{\e}$. From this we again use the commutativity property of the index (cf. subsection \ref{polytope-form index}) to show that for a small $\e>0$, the sum of the indices of the equilibrium components in $W^{\e}$ is equal to the index of $Q$. This argument is done using the notion of best-reply index, which is defined in subsection \ref{indextheorygeneral}. The rest of the proof is an exercise in relating the best-reply index to the $GW$-degree theory in similar fashion to the proof of Proposition \ref{identicaltheories}.

\begin{remark}Given an equilibrium component $X$ in mixed strategies of the normal form of an extensive-form game $G \in \mathcal{G}$, the formula of Theorem \ref{stabilization} gives us an alternative way for computing the $KM$-degree of $X$ by using the $GW$-degree. The $GW$-degree can be computed using a similar procedure to the one we described in section \ref{polydegcomputation}, but now applied to terminal payoffs of the game tree, instead of polytope-form payoffs. Though the procedure is similar to the one explained in section \ref{polydegcomputation}, the details are different, so we would like to highlight these differences. 

Let $\e>0$ and $Q'$ an $\e$-restricted equilibrium component in enabling strategies of $G \in \mathcal{G}$. We compute $\text{deg}^{\e}_{G}(Q')$.
As already observed we can write $G = (\tilde G, g) \in \mathcal{G}$. This allows us to write an associated homeomorphism to $\Theta^{GW}_\e$: let $$\E^{\e}_{\tilde{G}} = \{ (g',p) \in \times_n\Re^{L_n} \times C^{\e} \mid  \text{ $p$ is an $\e$-restricted equilibrium of $\tilde G \oplus g'$}\}$$ and define 
$\theta^{\e}_{\tilde{G}}: \E^{\e}_{\tilde G} \to \times_n \Re^{L_n}$ as follows: let $\theta^{\e}_{\tilde G} \equiv \times_n (\theta^{\e}_{\tilde G})_n$, with $(\theta^{\e}_{\tilde G})_n(g,p) = (p_n(\ell_n(z)) + \nu_n(\ell_n(z), p_{-n}))_{z \in Z}$. Note that $(\Theta^{GW}_\e)_n(\tilde G, g, p) = (\tilde G, \theta_{\tilde{G}}(g,p))$. 

Let $\text{proj}_{g}: \E^{\e}_{\tilde{G}} \to \times_n \Re^{L_n}$ be the projection over the ``$g$'' coordinates and let $f^{\e}_{\tilde{G}} \equiv \text{proj}_{g} \circ (\theta^{\e}_{\tilde{G}})^{-1}$. As before, let $K \equiv \theta^{\e}_{\tilde{G}}(g, Q')$ and $B$ be an open neighborhood of $K$ which contains no solution $x$ of $f^{\e}_{\tilde{G}}(x) = g$ in its boundary. Then $\text{deg}^{\e}_{G}(Q')$ equals the local degree of $f^{\e}_{\tilde{G}}|_{B}$ over $g$. The local degree of $f^{\e}_{\tilde{G}}|_{B}$ over $g$ is locally constant in $g$, i.e., for $g'$ sufficiently close to $g$, the local degree of $f^{\e}_{\tilde{G}}|_{B}$ over $g'$ is well-defined and identical to that of $g$. This is an immediate consequence of the map $f^{\e}_{\tilde G}$ being proper (cf. \cite{D1972}, Proposition 5.12). For a generic choice of $g'$ the equation $f^{\e}_{\tilde{G}}|_{B}(x) = g'$ has finitely many solutions in $x$ with the map $f^{\e}_{\tilde{G}}|_{B}(x)$ being a diffeomorphism around each of the solutions.\footnote{Genericity here can be defined in analogous fashion to the procedure in section \ref{polydegcomputation}. The map $f^{\e}_{\tilde G}$ is smooth, except in a finite collection of closed, lower-dimensional subsets of $\times_n\Re^{L_n}$. When restricted to the complement of this union, $f^{\e}_{\tilde G}$ is a smooth map which, by Sard's Theorem, has a residual set of regular values. A game $\tilde G \oplus g$ is then \textit{generic} if $g$ is a regular value of $f^{\e}_{\tilde G}$.} The local degree of $f^{\e}_{\tilde{G}}|_{B}$ over $g'$ is therefore the sum of signs of the determinant Jacobian of $f^{\e}_{\tilde G}|_{B}$ at each solution $x$.  Hence, $\text{deg}^{\e}_{G}(Q')$ equals the sum of signs of the determinant Jacobian of $f^{\e}_{\tilde G}|_{B}$ at each solution $x$ of $f^{\e}_{\tilde{G}}|_{B}(x) = g'$, for generic $g'$ chosen sufficiently close to $g$.

Note that the map $f^{\e}_{\tilde{G}}$ is a map defined from $\Re^{|L_1| +...+ |L_n|}$  to itself. Hence the formula for $\text{deg}^{\e}_{G}(Q')$ involves computing the sign of the determinant of Jacobian matrices of dimension $|L_1| + ... +|L_N|$, which can then be used to compute the leftmost formula of equation \ref{formula}. The rightmost formula of \ref{formula}, as can be seen from subsection \ref{polydegcomputation} applied to normal form, involves computing the sign of the determinant of Jacobian matrices of dimension $|S_1|+ ... + |S_N|$. But $|S_n|$ is in general exponentially larger than $|L_n|$, since the latter is at most the number of terminal nodes of the game tree. From this perspective, the computation of the leftmost formula of \ref{formula} is more tractable than the rightmost. 

Intuitively, the leftmost formula in equation \ref{formula} is a sum of $GW$-degrees, which is a tool to verify robustness to payoff perturbations of terminal payoffs of the extensive game; the rightmost formula of \ref{formula} involves $KM$-degrees, which is a tool to verify robustness to payoff perturbations of the normal form. Though the $GW$ and $KM$-degree theories are formulated in very different spaces, they are, in the precise sense of \ref{formula}, equivalent. This motivates the following definition.

\end{remark}

\begin{definition}\label{indth}
Let $Q \subset C$ be an equilibrium component of $G \in \mathcal{G}$. Let $W \subset C$ be an open neighborhood (in $C$) of $Q$ such that $\text{cl}_{C}(W) \cap E(G) = Q$. 

$$
 \text{deg}_{G}(Q) \equiv \lim_{\e \downarrow 0}\sum_{Q'}\text{deg}^{\e}_{G}(Q'),
$$
where the sum above is over the connected components $Q' \subset W^{\e}$ such that $(G,Q') \in \mathcal{E}^{GW}_{\e}$.
\end{definition}

The number $\text{deg}_{G}(Q)$ is well-defined because of Theorem \ref{stabilization}. This number is also indepedent of the specific neighborhood $W$ of $Q$: for any open neighborhood $W' \subset C$ of $Q$ such that $\text{cl}_{C}(W') \cap E(G) = Q$, the limit above identical. We note a few important properties implied by Definition \ref{indth}.

\medskip
\begin{itemize}
\item[(P1)] \textbf{Payoff Robustness}: If $\text{deg}_{G}(Q) \neq 0$, then for sufficiently small perturbations $G'$ of the terminal payoffs of $G$, there exists an equilibrium of $G'$ which is closeby to $Q$.
\medskip
\item[(P2)] \textbf{Normal form Consistency}: Let $q^{e}: \S \to C$ be the reduction map from mixed to enabling strategies. Let $X \equiv (q^e)^{-1}(Q)$ be the component in normal form. Then $\text{deg}_{G}(Q) = \text{deg}^{KM}_{\mG}(X)$. 
\medskip
\item[(P3)] \textbf{Nash-Maps Computation}: Given a Nash-map $f: \times_{n}A(C) \times C \to C$, the Nash-map index defined by $f$ on $Q$ is identical to $\text{deg}_{G}(Q)$. 
\medskip
\item[(P4)] \textbf{Independence of Approximation}: $\text{deg}_{G}(Q)$ is independent of which polytope $C^{\e}$ is used for the limit argument. 
\medskip
\item[(P5)] \textbf{+1 property}: The sum of $\text{deg}_{G}(Q)$ over $Q$ is $+1$.
\end{itemize}

\medskip 

To see that (P1) holds, let $\text{deg}_{G}(Q) = m$. For $\e>0$ sufficiently small, Theorem \ref{stabilization} implies that $\sum_{Q'} \text{deg}^{\e}_G(Q') = m \neq 0$, where the sum is over the $\e$-restricted equilibrium components $Q'$ of $G$ in $W^{\e}$. There exists then $Q'$ such that $\text{deg}^{\e}_{G}(Q') \neq 0$. Propositions \ref{robustness degree} now implies the result. Property (P2) and property (P4) are immediate consequences of Theorem \ref{stabilization}. To see that property (P3) is satisfied, observe that there is an equivalence between the Nash map index in polytope and normal forms (cf. Proposition \ref{identicaltheories} and Theorem \ref{PFNFequivalence}), which is in turn equal to the degree in normal form. Theorem \ref{stabilization} then gives the result. Lastly, given terminal payoffs $G$, for $\e>0$, the $GW$-structure theorem assigns global degree $+1$ to the projection over $\mathcal{G}$ composed with $(\Theta^{GW}_{\e})^{-1}$. Since this composition is a proper map, its local degree over any $G$ is $+1$ (cf. \cite{D1972}, Section VIII, 4.4-5). For $\e>0$ sufficiently small, Theorem \ref{stabilization} gives $\sum_{Q'} \text{deg}^{\e}_G(Q') = +1$, where the sum is over all $\e$-restricted equilibrium components $Q'$ of $G$. Therefore, $\sum_{Q \in E(G)}\text{deg}_{G}(Q) = +1$.

\begin{example}\label{computation example}We present two examples in order to illustrate Definition \ref{indth} and the use of its properties. The first example we present is an example of \cite{GW2002} (see Figure \ref{FIG:GW}). This example shows how there is no structure theorem for the graph 
$$\mathcal{E} = \{(G,p) \in \Re^6 \times C \mid p \text{ is an equilibrium of }G \},$$ 
\medskip
where $C = C_1 \times C_2 = \D(\{T,B\}) \times \D(\{L,R\})$ is the enabling strategy set and $\Re^{6}$ is the space of terminal payoffs of the extensive-form game. We recall the reason why, for completeness: assume that there exists a structure theorem, where $H: \mathcal{E} \to \Re^6$ is a homeomorphism and $\text{proj}_{\mathcal{G}}: \mathcal{E} \to \Re^6$ is the natural projection to the payoff space. The game $G_{y} = G_{3}$ has a unique equilibrium path $BL$ that persists in a neighborhood of $G_{3}$, because $B$ remains a strictly dominant strategy for player $1$ and $L$ remains the unique best reply for player $2$. As the local degree of $\text{proj}_{\mathcal{G}} \circ H^{-1}$ at $G_{3}$ can be seen as counting the number of solutions to the equation $\text{proj}_{\mathcal{G}} (G_{3},p) = G_{3}$ (with the correct sign), where $p = (B,L)$, the local degree at $G_{3}$ must be $+1$ or $-1$ . The game $G_{1}$ has two equilibrium paths $BL$ and $T$, and again all games in a neighborhood (in $\Re^6$) of $G_{1}$ have these same two outcomes. Therefore, the local degree of $\text{proj}_{\mathcal{G}} \circ H^{-1}$ at $G_{1}$ must be $-2, 0,$ or $+2$. This is a contradiction with $\text{proj}_{\mathcal{G}} \circ H^{-1}$ being a proper map, which implies that the local degree at any point in $\Re^6$ is the same. Therefore, there cannot be a structure theorem for $\mathcal{E}$.

\begin{center}
\begin{figure}
\centering{}%
\caption{$G_{y}$}\label{FIG:GW}
\tikzstyle{hollow node}=[circle,draw,inner sep=1.5]
                         \tikzstyle{solid node}=[circle,draw,inner sep=1.5,fill=black]
                         \tikzset{
                           red node/.style={circle,draw=red,fill=red,inner sep=1.2},
                           blue node/.style={rectangle,draw=blue,inner sep=2.5}
                         }

\begin{tikzpicture}[scale=0.78]
 
\tikzstyle{level 1}=[level distance=15mm,sibling distance=35mm]
    \tikzstyle{level 2}=[level distance=15mm,sibling distance=25mm]
 	
   \node(0)[hollow node]{}
     child{node (0-1)[red node]{}
	edge from parent node[left,xshift=-2]{$T$}    
     }    
     child{node(0-2)[solid node]{}     
       child{node(2-0)[red node]{} 
		edge from parent node[left,xshift=-2]{$L$}      
       }
       child{node(3-1)[red node]{}
		edge from parent node[right,xshift=-2]{$R$}      	
      	}
     edge from parent node[right,xshift=-2]{$B$} 	
     };
     \node[above] at (0){$I$};
     \node[below] at (0-1){$(2,2)$}; 

	 \node[above] at (0-2){$II$};
	 \node[below] at (2-0){$(3,3)$};
	 \node[below] at (3-1){$(y,1)$};

\end{tikzpicture}
\end{figure}
\end{center}

We now show how the perturbations to the interior solve this matter and show how we can compute the degree of the equilibrium components in $G_{1}$: perturbing $C_1$ and $C_2$ to some polytope $C^{\e}$ in the interior in the game $G_{1}$ and computing the $\e$-restricted equilibria of this pertubed game, one sees that there is no $\e$-restricted equilibria in any neighborhood of $T$, as the $\e$ vanishes. This is because for a sufficiently small perturbation of the strategy sets, player $1$ would prefer to play $B$ with the highest probability possible, and $L$ is a dominant strategy for player $2$ (given the fixed perturbation to the interior). This implies that $T$ is not an equilibrium for any sufficiently small perturbation of $C_1$ and $C_2$ to the interior. By Definition \ref{indth}, $T$ has degree of $0$ (which is the same as its $KM$-degree). The same does not happen with $BL$. The  degree of $BL$ must then be $+1$ (due to (P5)).

\begin{center}
\begin{figure}
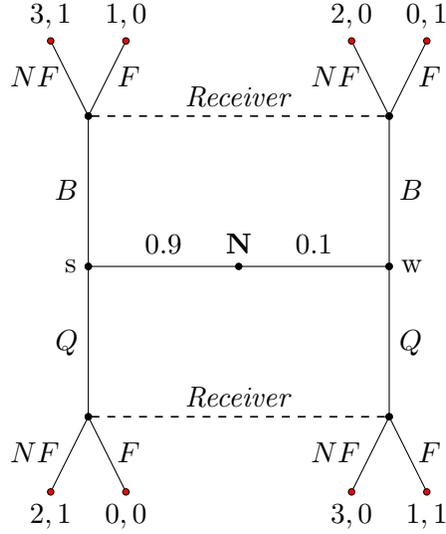

\caption{Beer-and-Quiche Game}\label{BandQ}
\bigskip
\bigskip
\begin{istgame}
\xtShowEndPoints[solid node,fill=red]
\xtdistance{0mm}{40mm}
\setistmathTF{1}{1}{1}
\istroot(o)(0,0){\textbf{N}} 
\istb{0.9}[a]{$s$}[l] 
\istb{0.1}[a]{$w$}[r] 
\endist 

\istroot[left](w)(o-1)
\istb{B}[l]
\istb{Q}[l]
\endist

\istroot'[right](s)(o-2)
\istb{B}[r]
\istb{Q}[r]
\endist

\xtInfoset[dashed](w-1)(s-1){$\mathit{Receiver}$}[a]
\xtInfoset[dashed](w-2)(s-2){$\mathit{Receiver}$}[a]

\xtdistance{10mm}{10mm}

\istroot[down](a)(w-2)
\istb{NF}[l]{2,1}[south]
\istb{F}[r]{0,0}[south]
\endist

\istroot[down](b)(s-2)
\istb{NF}[l]{3,0}[south]
\istb{F}[r]{1,1}[south]
\endist

\istroot[up](c)(w-1)
\istb{F}[r]{1,0}[north]
\istb{NF}[l]{3,1}[north]
\endist

\istroot[up](d)(s-1)
\istb{F}[r]{0,1}[north]
\istb{NF}[l]{2,0}[north]
\endist

\end{istgame}
\end{figure}
\end{center}
Consider now the  Beer-and-Quiche game (see \cite{CK1987}) depicted in Figure \ref{BandQ}. Recall that the decision nodes $w$ and $s$, after Nature's ($\textbf{N}$) move at the root belong to the Sender. The last action set of the sender can be described as $\{B_w, B_s, Q_w, Q_s\}$ ( ``$w$'' meaning weak, ``$s$'' meaning strong; ``$Q$'' meaning quiche and ``$B$'' meaning beer). The enabling strategy set of the sender is then $\D(\{B_w, Q_w\}) \times \D(\{B_s,Q_s\})$. The last action set of the receiver is $\{F_B, NF_{B}, F_Q, NF_{Q}\}$. The enabling strategy set of the receiver is then $\D(\{F_B, NF_B\}) \times \D(\{F_Q, NF_Q\})$.  There are two equilibrium components in enabling strategies in this game, each one associated with a distinct outcome: the first of these equilibrium components is described as follows: the sender chooses $(B_w, B_s)$. The receiver chooses $(NF_B, p \geq 1/2)$, where $p$ denotes the probability of $F_Q$. In the second component, the sender chooses $(Q_w, Q_s)$. The receiver chooses $(q \geq 1/2, NF_Q)$, where $q$ denotes the probability of $F_B$.

We show the second component of equilibria has degree $0$, and the first has degree $+1$. The second component is the one excluded by refinements such as the Intuitive Criterion (\cite{CK1987}) and Kohlberg-Mertens stability (\cite{KM1986}). In order to show that the second component has degree $0$, it suffices to show that for a particular sequence of strategy set perturbations converging to the unperturbed enabling set, no equilibrium is closeby to the component. Let us first set a neighborhood of the component. The strategy set of the Sender can be written as  $[0,1] \times [0,1]$, where a typical element is denoted by $(q_w, q_s)$, $q_w$ beeing the probability of playing Quiche after weak, and $q_s$ the probability of playing Quiche after strong. The strategy set of the Receiver can be written also as $[0,1] \times [0,1]$, where a typical element is denoted by $(f_B, f_Q)$, $f_B$ beeing the probability of fighting after Beer, and $f_Q$ the probability of fighting after Quiche. For $\d>0$, let $U_S = (1-\d,1] \times (1-\d,1]$ and $U_R = (1/2-\d, 1] \times [0,\d)$ and we now choose an appropriate $\d>0$. First observe that for the weak Sender, beer is a strictly inferior reply to any strategy in the component, as the equilibrium payoff is $3$, and deviating to beer gets him at most $2$. This strictness allows us to conclude that for $\d>0$ sufficiently small, the best reply of the weak Sender to any strategy in $U_R$ sets $q_w =1$. Fix such a $\d>0$. We now choose a perturbation for the enabling strategy set. Given $\e>0$, consider for the Sender the perturbed strategy set as $C^{\e}_S \equiv [\e, 1-\e] \times [\e, 1- \e] \subset [0,1] \times [0,1]$. For the Receiver, consider similarly the perturbed strategy set as $C^{\e}_R \equiv [\e, 1-\e] \times [\e, 1- \e] \subset [0,1] \times [0,1]$. Let $C^{\e} = C^{\e}_S \times C^{\e}_R$ and $U = U_S \times U_R$. Taking $\e>0$ sufficiently small, any best reply of the weak type to a strategy in $C^{\e}_R \cap U_R$ puts $q_w = (1-\e)$. This implies that the probability that the Receiver assigns to the Sender being weak (in the $\e$-perturbed game), in any $\e$-restricted equilibrium located in $U \cap C^{\e}$, is at most $0.1$. Therefore, for such $\e>0$, it implies that in any $\e$-restricted equilibrium in $U \cap C^{\e}$, the best reply of the Receiver is $f_B = \e$. Therefore, it must be that there is no equilibrium in $U \cap C^{\e}$. Hence, for a sufficiently small perturbation, there is no equilibrium closeby to the second component. This implies the degree of this component is zero. By (P5), the degree of the first component is $+1$. 

\end{example}

\section{Appendix}

\subsection{Proof of Lemma \ref{ST}}\label{ProofST}Let $\sigma \in \times_n P^{s}_n$ be an equilibrium of the polytope-form game $V^s$ and $r_n: \mathbb{R}^{d_n} \rightarrow P^{s}_n$ the nearest-point retraction. Define $z_{nx_n} \equiv \bar{V}_n(x_n,\sigma_{-n}) + g_{nx_n} + \sigma_{nx_n}$. Then $r_{nx_n}(z_n) = \sigma_{nx_n}$. Indeed, the variational inequality \ref{projection} characterizes a unique $r_n(z_n)$:
\begin{equation}\label{projection}
 \langle\tau_n - r_n(z_n), z_{n} - r_{n}(z_n) \rangle \leq 0, \forall \tau_n \in P^{s}_n.
 \end{equation} 

Then, if $\sigma$ is an equilibrium, by definition it implies $\langle \tau_n - \sigma_n, V^{s,n}(\sigma_{-n}) \rangle \leq 0$, for all $\tau_n \in P^{s}_n$. Rewriting this inequality as $\langle\tau_n - \sigma_n, \bar{V}^{s,n}(\sigma_{-n}) + g_{n} + \sigma_n - \sigma_n \rangle \leq 0$ shows $r_{n}(z_n) = \sigma_n$. 

Using the decomposition of $V^s = \bar{V}^s \bigoplus g$, we rewrite the mapping $\theta^s$ from the equilibrium graph $\mathcal{E}^s =\{(V^s,\sigma) \in \Re^{ND} \times P^s \mid \sigma$ is an equilibrium of $V^s$ $\}$ to $\Re^{ND}$ by $\theta^s(\bar{V}^s,g,\sigma)= (\bar{V}^s ,z)$. We show that $\theta^s$ is a homeomorphism. First, $\theta^s$ is clearly continuous. Also, the inverse homeomorphism can be defined explicitly: $h: \Re^{ND} \rightarrow \mathcal{E}^s$ with $h(\bar{V}^s, z) = (\bar{V}^s, g, r(z))$, where $g \equiv (g_1,..,g_N)$, $g_n \equiv z_{n} - \sigma_n - \bar{V}^{s,n}(\sigma_{-n})$ and $\sigma_m \equiv r_m(z_m), \forall m \in \N$. It follows that $h \circ \theta = id_{\mathcal{E}^s}$ and  $\theta \circ h = id_{\Re^{ND}}$.

Let $\mathbb{S}^{ND}$ denote the $ND$-dimensional sphere and recall that $\mathbb{S}^{ND}$ is homeomorphic to $\Re^{ND}\cup \{\infty\}$ with the one-point compactification topology. Let $\overline{\mathcal{E}^s}$ denote the one-point compactification of $\mathcal{E}^s$.

We define a homotopy $H: [0,1] \times \mathbb{S}^{ND} \rightarrow \mathbb{S}^{ND}$ by $H(t,V^s) = H(t,\bar{V}^s, z) = (\bar{V}^s, tz + (1-t)g)$, if $V^s \in \Re^{ND}$, and $H(t,\infty) = \infty$. Since $\text{proj}_{\Re^{ND}}$ and $h$ are both continuous and proper mappings, they have continuous extensions $\overline{\text{proj}}_{\Re^{ND}}: \overline{\mathcal{E}^s} \rightarrow \mathbb{S}^{ND}$  and $\overline{h}: \mathbb{S}^{ND} \rightarrow \overline{\mathcal{E}^s}$ to the one-point compactifications, both taking $\infty$ to $\infty$. Notice now that $H(0, \cdot) = (\overline{\text{proj}}_{\Re^{ND}} \circ \overline{h}): \mathbb{S}^{ND} \rightarrow \mathbb{S}^{ND}$ and $ H(1,\cdot ) = id_{\mathbb{S}^{ND}}$. We now show that $H$ is continuous, which shows $H$ is indeed a homotopy. Continuity at points $(t,V^s)$ where  $V^s \neq \infty$ is immediate from the definition, since the homotopy is linear. It remains to show the continuity of $H$ at all points $[0,1] \times \{\infty\}$ or equivalently $\forall R >0$, $\exists S >0$ such that if $||(\bar{V^s}, z)||_{\infty} \geq S$, it implies $\forall t$, $|| H(t,\bar{V}^s,z) ||_{\infty} \geq R$. 

Note that the definition of $g$ implies that $|z_{n x_n} - g_{n x_n}| \leq |\sigma_{n x_n}| + |\bar{V}^s_{n}(x_n, \sigma_{-n})|$, where for all $n$ $\sigma_{nx_n} \equiv r_{nx_n}(z_n)$. Because $\sigma_{nx_n} = \text{sign}(\sigma_{nx_n})|\sigma_{nx_n}|$, it implies $$\bar{V^s}_n(x_n,\sigma_{-n}) =  \sum_{x_j : j \neq n}\bar{V^s}_n(x_n, x_{-n})\Pi_{j \neq n}\sigma_{jx_j} =  \sum_{x_j : j \neq n}\bar{V^s}_n(x_n, x_{-n})\frak{h}_n(x_{-n})\Pi_{j \neq n}|\sigma_{jx_j}|$$ 

with $\frak{h}_n(x_{-n}) = \Pi_{m \neq n}\text{sign}(\sigma_{mx_m})$, which implies that $|\frak{h}_n(x_{-n})| = 1$. Let $|| \Pi_{m \neq n}|\sigma_m| ||_{\infty} \equiv \text{sup}\{\Pi_{m \neq n}|\sigma_{mx_m}| : m \in \N, x_m \in \mathbb{R}^{d_m} \}$. Because $P^s_m$ is a polytope, there exists  $\alpha > 0$ such that for all $\sigma_m \in P^s_m$, we have  $||\Pi_{m \neq n}|\sigma_m| ||_{\infty} \leq \alpha$, for all $n$. We can assume without loss of generality $\alpha > 1$. Also, there exists $B>1$ such that for all $m \in \N$ and $\sigma_m \in P^{s}_m$ it implies $|| \sigma_m ||_{\infty} \leq B$. Therefore, we have that for some $C>1$:
\begin{equation}\label{structure equation} 
||z_n - g_n||_{\infty} \leq ||\bar{V}||_{\infty}C\alpha + B. 
\end{equation} 
Let  $R>0$. Set $S = 2RC\alpha +B$. If $||(\bar{V},z)||_{\infty} \geq S$, then either $||\bar{V}||_{\infty} \geq R$ (in which case $||H(t,\bar{V},z)||_{\infty} \geq R$), or $||\bar{V}||_{\infty} < R$ and $||z||_{\infty} \geq 2RC\alpha + B$.  Using \ref{structure equation} we have  $|| tz + (1-t)g ||_{\infty} \geq R$, which finishes the proof. 
\medskip
\medskip

Throughout the rest of the Appendix we will use singular homology. Given $Y$ a topological space, we denote by $H_m(Y)$ the $m$-th dimensional singular homology group of $Y$. Given a continuous map $f: X \to Y$, $f_*$ denotes the morphism between singular homology groups of $X$ and $Y$.

\subsection{Degree Theory: Formal Definitions and Auxiliary Results}\label{degreetheory}
Let $P_n$ be a polytope of dimension $d_n -1$, for each player $n$, with $P \equiv \times_n P_n$. Let $D \equiv d_1...d_N$ and $\mathcal{E}^{PF} \equiv \{(V, p) \in \times_n A(P) \times P \mid $ $p$ is an equilibrium of $V \}$, where $A(P)$ is the $D$-dimensional linear space of multiaffine functions from $P$ to $\mathbb{R}$ -- where the linear space structure is given by pointwise addition and scalar multiplication. The linear space $\times_n A(P)$ is a $ND$-dimensional Euclidean space and we denote its one-point compactification by $\overline{\times_{n} A(P)}$. Recall that the one-point compactification $\overline{\times_{n} A(P)}$ is homeomorphic to the sphere $\mathbb{S}^{ND}$.

Let $(V, Q) \in \mathcal{E}^{PF}$. Let $U \subset \mathcal{E}^{PF}$ be an open neighborhood of $(V,Q)$ whose closure in $\E^{PF}$ contains no pair $(V,p)$ not already in $(V,Q)$. The \textit{local degree of $proj|_{U}: U \rightarrow \times_nA(P)$ over $V$} is the integer $\text{deg}_{V}(\text{proj}|_{U})$ that defines the following homomorphism in singular homology: 
$$(\text{proj}|_U)_{*} : H_{ND}(U,U \setminus \{(V,Q)\}) \rightarrow H_{ND}(\overline{\times_{n} A(P)}), \overline{\times_{n} A(P)} \setminus \{V\}),$$ 
where $H_{ND}(U,U \setminus \{(V,Q)\})$ is oriented according to the following composition of homomorphisms:
$$\mathbb{Z} = H_{ND}(\overline{\times_{n} A(P)}) \rightarrow H_{ND}(\overline{\times_{n} A(P)}, \overline{\times_{n} A(P)} \setminus K) \rightarrow H_{ND}(W,W \setminus K) \rightarrow H_{ND}(U,U \setminus \{(V,Q)\}), $$
	where $K = \theta^{PF}(V,Q)$ and $W = \theta^{PF}(U)$; the first and second arrows correspond to inclusion, where the second is an isomorphism by excision, and the third is the isomorphism $(\theta^{PF})^{-1}_{*}$.


\begin{definition}\label{def degree}
Let $V$ be a polytope-form game and $Q$ be a component of equilibria of $V$. Let $U \subset \mathcal{E}^{PF}$ be a neighborhood of $(V,Q) \in \E^{PF}$ whose closure (in $\E^{PF})$ contains no pair $(V,p)$ not already in $(V,Q)$. Then the \textit{degree of $Q$ w.r.t. $V$}, denoted $\text{deg}^{PF}_V(Q)$, is defined as $\text{deg}_{V}(\text{proj}|_{U})$. 
\end{definition}

The next proposition tells us why the degree is relevant for identifying components of equilibria which are robust to payoff perturbations. 

\begin{proposition}\label{robustness degree}
Let $V=(\N, (P_n)_{n \in \N}, (V_n)_{n \in \N})$ be a polytope-form game and $Q$ an equilibrium component of $V$. Fix a $U \subset \mathcal{E}^{PF}$ a neighborhood of $(V,Q)$ whose closure (in $\E^{PF})$ contains no pair $(V,p)$ which is not already in $(V,Q)$. Assume $\text{deg}_{V}(Q) \neq 0$. Then there exists a neighborhood $W \subset \times_n A(P)$ of $V = (V_n)_{n \in \N}$ such that for any $V' \in W$, there exists an equilibrium $p'$ of $V'$ with $(V',p') \in U$.
\end{proposition}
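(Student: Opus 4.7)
The plan is to combine properness of the projection with local constancy of the degree (Proposition 5.12, Chapter IV of \cite{D1972}), then argue by contradiction: if some $V'$ close to $V$ admits no equilibrium $p'$ with $(V',p')\in U$, then the local degree of $\text{proj}|_{U}$ over $V'$ must vanish, whereas local constancy forces it to equal $\text{deg}^{PF}_V(Q)\neq 0$.

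First I would construct a suitable neighborhood $W$ of $V$ and secure the properness needed to invoke the degree machinery. The graph $\mathcal{E}^{PF}\subseteq\times_n A(P)\times P$ is closed (it is the graph of the upper hemicontinuous Nash correspondence) and $P$ is compact, so $\text{proj}\colon\mathcal{E}^{PF}\to\times_n A(P)$ is proper, and so is its restriction to $\text{cl}_{\mathcal{E}^{PF}}(U)$. By hypothesis $\text{cl}_{\mathcal{E}^{PF}}(U)\cap(\{V\}\times P)=\{V\}\times Q\subseteq U$, so the boundary $\partial U:=\text{cl}_{\mathcal{E}^{PF}}(U)\setminus U$ is disjoint from $\{V\}\times P$. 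Properness makes $\text{proj}(\partial U)$ closed in $\times_n A(P)$, and I can set $W:=\times_n A(P)\setminus\text{proj}(\partial U)$, an open neighborhood of $V$.

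Next I would appeal to local constancy of the degree. For every $V'\in W$ the fibre $\text{proj}|_U^{-1}(V')$ is a compact subset of $U$ (it avoids $\partial U$ by construction of $W$), hence $\text{proj}|_U$ is proper over $W$. By Proposition 5.12, Chapter IV of \cite{D1972}, the local degree of $\text{proj}|_U$ over $V'$ is well defined and constant for $V'\in W$, hence equal to $\text{deg}^{PF}_V(Q)=d\neq 0$ by Definition \ref{def degree}. If there were no equilibrium $p'$ of $V'$ with $(V',p')\in U$, the fibre $\text{proj}|_U^{-1}(V')$ would be empty, the relative homology group appearing in the definition would be $H_{ND}(U,U)=0$, and the local degree over $V'$ would vanish. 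This contradiction produces the required $p'$.

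The delicate step is the handling of ``local degree over $V'$'' for $V'\neq V$: Definition \ref{def degree} is phrased at the specific point $V$, so to transfer it to a nearby $V'$ one must know that the same excision-based construction on $H_{ND}$ goes through uniformly on a neighborhood of $V$. This reduces precisely to $\text{proj}|_U$ being proper as a map into an open set containing $V$, which is the content of the first step---carving $W$ out of the closed set $\text{proj}(\partial U)$. Once that properness-in-a-neighborhood is in place, invoking Dold's Proposition 5.12 is routine, and the argument mirrors the reasoning already used in the proof of Proposition \ref{degnormalform}.
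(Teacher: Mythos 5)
Your argument is correct and follows essentially the same route as the paper, which simply cites Proposition 5.4, Chapter IV of \cite{D1972}; your write-up is in effect the standard proof of that cited fact (properness of $\text{proj}$ on $\text{cl}_{\E^{PF}}(U)$, excising the closed set $\text{proj}(\partial U)$ to get $W$, local constancy of the degree via Dold's Proposition 5.12, and the observation that an empty fibre forces $H_{ND}(U,U)=0$ and hence zero local degree). The only cosmetic point is to take the connected component (or a ball) of $W$ around $V$ before invoking local constancy.
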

\begin{proof} See Proposition 5.4 , Ch. IV, \cite{D1972}.\end{proof}

\subsection{Index Theory: Definitions and Auxiliary Results}\label{indextheorygeneral}

\begin{definition}
A \textit{Nash map} is a continuous function $f: \times_n A(P) \times P \rightarrow P$ such that for each $V \in \times_n A(P)$, the fixed points of its restriction $f_{V}$ to $\{V\}\times P$, viewed as a map from $P$ to itself, are the Nash equilibria of $V$.
\end{definition}

\begin{definition}
Fix $Q \subset W \subset \mathbb{R}^{m}$ where $Q$ is compact and $W$ is open. Recall that the one-point compactification of $\mathbb{R}^{m}$ is (homeomorphic to) an $m$-dimensional sphere. The \textit{fundamental class} $\mathcal{O}_{Q} \in H_m(W,W \setminus Q)$ is the image of $1 \in \mathbb{Z}$ under the composition
$$\mathbb{Z} = H_m(\mathbb{S}^m) \rightarrow H_m(\mathbb{S}^{m}, \mathbb{S}^{m}  \setminus Q) \rightarrow H_m(W,W \setminus Q)$$

where the first and second arrows are the homomorphisms induced by inclusion and the second one is an isomorphism, by excision.  
\end{definition}

\begin{remark}
The fundamental class $\mathcal{O}_Q$ does not depend on $W$: if $U$ is any other neighborhood of $Q$ in $\mathbb{R}^{m}$, then the two inclusion maps from $(W\cap U, (W \cap U) \setminus Q)$ into $(W,W \setminus Q)$ and $(U,U \setminus Q)$ send $\mathcal{O}_Q$ to itself.
\end{remark}

Let $P_n \subset \mathbb{R}^{d_n}$ be a polytope. Define $J \equiv \times_n (P_n) $, where $(P_n)$ is the affine space generated by $P_n$, and let $J_0 \equiv J - J$, where the symbol ``$ - $'' denotes the subtraction in the vector space $\times_n \mathbb{R}^{d_n}$. The space $J_0$ is the unique linear subspace of $\times_n \mathbb{R}^{d_n}$ that is parallel to $J$ and of the same dimension as $J$.  Both $J$ and $J_0$ are homeomorphic to a Euclidean space. Therefore the definition of fundamental class applies in the obvious way to compact subsets of these spaces. Let $r$ be a retraction of $J$ onto $P \equiv \times_n P_n$. Then every Nash map $f$ extends to the map $f \circ (\text{id}_{\times_n A(P)}\times r)$ on $\times_n A(P) \times J$. If $r'$ was another retraction of $J$ onto $P$ then, using a linear homotopy, $r$ is homotopic to $r'$. This implies that the induced homotopy between $f \circ (\text{id}_{\times_n A(P)} \times r)$ and $f \circ (\text{id}_{\times_n A(P)} \times r')$ preserves the set of fixed points.

\medskip
Let $E(V)$ denote the set of equilibria of $V$. 

\begin{definition}\label{def index}
Fix a Nash map $f$, a  polytope-form game $V$, and a component $Q$ of $E(V)$. Choose an open neighborhood $W$ of $Q$ in $J$ disjoint from $E(V) \setminus Q$. Let $d:(W,W \setminus Q) \rightarrow (J_0, J_0 \setminus 0)$ be the displacement map given by $d(p) = p - (f_{V} \circ r)(p)$. The \textit{index of $Q$ under $f$}, denoted $\text{ind}(Q,f)$, is the unique integer $i$ for which $d_*(\mathcal{O}_Q) = i \cdot \mathcal{O}_0$, where $d_*$ is the singular homology homomorphism induced by $d$. \end{definition}

\begin{remark} Note that it is implicit in the notation $\text{ind}(Q,f)$ that $Q$ is a component of fixed points of $f_V$.\end{remark}

One example of a Nash map is the map used in \cite{JN1951} to prove existence of equilibria. Another is the $GPS$-map in \cite{GPS1993}. As seen from Definition \ref{def index}, the Nash-map index of a component of equilibria $Q$ apparently depends on the specific Nash map used to assign the index. We show below in Proposition \ref{inv of nashmap index} that the dependence is just apparent:  a Nash map used to assign indices to a certain equilibrium component \emph{assigns the same index as any other Nash map}. 

Before that we introduce a different concept of index of equilibria, defined directly from the best-reply correspondence. The best-reply index is defined from the best-reply correspondence of a polytope-form game. This notion of index will play an important role in the proof of Proposition \ref{identicaltheories}.

Let $Q$ be a component of equilibria of the polytope-form game $V$ and let $BR^{V}: P \rightrightarrows P$ be the best-reply correspondence of $V$. Let $U$ be open in $P$ and a neighborhood of $Q$ such that its closure $\text{cl}(U)$ in $P$ satisfies $cl(U) \cap E(V) = Q$. Let $O$ be an open neighborhood of $\text{Graph}(BR^{V})$ such that $O \cap \{(\sigma, \sigma) \in P \times P | \sigma \in \text{cl}(U) \setminus U\} = \emptyset$. We call such a neighborhood $O$ an \textit{adequate neighborhood of $\text{Graph}(BR^{V})$ for $Q$}. By Corollary 2 in \cite{AM1989}, there exists $A \subset O$ a neighborhood of $\text{Graph}(BR^{V})$ such that any two continuous functions $f_0$ and $f_1$ from $P$ to $P$ whose graphs are in the neighborhood $A$ are homotopic by a homotopy $G: [0,1] \times P \rightarrow P$ with $\text{Graph}(G(t,\cdot)) \subset O$ for all $t \in [0,1]$. The neighborhood $A$ is called an \textit{adequate homotopy neighborhood for $Q$}.  By Corollary 1 of \cite{AM1989}, there exists a continuous map $f : P \rightarrow P $ with $\text{Graph}(f) \subset A$. We define the \textit{best-reply index of component $Q$}, denoted $\text{Ind}_{BR^{V}}(Q)$, as the fixed point index of the continuous map $f|_{U}: U \to  P$. The choice of the neighborhood $O$ and the homotopy property of the index (see \cite{D1972}, Chapter VII, 5.15) imply that the index of the component is the same for any continuous map with graph in the neighborhood $A$. 

As a result of Proposition \ref{identicaltheories}, the best-reply index and the Nash-map index assigned to a component of equilbria are identical: this identity is shown explicity in Claim \ref{eqindex} for standard polytope-form games and the general identity follows from that Proposition immediately. 

We now show the invariance of the index to the choice of the Nash map. 

\begin{Proposition}\label{inv of nashmap index}
Let $f^{1}, f^2: \times_n A(P) \times P \rightarrow P$ be two Nash maps. Then for any equilibrium component $Q$ of the polytope-form game $V \in \times_n A(P)$, it follows that $\text{ind}(Q,f^1) = \text{ind}(Q,f^2)$.
\end{Proposition}

The proof of Proposition \ref{inv of nashmap index} is performed in steps. First, Lemma \ref{invariance of NM index} establishes the result for standard polytope-form games. Then the proof of Proposition \ref{inv of nashmap index} is presented using this lemma. From now on we fix a standard polytope-form game $V^s = (\N, (P^s_n)_{n \in \N}, (V^s_n)_{n \in \N})$. The space of payoffs is $\Re^{ND}$, where dim$(P^s_n) = d_n -1, \forall n \in \N$, and $D = d_1....d_N$.

\begin{remark}
Given a Nash-map $f: \mathbb{R}^{ND} \times P^s \rightarrow P^s$, we abuse notation slightly and use $f$ to denote the extension $f \circ (\text{id}_{\mathbb{R}^{ND}} \times r)$, where $r \equiv \times_n r_n$ and $r_n : \mathbb{R}^{d_n} \rightarrow P^s_n$ is the nearest-point retraction.
\end{remark}

Let $P^{\e}_n \subset (\Delta_n) = J_n$ be a polytope containing $P^s_n$ in its relative interior. Since $P^s_n$ is standard, $P^{\e}_n$ is also full-dimensional in $J_n$. Let $\Delta^{\e} \equiv \times_{n \in \N} P^{\e}_n$. Denote by $\partial \D^{\e}$ the topological boundary of $\D^{\e}$ in $J \equiv \times_n J_n$. We view the graph $\mathcal{E}^{s}$ as a subset of $\Re^{ND} \times \Delta^{\e}$. Recall that $J_0 = J - J$. The proof of Lemma \ref{invariance of NM index} is an adaptation of an unpublished proof of Govindan and Wilson.

\begin{lemma}\label{invariance of NM index}
Let $f^1$ and $f^2$ be two Nash maps. Then, the two displacement maps $d^1$ of $f^1$ and $d^2$ of $f^2$ are homotopic as maps between the triads $(\mathbb{R}^{ND} \times \Delta^{\e}, \mathcal{E}^{s}, (\mathbb{R}^{ND} \times \Delta^{\e}) \setminus \mathcal{E}^s)$ and $(J_0, 0, J_0 \setminus \{0\})$. Consequently, $\text{ind}(Q, f^1) = \text{ind}(Q, f^2)$.
\end{lemma}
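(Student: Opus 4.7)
The natural first attempt is the straight-line homotopy $H_t(V,p) = (1-t)\,d^1(V,p) + t\,d^2(V,p)$, which is the displacement of $h_t = (1-t)f^1 + tf^2$. Both endpoints vanish exactly on $\mathcal{E}^s$, but $h_t$ is not in general a Nash map: a convex combination of two maps whose fixed-point sets coincide with the Nash equilibria of $V$ can acquire spurious fixed points for $t\in(0,1)$ whenever the vectors $f^1_V(p)-p$ and $f^2_V(p)-p$ point in opposite directions. Hence a naive linear homotopy need not preserve the zero set, and the triad-homotopy must be constructed with more care.

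My plan is to connect $d^1$ and $d^2$ through the displacement $d_\Phi$ of the GPS map $\Phi_{V^s}$ from Section \ref{polytope-form index}. Since triad-homotopy is an equivalence relation, it suffices to show that for an arbitrary Nash map $f$ the displacement $d_f$ is triad-homotopic to $d_\Phi$. I would build this homotopy using Lemma \ref{ST}: the homeomorphism $\theta^s:\mathcal{E}^s\to\mathbb{R}^{ND}$ pins down canonical coordinates transverse to $\mathcal{E}^s$, and the extension to the one-point compactification of the homotopy between $\text{proj}_{\mathbb{R}^{ND}}\circ(\theta^s)^{-1}$ and the identity supplies the properness ``room'' needed to deform a general Nash displacement onto the canonical one. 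Concretely: (i) define an intermediate homotopy $H_t(V,p) = (1-t)(p - f_V(p)) + t(p - r(p + V^{s}(p)))$ that slides the value $f_V(p)$ along the segment to the closest-point retract $r(p+V^{s}(p))$ used by the GPS map; (ii) verify that the slide preserves the zero set using the variational inequality \eqref{closestpoint}, which characterizes $p$ as a Nash equilibrium precisely when $p = r(p + V^{s}(p))$; and (iii) control the behavior at infinity by repeating the properness estimate from the proof of Lemma \ref{ST} (cf.\ \eqref{structure equation}).

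The main obstacle I expect is step (ii). If $H_t(V,p)=0$ for some $t\in(0,1)$, then $p$ lies in the convex hull of $f_V(p)$ and $r(p+V^{s}(p))$; the delicate point is to show that this forces $p$ to be a Nash equilibrium. The variational inequality \eqref{closestpoint} gives one-sided information on $r(p+V^{s}(p))-p$, but alone this need not pin down $p$ as a Nash equilibrium unless $f_V(p)$ is also suitably constrained. A cleaner route is therefore probably to first replace $f$ by a sufficiently well-behaved approximation whose graph lies in an adequate homotopy neighborhood of $\text{Graph}(BR^V)$, invoking the Ault--Mertens corollary cited just before Definition \ref{def index}, and only then to perform the slide toward the GPS map. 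After such a reduction both endpoints are (approximate) selections of $BR^V$, and verifying zero-set preservation along the slide reduces to the elementary fact that any convex combination of best replies to $p_{-n}$ is still a best reply to $p_{-n}$.

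Once the triad-homotopy $d^1\simeq d^2$ is in hand, the consequence $\text{ind}(Q,f^1)=\text{ind}(Q,f^2)$ is immediate. By functoriality of singular homology, triad-homotopic maps induce the same homomorphism $H_{ND}(W, W\setminus(\{V\}\times Q)) \to H_{ND}(J_0, J_0\setminus\{0\})$ on any sufficiently small neighborhood $W$ of $\{V\}\times Q$, and by Definition \ref{def index} this homomorphism is exactly what computes the Nash-map index.
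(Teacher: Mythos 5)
You correctly diagnose that the straight-line homotopy between $d^1$ and $d^2$ can acquire spurious zeros, and your final paragraph (triad homotopy $\Rightarrow$ equal indices) is fine. But the repair you propose has two genuine gaps. First, the step ``replace $f$ by an approximation whose graph lies in an adequate neighborhood of $\text{Graph}(BR^V)$'' is circular: an arbitrary Nash map (e.g.\ Nash's own map from \cite{JN1951}) need not be anywhere near the best-reply correspondence, so justifying this replacement requires exhibiting a triad homotopy from $d_f$ to the displacement of such an approximation --- which is precisely the problem the lemma is solving. Second, even after that reduction, the slide toward the GPS map does not preserve the zero set. If $H_t(V,p)=0$ then $p=(1-t)f_V(p)+t\,r(p+V^{s}(p))$; the convexity of $BR^V_n(p_{-n})$ would rescue you only if both $f_V(p)$ and $r(p+V^{s}(p))$ were \emph{exact} best replies to $p$, but neither is: the GPS retract is a best reply only in the $\lambda\to\infty$ limit (Lemma \ref{auxclaim}), and a map with graph in a neighborhood of $\text{Graph}(BR^V)$ has fixed points that are merely approximate equilibria. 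Since a homotopy of the stated triads must keep the zero set equal to $\mathcal{E}^s$ at \emph{every} stage, these extra zeros are fatal.

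The paper's proof uses an idea your proposal does not touch, and it is the reason the domain is enlarged from $P^s$ to $\Delta^{\e}$: on $\mathbb{R}^{ND}\times\partial\Delta^{\e}$ every Nash map takes values in $P^s$, which is disjoint from $\partial\Delta^{\e}$, so there the \emph{linear} homotopy between $d^1$ and $d^2$ trivially avoids $0$ (Claim \ref{d1-d2}). The structure theorem's reparametrization $(\bar V^s,g,\s)\mapsto(\bar V^s,z,\s)$ straightens $\mathcal{E}^s$ into $E_0\times R$, and radial projection away from $r(z)$ gives a deformation retraction of $(\mathbb{R}^{ND}\times\Delta^{\e})\setminus\mathcal{E}^s$ onto $\mathbb{R}^{ND}\times\partial\Delta^{\e}$ (Claim \ref{def ret}); composing transfers the boundary homotopy to the whole complement of $\mathcal{E}^s$, and a Urysohn function $\alpha$ with $\alpha^{-1}(0)=\mathcal{E}^s$ patches it across $\mathcal{E}^s$ by rescaling. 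If you want to salvage your outline, you would need to supply an argument of this kind; the GPS map plays no special role in the paper's proof of this lemma (it enters only later, in Claims \ref{eq index and deg} and \ref{eqindex}).
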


\begin{proof}The proof of this Lemma is obtained from a series of claims. 

\begin{claim}\label{d1-d2}
$d^1 : \mathbb{R}^{ND} \times \partial \Delta^{\e} \rightarrow J_0 \setminus \{0\}$ is homotopic to $d^2: \mathbb{R}^{ND} \times \partial \Delta^{\e} \rightarrow J_0 \setminus \{0\}$. 
\end{claim}

\begin{proof}
Since $f^1$ and $f^2$ map $\mathbb{R}^{ND} \times \partial \Delta^{\e}$ into $P^s$, $d^1$ and $d^2$ are homotopic via the linear homotopy. 
\end{proof}

\begin{claim}\label{def ret}
$\Re^{ND} \times \partial \Delta^{\e}$ is a deformation retract of $(\mathbb{R}^{ND} \times \Delta^{\e}) - \mathcal{E}^s$
\end{claim}

\begin{proof}As in the structure theorem proved in Theorem \ref{ST}, we reparametrize the space of games as $V^s = (\bar{V}^s, g)$. Let $E_0 \subset \Re^{ND}$ be the linear subspace of payoffs vectors containing $\bar{V}^s$. Then $\Re^{ND} = E_0 \times \mathbb{R}^{m}$, where $m = d_1 + ... + d_n$.  Define the function $h$ from $\Re^{ND} \times \Delta^{\e}$ to itself by $h(\bar{V}^s, g, \s) = (\bar{V}^s ,z,\s)$, where $$ z_{nx_n} = \s_{nx_n} + V^s_n(x_n, \s_{-n}),$$

 where we maintain the notation of subsection \ref{polytope-form degree}. It easily follows that $h$ is a homeomorphism that maps $\Re^{ND} \times \partial \Delta^{\e}$ onto itself. Let $r: \mathbb{R}^{m} \rightarrow P^{s}$ be the nearest-point retraction. Denoting $R$ the graph of $r$, we have that $h(\mathcal{E}^s) = E_0 \times R$. It is enough to prove therefore that $\Re^{ND} \times \partial \Delta^{\e}$ is a deformation retract of $(\Re^{ND} \times \Delta^{\e}) \setminus (E_0 \times R)$. We can construct a retraction $\psi$ of $(\Re^{ND} \times \Delta^{\e}) \setminus (E_0 \times R)$ onto $\Re^{ND} \times \partial \Delta^{\e}$ as follows. First, given a pair $(z,\s) \in (\mathbb{R}^{m} \times \Delta^{\e}) \setminus R$, let $\eta (z,\s)$ be the unique point in $\partial \Delta^{\e}$ that lies on the ray emanating from $r(z)$ and passing through $\s$. Then define $\psi(\bar{V}^s,z,\s) = (\bar{V}^s,z,\eta(z,\s))$. The map $\psi$ is easily seen to be a retraction. Let $i_R: \Re^{ND} \times \partial \Delta^{\e} \rightarrow (\Re^{ND} \times \Delta^{\e}) \setminus (E_0 \times R)$ be the inclusion map. Then $i_R \circ \psi$ is homotopic to the identity map using the linear homotopy. Therefore, $\Re^{ND} \times \partial \Delta^{\e}$ is a deformation retract of $(\Re^{ND} \times \Delta^{\e}) \setminus (E_0 \times R)$.\end{proof}

\begin{claim}\label{homotopyclaim}
$d^1: (\Re^{ND} \times \Delta^\e) \setminus \mathcal{E}^s \rightarrow J_0 \setminus \{0\}$ is homotopic to $d^2: (\Re^{ND} \times \Delta^\e) \setminus \mathcal{E}^s \rightarrow J_0 \setminus \{0\}$.
\end{claim}

\begin{proof}Let $\text{id}$ be the identity map on $(\Re^{ND} \times \Delta^\e) \setminus \mathcal{E}^s$ and let $j_{\mathcal{E}^s}: \Re^{ND} \times \partial \Delta^\e \subset (\Re^{ND} \times \Delta^\e) \setminus \mathcal{E}^s$. By Claim \ref{def ret}, there exists a retraction $\phi$ from $(\Re^{ND} \times \Delta^\e) \setminus \mathcal{E}^s$ to $\Re^{ND} \times \partial \Delta^\e$ such that $\text{id}$ is homotopic to $j_{\mathcal{E}^s} \circ \phi$. Therefore, for $i = 1,2$, $(d^i \circ \text{id})$ is homotopic to $(d^i \circ j_{\mathcal{E}^s} \circ \phi): (\Re^{ND} \times \Delta^\e) \setminus \mathcal{E}^s  \rightarrow J_0 \setminus \{0\}$. By Claim \ref{d1-d2}, the restrictions of $d^1$ and $d^2$ to $\Re^{ND} \times \partial \Delta^\e$ are homotopic. Therefore, $d^1$ is homotopic to $d^2: (\Re^{ND} \times \Delta^\e) \setminus \mathcal{E}^s \rightarrow J_0 \setminus \{0\}$. \end{proof}

We now construct the homotopy of Lemma \ref{invariance of NM index}. Let $\Phi$ be the homotopy of Claim \ref{homotopyclaim} between the restrictions of $d^1$ and $d^2$ to $(\Re^{ND} \times \Delta^\e) \setminus \mathcal{E}^s$. It is readily checked from the constructions above that $\Phi((\Re^{ND} \times \Delta^\e) \setminus \mathcal{E}^s \times [0,1])$ is a bounded subset of $J_0$. By Urysohn's Lemma, there exists a continuous function $\alpha: \Re^{ND} \times \Delta^\e \rightarrow [0,1]$ such that $\alpha^{-1}(0) = \mathcal{E}^s$. Define then $G: (\Re^{ND} \times \Delta) \setminus \mathcal{E}^s \times [0,1] \rightarrow J_0 \setminus \{0\}$ by $G(x,t) = \alpha(x)\Phi(x,t)$. The image of $\Phi$ being bounded, $G$ has a continuous extension to a map from $\Re^{ND} \times \Delta^\e$ to $J_0$ that maps $\mathcal{E}^s$ to $0$. The result then follows from the observation that for $i =1,2$, $d^i$ is homotopic to $G(\cdot, i-1)$.\end{proof}

\begin{proof}[Proof of Proposition \ref{inv of nashmap index}]
Let $P^s_n$ be a standard polytope such that there exists an affine bijection $e_n: P_n \rightarrow P^s_n$ and let $e \equiv \times_n e_n$. Let $T_n : A(P) \rightarrow \mathbb{R}^{D}$ be defined from $e$ as in subsection \ref{polytope-form degree} and $T \equiv \times_n T_n$. Let $f'_i \equiv e \circ f_i \circ (T^{-1} \times e^{-1}): \Re^{ND} \times P^s \rightarrow P^s, i = 1,2$. Then $f'_i$ is a Nash-map and Lemma \ref{invariance of NM index} shows that the indices of equilibrium component $e(Q)$ according to $f'_1$ and $f'_2$ are the same. Let $V^s$ be a standard polytope-form game such that $T^{-1}(V^s) = V$. Considering the restriction $(f'_i)_{V^s}: P^s \rightarrow P^s$ we have that  $(f'_i)_{V^s} = e \circ (f_i)_{V} \circ e^{-1}$, by definition. Now define $h_i   \equiv (f_i)_{V} \circ e^{-1}$. The commutativity property of the index (see \cite{D1972}, Chapter VII, Theorem 5.14) shows that the fixed-point sets of $h_i \circ e$ and $e \circ h_i$ are homeomorphic and the index of each fixed-point component is the same under these two maps.  Since $e \circ h_i = (f'_i)_{V^s}$ and $h_i \circ e = (f_i)_{V}$, this shows that the index of $Q$ under $(f_i)_{V}$ is the same as the index of $e(Q)$ under $(f'_i)_{V^{s}}$. Since $i$ is arbitrary,  it follows therefore that $\text{ind}(Q,f_1) = \text{ind}(Q, f_2)$.\end{proof}

\subsection{Proof of Proposition \ref{identicaltheories}}The proof is a direct consequence of four claims, which we now prove. The notation of subsection \ref{equivdegindtheories} is maintained: let $e_n: P_n \to P^s_n$ be an affine isomorphism and $T$ defined from $e$ as in subsection \ref{polytope-form degree}. Recall the definition of the map $e^{PF}: \E^{PF} \to \E^s$, defined from $T$ and $e$ from the proof of Proposition \ref{NSST}. Recall $\theta^s: \E^s \to \Re^{ND}$ is the homeomorphism from Lemma \ref{ST}; $\theta^{PF}: \E^{PF} \to \times_n A_n(P)$ is the homeomorphism of Proposition \ref{NSST}. 

\begin{claim}\label{standarddeg}
The following equation holds: $\text{deg}^{PF}_{V}(Q) = \text{deg}^{PF}_{T(V)}(e(Q))$.
\end{claim}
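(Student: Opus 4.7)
The plan is to exploit the commutative diagram relating the two equilibrium graphs. By construction of $e^{PF}$, we have the identity $\text{proj}_{\mathbb{R}^{ND}} \circ e^{PF} = T \circ \text{proj}_V$ on $\mathcal{E}^{PF}$, and the two structure homeomorphisms satisfy $\theta^{PF} = T^{-1} \circ \theta^s \circ e^{PF}$ (both observed in the proof of Proposition \ref{NSST}). These two identities, together with the functoriality of singular homology, will deliver the result.

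First I would choose an open neighborhood $U \subset \mathcal{E}^{PF}$ of $(V,Q)$ as in Definition \ref{def degree}, so that $\text{cl}_{\mathcal{E}^{PF}}(U)$ contains no pair $(V,p)$ not already in $(V,Q)$. Because $e^{PF}$ is a homeomorphism, $e^{PF}(U)$ is an analogous open neighborhood of $(T(V), e(Q))$ in $\mathcal{E}^s$, suitable for computing $\text{deg}^{PF}_{T(V)}(e(Q))$. The commutativity $T \circ \text{proj}_V = \text{proj}_{\mathbb{R}^{ND}} \circ e^{PF}$ then yields a commutative square of maps of topological pairs (after restricting to the appropriate relative pairs $(U, U \setminus \{(V,Q)\})$ and $(e^{PF}(U), e^{PF}(U) \setminus \{(T(V),e(Q))\})$, and using the one-point compactifications on the target), which induces a commutative square in singular homology in dimension $ND$.

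Next I would verify that $(e^{PF})_*$ carries the orientation class $\mathcal{O}^{PF}_{(V,Q)}$ (the class defining the sign convention for $\text{deg}^{PF}_V(Q)$) to the analogous orientation class $\mathcal{O}^{s}_{(T(V),e(Q))}$ on $e^{PF}(U)$ (the class for $\text{deg}^{PF}_{T(V)}(e(Q))$). This is where the identity $\theta^{PF} = T^{-1} \circ \theta^s \circ e^{PF}$ enters: chasing the definition of the two orientation classes through this identity shows that both are ultimately induced by the fundamental class of $T(\theta^{PF}(V,Q)) = \theta^s(T(V), e(Q))$ in $\mathbb{R}^{ND}$, with the reference orientation shared via $T$. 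Given this, applying the induced homology square to $\mathcal{O}^{PF}_{(V,Q)}$ and using $T_*(\mathcal{O}_V) = \mathcal{O}_{T(V)}$ (true once the orientations on $\times_n A(P)$ and $\mathbb{R}^{ND}$ are matched through the linear isomorphism $T$, which is the implicit convention of the setup) gives $\text{deg}^{PF}_V(Q)\cdot \mathcal{O}_{T(V)} = \text{deg}^{PF}_{T(V)}(e(Q))\cdot \mathcal{O}_{T(V)}$, hence the equality of the two integers.

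The main obstacle is the orientation bookkeeping: one must confirm that $(e^{PF})_*$ matches the orientation class defined via $\theta^{PF}$ on $U$ with the one defined via $\theta^s$ on $e^{PF}(U)$, and that $T_*$ intertwines the fundamental classes at $V$ and $T(V)$ without a sign change. Intuitively, standardization is a pure reparametrization, so the degree should not care; rigorously, this is captured by the fact that $\theta^{PF}$ and $\theta^s$ differ only by the linear (hence orientation-preserving, under the chosen convention) map $T$ and the reparametrization $e^{PF}$, and the definitions of both orientation classes are set up so that these sign conventions cancel.
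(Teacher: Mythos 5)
Your proposal is correct and takes essentially the same route as the paper's proof: both rely on the identities $T \circ \text{proj}_V = \text{proj}_{\Re^{ND}} \circ e^{PF}$ and $\theta^{PF} = T^{-1}\circ \theta^s \circ e^{PF}$, orient $\overline{\times_n A(P)}$ via the linear isomorphism $T$, show that $(e^{PF})_*$ carries the orientation class for $(V,Q)$ to that for $(T(V),e(Q))$, and then conclude from the induced commutative square in singular homology.
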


\begin{proof}Note first that $e(Q) \subset P^s$ is an equilibrium component of $T(V)$. Let $W$ be an open neighborhood in the graph $\mathcal{E}^{PF}$ of $K \equiv \{V\} \times Q$ such that the closure $\text{cl}(W)$ in $\E^{PF}$ contains no other pair $(V, p)$ besides those in $K$. Then it follows by definition that the degree of $Q$ with respect to $V$ is the local degree of the mapping $\text{proj}_V|_{W}: W \rightarrow \overline{\times_nA(P)} $ over $V$. Let $A \equiv e^{PF}(W)$, which is open because $e^{PF}$ is a homeomorphism. Notice that $K^s \equiv \{T(V)\} \times e(Q)$ is a compact subset of $A$ such that the closure $\text{cl}(A)$ w.r.t. to $\E^s$ contains no pair $(T(V), \s) \in \mathcal{E}^s$ in its boundary. Therefore, the degree of $e(Q)$ of game $T(V)$ is the local degree of  $\text{proj}_{\Re^{ND}}|_{A}: A \rightarrow \mathbb{S}^{ND}$ over $T(V)$. 

Let $A^s_z = \theta^s(A)$, $W_z = \theta^{PF} (W)$, $K^s_z = \theta^s(K^s)$ and $K_z = \theta^{PF}(K)$. Since $T$ is a homeomorphism, we can orient $\overline{\times_nA(P)}$ according to $T$ from $\mathbb{S}^{ND}$. This gives that the first vertical homomorphism in the diagram below has degree $+1$, by definition. Using the long exact sequences of the pair to choose orientations for $(\overline{\times_nA(P)}, \overline{\times_nA(P)} \setminus \{V\}$ and $(\mathbb{S}^{ND}, \mathbb{S}^{ND} \setminus \{T(V)\})$, the natural property of the long exact sequence now implies that $T: (\overline{\times_nA(P)}, \overline{\times_nA(P)} \setminus \{V\}) \to  (\mathbb{S}^{ND}, \mathbb{S}^{ND} \setminus \{T(V)\})$ has degree $+1$. Observe now that the horizontal sequences of the diagram below come from the definition of the local degree. Let $\mu$ be the element of $H_{ND}(A, A \setminus K^s)$ obtained as the image of the generator of $H_{ND}(\mathbb{S}^{ND})$ under the horizontal sequence of homomorphisms, and let $\nu$ be the analogous element of $H_{ND}(W, W \setminus K)$. The map $j$ denotes the inclusion. We show $(e^{PF})_* \nu = \mu$. For that, it is sufficient to show that the diagram below commutes: the first square of the diagram commutes by naturality of the long exact sequence; the second square commutes by functoriality of homology (since the isomorphisms are inclusions), and the third square commutes because of functoriality of homology and $(e^{PF})^{-1} \circ (\theta^s){-1} = (\theta^{PF})^{-1} \circ T^{-1}$. This gives that $(e^{PF})_* \nu = \mu$. 
\medskip
\begin{tikzcd}
 H_{ND}(\mathbb{S}^{ND}) \arrow[d, " (T)^{-1}_{*}"] \arrow[r, "j_*"] &  H_{ND}(\mathbb{S}^{ND}, \mathbb{S}^{ND} \setminus K^s_z) \arrow[r, "\simeq"] \arrow[d, "(T)^{-1}_{*}"] & H_{ND}(W, W \setminus K^s_z) \arrow[d, "(T)^{-1}_{*}"]  \arrow[r, "(\theta^s)^{-1}_*"] & H_{ND}(A, A \setminus K^s) \arrow[d, "(e^{PF})^{-1}_{*}"]  \\
 H_{ND}(\overline{\times_nA(P)}) \arrow[r, "j_*"]  & H_{ND}(\overline{\times_nA(P)}, \overline{\times_nA(P)} \setminus K_z) \arrow[r, "\simeq"] & H_{ND}(W_z, W_z \setminus K_z) \arrow[r, "(\theta^{PF})^{-1}_{*}"] & H_{ND}(W, W \setminus K)
\end{tikzcd}
\medskip
Notice now that $T \circ \text{proj}_{V}|_{W} = \text{proj}_{\Re^{ND}}|_{A} \circ e^{PF}|_{W}$. Hence the diagram below commutes by functoriality of homology. 
\medskip
\begin{center}
\[
\begin{tikzcd}
 H_{ND}(A, A \setminus K^s) \arrow{r}{(\text{proj}_{\Re^{ND}})_{*}} \arrow{d}{(e^{PF})^{-1}_{*}} & H_{ND}(\mathbb{S}^{ND}, \mathbb{S}^{ND} \setminus \{T(V)\}) \arrow{d}{(T)^{-1}_{*}} \\
  H_{ND}(W, W \setminus K)  \arrow{r}{(\text{proj}_{V})_{*}} & H_{ND}(\overline{\times_nA(P)}, \overline{\times_nA(P)} \setminus \{V\})
\end{tikzcd}
\]
\end{center}
Notice now that map $T^{-1}: (\mathbb{S}^{ND}, \mathbb{S}^{ND} \setminus \{T(V)\}) \to (\overline{\times_n A(P)}, \overline{\times_n A(P)} \setminus \{V\})$ has degree $+1$, by construction. We have showed the homology-induced map from $e^{PF}$ sends $\nu$ to $\mu$. Therefore, $\text{deg}_V(Q) = \text{deg}_{T(V)}(e(Q))$.\end{proof}

\begin{claim}\label{eq index and deg}
The following equation holds: $\text{deg}^{PF}_{T(V)}(e(Q)) = \text{ind}_{T(V)}(e(Q))$.
\end{claim}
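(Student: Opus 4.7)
The plan is to realize $\text{deg}^{PF}_{T(V)}(e(Q))$ as the fixed-point index of a specific Nash map at $e(Q)$, and then invoke Proposition \ref{inv of nashmap index} to identify that index with $\text{ind}_{T(V)}(e(Q))$. Write $V^{s} \equiv T(V) = \bar V^{s} \oplus g$ in its canonical decomposition and recall the explicit inverse constructed in the proof of Lemma \ref{ST}:
\[
(\theta^{s})^{-1}(\bar V^{s}, z) = \bigl(\bar V^{s}, \tilde g(z), r(z)\bigr), \quad \tilde g_{n}(z) \equiv z_{n} - r_{n}(z_{n}) - \bar V^{s,n}(r_{-n}(z_{-n})).
\]
Composing with $\text{proj}_{\Re^{ND}}$ yields the map $(\bar V^{s}, z) \mapsto (\bar V^{s}, \tilde g(z))$, which is the identity in the first factor. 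My first step is therefore to argue that on any sufficiently small neighborhood $W$ of $\theta^{s}(V^{s}, e(Q))$, the local degree of $\text{proj}_{\Re^{ND}} \circ (\theta^{s})^{-1}$ over $V^{s}$ factors as the local degree, over $g$, of $\tilde g$ restricted to the slice $\{\bar V^{s}\} \times \Re^{m}$ (with $m = \sum_{n} d_{n}$).

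Next I would identify this $z$-local degree with a fixed-point index. A direct computation with the commuted GPS map gives $\Psi_{V^{s}}(z)_{n} = r_{n}(z_{n}) + \bar V^{s,n}(r_{-n}(z_{-n})) + g_{n}$, so $z - \Psi_{V^{s}}(z) = \tilde g(z) - g$. Translating the target of $\tilde g$ by $-g$, the local degree of $\tilde g$ over $g$ equals the local degree over $0$ of the displacement $d_{\Psi_{V^{s}}}$ on a neighborhood of $\text{w}_{V^{s}}(e(Q))$, which is by definition the fixed-point index $\text{ind}\bigl(\text{w}_{V^{s}}(e(Q)), \Psi_{V^{s}}\bigr)$. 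Applying the commutativity property of the fixed-point index (\cite{D1972}, Chapter VII, Theorem 5.14) to the factorization $\Phi_{V^{s}} = r \circ \text{w}_{V^{s}}$ and $\Psi_{V^{s}} = \text{w}_{V^{s}} \circ r$, I obtain $\text{ind}\bigl(\text{w}_{V^{s}}(e(Q)), \Psi_{V^{s}}\bigr) = \text{ind}(e(Q), \Phi_{V^{s}})$. Since $\Phi$ is a Nash map (subsection \ref{polytope-form index}), Proposition \ref{inv of nashmap index} identifies this last quantity with $\text{ind}_{T(V)}(e(Q))$, giving the desired equality.

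The main obstacle I anticipate is the orientation-and-excision bookkeeping behind the factorization of the local degree in the very first step. One must orient the relative homology groups of $\overline{\Re^{ND}}$ at $V^{s}$ compatibly with the product decomposition $\Re^{ND} \simeq E_{0} \times \Re^{m}$ corresponding to $\bar V^{s} \oplus g$, and simultaneously compatibly with the fundamental class on $J_{0}$ used to define the fixed-point index, so that the identity factor on $E_{0}$ contributes a $+1$ rather than a sign. Once this orientation convention is pinned down via the same type of long-exact-sequence diagram chase used in Claim \ref{standarddeg}, the rest of the argument reduces to a routine verification on induced homomorphisms of pairs.
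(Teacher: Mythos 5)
Your proposal follows essentially the same route as the paper: reduce $\text{deg}^{PF}_{T(V)}(e(Q))$ to the local degree of the projection over the restricted perturbation class $g \in \times_n \Re^{d_n}$, identify that local degree (after translating by $g$) with the fixed-point index of the commuted GPS map $\Psi_{V^s}$, pass to $\Phi_{V^s}$ by the commutativity property of the index, and conclude by invariance of the Nash-map index. Steps two through four of your sketch are correct and match the paper's Claim almost line for line.

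The one place where your argument has a real gap is the first step, and you have misdiagnosed where the difficulty lies. The composite $\text{proj}_{\Re^{ND}}\circ(\theta^s)^{-1}$ is the map $(\bar V^s, z)\mapsto(\bar V^s, \tilde g(\bar V^s,z))$ with $\tilde g_n(\bar V^s,z)=z_n-r_n(z_n)-\bar V^{s,n}(r_{-n}(z_{-n}))$; because $\tilde g$ depends on $\bar V^s$ through the second term, this map is only \emph{triangular}, not a product of the identity on $E_0$ with a map of the $z$-slice. The product (multiplicativity) property of the local degree therefore does not apply directly, and the issue is not merely one of orienting the factors compatibly. The paper handles this by choosing a product neighborhood $U_1\times U_2$ of the component (using Proposition 5.5, Ch.~IV of \cite{D1972} to localize), then running the linear homotopy $H(t,\tilde V',z')=(\tilde V',\,t(\theta^s_2)^{-1}(\tilde V',z')+(1-t)(\theta^s_2)^{-1}(\tilde V,z'))$ which freezes the first argument of the second coordinate at the base game; only after this deformation is the map split, so that Theorem 9.7 of \cite{RB1993} yields that its local degree equals the local degree of the slice map $\text{proj}_g\circ\tilde\Theta^{-1}|_{U_2}$ over $g$. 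One must also check that this homotopy avoids the target value on the boundary of $U_1\times U_2$, which is what makes the homotopy invariance of the local degree applicable. Supplying this deformation argument would close the gap; without it, the claimed factorization of the local degree is not justified.
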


\begin{proof}Denote $V^s \equiv T(V)$. Let $g \in \times_n \mathbb{R}^{d_n}$.  Consider the polytope-form game $V^s \oplus g$ whose payoff function is defined by $$(V^s \oplus g)_ n(\sigma) = \sigma_n \cdot V^{s,n}(\sigma_{-n}) + \sigma_n \cdot g_n,$$ and player $n$ strategy set is $P^s_n$. Let $\mathcal{E}_{V^s} = \{(g,\sigma)  \in \times_n\mathbb{R}^{d_n} \times P^s \mid \sigma$ is an equilibrium of $V^s \bigoplus g \}$ be the graph of equilibria over the restricted class of perturbations $g$. Define $\Theta: \mathcal{E}_{V^s}\rightarrow \times_n\mathbb{R}^{d_n}$ by its coordinate functions $\Theta_n(g,\sigma) = \sigma_n + V^{s,n}(\sigma_{-n}) + g_n \in \mathbb{R}^{d_n}$ and $\text{proj}_{g}: \mathcal{E}_{V^s} \rightarrow \times_n \mathbb{R}^{d_n}$,  be the projection over the first coordinate.  Then $\Theta$ is a homeomorphism and $\Theta^{-1}(z) = (d_{\Psi_{V^s}}(z),r(z))$; note that $\text{proj}_g \circ \Theta^{-1} = id - w_{V^s} \circ r = id - \Psi_{V^s} = d_{\Psi_{V^s}}$. (Recall that $\Psi_{V^s}$ is the commuted $GPS$-map defined in subsection \ref{polytope-form index}). 

The map $\Theta$ allows us to provide an orientation to the one-point compactification $\overline{\mathcal{E}_{V^s}}$ according to which the degree of $\text{proj}_{g}: \mathcal{E}_{V^s} \rightarrow \times_n \mathbb{R}^{d_n}$ is $+1$. 
Let $e(Q)$ be an equilibrium component of $V^s$ and consider $U$ an open neighborhood in $\mathcal{E}_{V^s}$ containing $\{0\} \times e(Q)$ and no other pair $(0,\sigma)$ in the boundary of $U$. Then the local degree of $\text{proj}_{g}|_{U}$ over $\{0\}$ is well defined. Letting $U_z =\Theta(U)$, since $(\text{proj}_{g}|_{U} \circ \Theta^{-1})|_{U_z} = d_{\Psi_{V^s}}|_{U_z}$, the local degree of $(\text{proj}_{g}|_{U} \circ \Theta^{-1})|_{U_z}$ over $\{0\}$ equals ind$(e(Q), \Psi_{V^s}) = \text{ind}_{V^s}(e(Q))$.

Now we show that the local degree of $\text{proj}_{g}|_{U}$ over $0$ equals the degree of $e(Q)$ w.r.t. $V^s$. Recalling subsection \ref{polytope-form degree}, we can decompose $V^s$ and write $V^s = (\tilde{V}^s, g)$. Let $\mathcal{E}_{\tilde{V^s}}=\{ (g^{'}, \sigma) \mid ((\tilde{V},g^{'}), \sigma) \in \mathcal{E}^s \}$. From $\theta^s$ we can define another homeomorphism $\tilde{\Theta}: \mathcal{E}_{\tilde{V}^s} \rightarrow \times_n\mathbb{R}^{d_n} $ by $\tilde{\Theta}(g^{'},\sigma) = z$, where $z$ satisfies $\theta^s(\tilde{V}^s, g^{'}, \sigma) = (\tilde{V}^s,z)$. 

Let $\tilde U$ be an open neighborhood in $\mathcal{E}^s$ of $(\tilde{V}^s,g,e(Q)) \in \mathcal{E}^s$ such that $\text{cl}_{\E^s}(\tilde U)$ has no other point $(\tilde{V}^s,g,p)$ besides those in $(\tilde{V}^s,g,e(Q))$. Let $\tilde{U}_z \equiv \theta^s(\tilde U)$. The local degree of $\text{proj}_{\Re^{ND}} \circ (\theta^s)^{-1}|_{\tilde{U}_z}$ over $ V^s = \tilde{V}^s \oplus g$ is then well defined. There exist $U_1$ an open neighborhood of $\tilde{V}^s$  and $U_{2}$ an open neighborhood $\tilde{\Theta}(g,e(Q))$ such that $U_1 \times U_2 \subset \tilde{U}_z$. The local degree of $\text{proj}_{\Re^{ND}} \circ (\theta^s)^{-1}|_{\tilde{U}_z}$ over $V^s$ is equal to the local degree of $\text{proj}_{\Re^{ND}} \circ (\theta^s)^{-1}|_{U_1 \times U_2}: U_1 \times U_2 \rightarrow \mathbb{S}^{ND}$ over $V^s$ - according to Proposition 5.5, Chapter IV in \cite{D1972}. Consider the map $(\text{id} \times \text{proj}_{g} \circ \tilde{\Theta}^{-1})|_{U_1 \times U_2}: U_1 \times U_2 \rightarrow \mathbb{S}^{ND}$ defined by $(\text{id} \times \text{proj}_{g} \circ \tilde{\Theta}^{-1})(\tilde{V}^s, z) = (\tilde{V}^s, \text{proj}_{g} \circ (\tilde{\Theta})^{-1}(z))$. We have therefore that $\text{proj}_{\Re^{ND}} \circ (\theta^s)^{-1}(\tilde{V}^{'}, z^{'}) = (\tilde{V}^{'}, (\theta^s_2)^{-1}(\tilde{V}^{'}, z^{'}))$, where $(\theta^s)^{-1}(\tilde{V}^{'},z') = (\tilde{V}^{'}, (\theta^s_2)^{-1}(\tilde{V}^{'},z'), r(z'))$ and similarly $(\text{id} \times \text{proj}_{g} \circ (\tilde{\Theta})^{-1}))(\tilde{V}^{'},z^{'}) = (\tilde{V}^{'}, (\theta^s_2)^{-1}(\tilde{V}^s, z^{'}))$. Note that in the above expression of $\text{id} \times \text{proj}_{g} \circ (\tilde{\Theta})^{-1}$ we have that the second coordinate function $(\theta^s_2)^{-1}$ fixes the argument $\tilde{V}^s$. 

Let $H: [0,1] \times U_1 \times U_2 \rightarrow \Re^{ND}$ be defined by $H(t, \tilde{V}^{'},z^{'})= (\tilde{V}^{'}, t(\theta^s_2)^{-1}(\tilde{V}^{'},z^{'}) + (1-t)(\theta^s_2){-1}(\tilde{V}, z^{'}))$. By the homotopy property of the degree (see \cite{RB1993}, Theorem 9.5), it follows that the local degree over $V^s$ of $\text{id} \times \text{proj}_{g} \circ (\tilde{\Theta})^{-1}|_{U_1 \times U_2}$ and that of $\text{proj}_{\Re^{ND}} \circ (\theta^s)^{-1}|_{U_1 \times U_2}$ is the same. Finally, Theorem 9.7 in \cite{RB1993} implies that the local degree of $\text{id} \times \text{proj}_{g} \circ \tilde{\Theta}^{-1}|_{U_1 \times U_2}$ equals the local degree of $\text{proj}_{g} \circ (\tilde{\Theta})^{-1}|_{U_2}$ over $g$. This proves that $\text{deg}_{V^s}(e(Q))$ is equal to the local degree of $\text{proj}_{g} \circ (\tilde{\Theta})^{-1}|_{U_2}$ over $g$.


We now finish the proof by showing that the local degree of $\text{proj}_{g} \circ \tilde{\Theta}^{-1}|_{U_2}$ over $g$ equals the local degree $\text{proj}_{g} \circ \Theta^{-1}|_{U_z}$ over $0$. This concludes the proof, since it immediately implies that deg$_{V^s}(e(Q)) = \text{ind}_{V^s}(e(Q))$.

Fix $W_2$ an open neighborhood of $\tilde{\Theta}(\{g\} \times e(Q))$ such that $W_2 \subset U_2$. Let $d^{'} = \text{proj}_{g} \circ \tilde{\Theta}^{-1}|_{W_2}$. We have that $\tilde{\Theta}(\{g\} \times e(Q)) = \Theta(\{0\} \times e(Q))$ and $d_{\Psi_{V^s}} = d^{'} - g$. Let $\frak{g}: \times_{n}\Re^{d_n} \rightarrow \times_n \Re^{d_n}$ be defined by $\frak{g}(z) = z-g$. Then $\frak{g}(d^{'}(x)) = d^{'}(x) - g = d_{\Psi_{V^s}}(x)$. Since the degree of $\frak{g}_*: H_{ND}(\Re^{ND}, \Re^{ND} - \{g\}) \rightarrow H_{ND}(\Re^{ND}, \Re^{ND} -\{0\})$ is +1, it follows that the local degree of $d_{\Psi_{V^s}}|_{W_2}$ over $0$ equals the local degree of $d^{'}|_{W_2}$ over $g$, which concludes the proof. \end{proof} 

\begin{claim}\label{eqindex}
The following equation holds: $\text{ind}_{BR^{T(V)}}(e(Q)) = \text{ind}_{T(V)}(e(Q))$.
\end{claim}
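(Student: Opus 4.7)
My plan is to build a one-parameter family of Nash maps that interpolates between the GPS map $\Phi_{V^s}$---which computes the Nash-map index $\text{ind}_{T(V)}(e(Q))$---and continuous approximations of the best-reply correspondence---which compute $\text{ind}_{BR^{T(V)}}(e(Q))$---and then to invoke homotopy invariance of the fixed-point index. Write $V^s \equiv T(V)$ and fix an open neighborhood $U\subset P^s$ of $e(Q)$ with $\text{cl}(U)\cap E(V^s) = e(Q)$. Since $\Phi_{V^s}$ is a Nash map, Proposition \ref{inv of nashmap index} identifies $\text{ind}_{T(V)}(e(Q))$ with the fixed-point index of the displacement of $\Phi_{V^s}|_U$.

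For each $\lambda > 0$, I would consider the rescaled GPS map $\Phi^\lambda_{V^s,n}(\s) \equiv r_n(\s_n + \lambda V^{s,n}(\s_{-n}))$. Applying the variational characterization \eqref{closestpoint} of $r_n$ exactly as in the definition of $\Phi_{V^s}$, one sees that the fixed-point set of $\Phi^\lambda_{V^s}$ equals the equilibrium set $E(V^s)$, independently of $\lambda>0$. Joint continuity of $(\lambda,\s)\mapsto\Phi^\lambda_{V^s}(\s)$ therefore supplies, for any $\Lambda\geq 1$, a homotopy of self-maps of $P^s$ whose set of fixed points in $\text{cl}(U)$ is the constant set $e(Q)\subset U$; homotopy invariance of the fixed-point index then yields
\[
\text{ind}(e(Q), \Phi^\Lambda_{V^s}) \;=\; \text{ind}(e(Q), \Phi_{V^s}) \;=\; \text{ind}_{T(V)}(e(Q)).
\]

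The critical step will be to verify that, for all sufficiently large $\Lambda$, the graph of $\Phi^\Lambda_{V^s}$ is contained in a prescribed adequate homotopy neighborhood $A$ of $\text{Graph}(BR^{V^s})$ for $e(Q)$; once that holds, $\Phi^\Lambda_{V^s}$ serves as a legitimate representative in the definition of the best-reply index, giving $\text{ind}(e(Q), \Phi^\Lambda_{V^s}) = \text{ind}_{BR^{V^s}}(e(Q))$ and hence the claim. I plan to establish this approximation by contradiction: were there sequences $\Lambda_k\to\infty$ and $\s^k\to\s^*$ with $\tau^k \equiv \Phi^{\Lambda_k}_{V^s}(\s^k)\to\tau^*$ but $(\s^k,\tau^k)\notin A$, then dividing the defining variational inequality for $r_n$ by $\Lambda_k$ would give
\[
\langle \s'_n - \tau^k_n,\, V^{s,n}(\s^k_{-n}) \rangle \;\leq\; \tfrac{1}{\Lambda_k}\langle \s'_n - \tau^k_n,\, \tau^k_n - \s^k_n \rangle \qquad \forall \s'_n \in P^s_n,
\]
whose right-hand side vanishes uniformly in $\s'_n$ by boundedness of $P^s$. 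Passing to the limit would force $\tau^*\in BR^{V^s}(\s^*)$, placing $(\s^*,\tau^*)$ inside the open set $A$ and contradicting $(\s^k,\tau^k)\notin A$. This uniform-approximation step is the main obstacle; the remaining pieces are straightforward consequences of Proposition \ref{inv of nashmap index} and the homotopy property of the fixed-point index.
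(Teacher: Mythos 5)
Your proposal is correct and follows the same overall skeleton as the paper's proof: both rescale the GPS map to $g^{\lambda}(\s)=\times_n r_n(\s_n+\lambda V^{s,n}(\s_{-n}))$, observe that its fixed-point set is $E(V^s)$ for every $\lambda>0$ so that the linear homotopy in $\lambda$ preserves fixed points and hence the index, and then show that for $\lambda$ large the graph of $g^{\lambda}$ lies in an adequate homotopy neighborhood of $\text{Graph}(BR^{V^s})$, at which point $g^{\lambda}$ computes the best-reply index. Where you genuinely diverge is in the proof of the approximation step. The paper proves a pointwise lemma (Lemma \ref{auxclaim}) via the normal fan of $P^s_n$: it analyzes which normal cones contain $V^{s,n}(\s_{-n})$, treats the interior and boundary cases separately, and then upgrades the pointwise statement to a uniform one by a contradiction argument with Hausdorff limits of graphs. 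You instead divide the variational inequality \eqref{closestpoint} characterizing $r_n(\s_n+\Lambda_k V^{s,n}(\s^k_{-n}))$ by $\Lambda_k$ and pass to the limit along a convergent subsequence (available by compactness of $P^s\times P^s$), obtaining directly that any limit point of $(\s^k,g^{\Lambda_k}(\s^k))$ lies in $\text{Graph}(BR^{V^s})$ and hence eventually in the open neighborhood $A$ --- a contradiction. Your route is more elementary and uniform from the start: it avoids the polyhedral geometry of normal cones entirely and gets the graph-convergence statement in one stroke, at the cost of not exhibiting the explicit vertex/face structure of $r_n(\s_n+\lambda V^{s,n}(\s_{-n}))$ that the paper's normal-fan analysis provides. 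Both arguments are complete; yours is arguably the cleaner proof of this particular claim.
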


\begin{proof}Let $V^s \equiv T(V)$, with standard polytope strategy-set for player $n$ equal to $P^s_n$ of dimension $d_n -1$. Let $\mathbb{V}(P^s_n)$ be the set of vertices of $P^s_n$ and let $N_{P^s_n}(v_n)$ be the normal cone to $P^s_n$ at $v_n \in \mathbb{V}_n(P^s_n)$, defined by $N_{P^s_n}(v_n) := \{ d \in \mathbb{R}^{d_n} | d \cdot (v_j - v_n) \leq 0, \forall v_j \in \mathbb{V}(P^s_n) \}$.  The union of the normal cones over the vertices equals $\mathbb{R}^{d_n}$ and induces a polyhedral subdivision of $\mathbb{R}^{d_n}$ called the \textit{normal fan} of the polytope $P^s_n$. The maximal-dimensional cells of this subdivision are the normal cones (see \cite{GZ2012}, p. 206). We prove next an auxiliary Lemma in order to provide a proof of Claim \ref{eqindex}.

\begin{Lemma}\label{auxclaim} Fix $\s \in P^s$ and $n \in \N$. There exists $\lambda_0 >0$ such that for each $\lambda > \lambda_0$ the nearest-point retraction $r_n(\s_n + \lambda V^{s,n}(\s_{-n})) \in BR^{V^s}_n(\s_{-n})$.\end{Lemma}

\begin{proof}[Proof of Lemma \ref{auxclaim}] Notice first that  $V^{s,n}(\s_{-n})$ is the gradient of the affine function $f_n: \mathbb{R}^{d_n} \rightarrow \mathbb{R}$ defined by $f_n(\s_n) =  \s_n \cdot V^{s,n}(\s_{-n})$. If $v_n \in \mathbb{V}_n(P^s_n)$ is a maximum for the problem $\max_{\s_n \in P^s_n}f_n(\s_n)$, $v_n$ can be characterized as follows:
\begin{equation} v_n \in \text{argmax}_{\s_n \in P^s_n}f_n(\s_n) = BR^{V^s}_n(\s_{-n}) \iff V^{s,n}(\s_{-n}) \in N_{P^s_n}(v_n) \end{equation}

 

Because the union of the normal cones at the vertices of $P^s_n$ is $\mathbb{R}^{d_n}$, there exists $\tilde{v} \in \mathbb{V}(P^s_n)$ such that  $V^{s,n}(\s_{-n}) \in N_{P^s_n}(\tilde{v})$. Assume first that $V^{s,n}(\s_{-n}) \in \text{int}(N_{P^s_n}(\tilde{v}))$. Then for $\lambda > 0$ sufficiently large $\frac{\s_n - \tilde{v}}{\lambda} + V^{s,n}(\s_{-n}) \in N_{P^s_n}(\tilde{v})$. This implies that $\lambda(\frac{\s_n - \tilde{v}}{\lambda} + V^{s,n}(\s_{-n})) = \s_n - \tilde{v} + \lambda V^{s,n}(\s_{-n}) \in N_{P^s_n}(\tilde{v})$, which implies by definition of the normal cone that 
\medskip
\begin{equation}
\langle \s_n + \lambda V^{s,n}(\s_{-n}) - \tilde{v}, p' - \tilde{v}  \rangle \leq 0, \forall p' \in P^s_n. 
\end{equation}
 \medskip

Therefore, $r_n(\s_n + \lambda V^{s,n}(\s_{-n})) = \tilde{v} \in BR^{V^s}_n(\s_{-n})$. Now, if $V^{s,n}(\s_{-n})$ is not in the interior of any cone, then it belongs to the intersection of some cones: assume therefore $V^{s,n}(\s_{-n}) \in \bigcap^{k_n}_{i=1} N_{P^s_n}(\tilde{v}_i)$. We want to show that for $\lambda >0$ sufficiently large $r_n(\s_n + \lambda V^{s,n}(\s_{-n})) = \sum^{k_n}_{i =1} \alpha_i \tilde{v}_i$, where $\alpha_i \geq 0$, $\sum^{k_n}_{i =1} \alpha_i = 1$, which implies that $r_n(\s_n + \lambda V^{s,n}(\s_{-n})) \in BR^{V^s}_{n}(\s_{-n})$. For that purpose, we state two properties which can be easily checked:

\begin{enumerate}

\item $\langle V^{s,n}(\s_{-n}), \tilde{v}_i - \tilde{v}_j \rangle = 0$, $i,j \in \{1,...,k_n\}$.

\item If $V^{s,n}(\s_{-n}) \notin N_{P^s_n}(v)$, then for all $\tilde{v} \in \{\tilde{v}_1,...,\tilde{v}_{k_n} \}$, it implies that $\langle V^{s,n}(\s_{-n}), \tilde{v} - v \rangle >0$.

\end{enumerate} 

We now finish the proof of the Lemma. Let $z^{\l}_n \equiv \s_n + \lambda V^{s,n}(\s_{-n})$.  Write $r_n(z^{\l}_n) = \sum_i \alpha^{\l}_i \tilde{v}_i + \sum_{t} \beta^{\l }_t v_t$, with $\alpha^{\l}_i \geq 0, \beta^{\l}_t \geq 0$ and $\sum_i \alpha^{\l}_i + \sum_{t} \beta^{\l}_t =1$. Let $\tilde{v} \in \{\tilde{v}_1,...,\tilde{v}_{k_n} \}$. Property (1) implies $\langle z^{\l}_n - r_n(z^{\l}_n), \tilde{v} - r_n(z^{\l}_n) \rangle = \langle \s_n - r_n(z^{\l}_n), \tilde{v} - r_n(z^{\l}_n) \rangle +  \lambda \sum_t\beta^{\l}_t \langle V^{s,n}(\s_{-n}), \tilde{v} - v_t \rangle$. The first term of the previous sum is bounded; since $\langle  z^{\l}_n - r_n(z^{\l}_n), \tilde{v} - r_n(z^{\l}_n) \rangle \leq 0$, property (2) now implies that for sufficiently large $\l$, $\b^{\l}_t =0, \forall t$. This shows that $r_n(z^{\l}_n) = \sum_i \alpha^{\l}_i \tilde{v}_i$, which concludes the proof of the Lemma. \end{proof}

We now conclude the proof of Claim \ref{eqindex}. We show that for large enough $\lambda$ the map $g^{\lambda}:P^s \rightarrow P^s$ defined by $g^{\lambda} \equiv \times_n (r_n \circ w^{\lambda}_n)$ with $w^{\lambda}_n (\s) = \s_n + \lambda V^{s,n}(\s_{-n})$ satisfies $\text{Graph}(g^{\lambda}) \subset O$, where $O$ is an adequate homotopy neighborhood of $\text{Graph}(BR^{V^s})$ for $e(Q)$ (recall the definition of the adequate homotopy neighborhood from subsection \ref{indextheorygeneral}).

Suppose by contradiction the claim is not true. Then there exists a sequence $\lambda_k \rightarrow +\infty$ as $k \rightarrow \infty$ such that $g^{k} \equiv g^{\lambda_k}$ satisfies $\text{Graph}(g^{k}) \cap O^{c} \neq \emptyset$ for all $k$. Since $P$ is compact and $g^{k}$ continuous, $(\text{Graph}(g^{k}))_{k \in \mathbb{N}}$ is a sequence of non-empty compact subsets of $P^s \times P^s$. This implies we can extract a convergent subsequence (in the Hausdorff-distance) of $(\text{Graph}(g^k))_{k \in \mathbb{N}}$ to a nonempty compact subset of $P^s \times P^s$. Passing to a convergent subsequence if necessary, we can assume that $\text{Graph}(g^{k})$ converges to a nonempty compact set $\mathfrak{F}$ in the Hausdorff-distance. It follows that $\mathfrak{F} \cap O^{c} \neq \emptyset$. Let $z \in \mathfrak{F} \cap O^{c}$. Consider $B_z$ an open neighborhood of $z$ that does not intersect $\text{Graph}(BR^{V^s})$. Therefore we have that Graph$(g^{k}) \cap B_z \neq \emptyset$ for sufficiently large $k$.\footnote{This follows from the characterization of the Hausdorff limit $\mathfrak{F}$ as the closed limit of the sequence \text{Graph}$(g^k)$. See \cite{AB2006}.} This implies that there exists an open set $U$ in $P^s$ such that,  for $k$ large enough, $\forall \s \in U, (\s, g^{k}(\s)) \in B_z$. 

Fix $\s \in U$. By Lemma \ref{auxclaim}, it follows that for large enough $k$, $r_n(\s_n + \lambda_kV^{s,n}(\s_{-n})) \in BR^{V^s}_n(\s_{-n}), \forall n \in \N$. Therefore, for $k$ large enough, we have that $g^{k}(\s) \in BR^{V^s}(\s)$. This implies that for $k$ sufficiently large $(\s, g^{k}(\s)) \in (B_z)^{c}$. Contradiction. 

Hence there exists $\lambda_0 > 0$ such that $\forall \lambda \geq \lambda_0$  we have $\text{Graph}(g^{\lambda}) \subset O$. Now define the homotopy $H: [0,1] \times P^s \rightarrow P^s$ such that $H(t, \cdot) = g^{1 + t(\lambda -1)}(\cdot)$. Notice that the polytope-form games denoted by $V^{s,1 + t(\lambda -1)}$ with payoffs given by $[1 + t(\lambda -1)]V^{s}_n$, for each player $n$, all have the same equilibria, which implies that their associated GPS-maps $g^{1 + t(\lambda -1)}$ all have the same fixed points. Therefore the homotopy  $H$ preserves fixed points. This implies that the indices of a component of equilibria under $g^{1}$ and $g^{\lambda}$ are identical, by the homotopy property of the index (Theorem 5.15, Chapter VII in \cite{D1972}). Since $\text{Graph}(g^{\lambda})$ is contained in the homotopy neighborhood $O$ of $\text{Graph}(BR^V)$, this implies that the index of $e(Q)$ under $g^1$  (the GPS-map of $V^s$) is identical to $\text{Ind}_{BR^{V^s}}(e(Q))$, which concludes the proof. \end{proof}

Recall for the next claim that $q^V$ is the reduction map from $V$ to $V'$.

\begin{claim}\label{BRindex}The following equation holds: $\text{ind}_{BR^V}(Q) = \text{ind}_{BR^{V'}}(q^V(Q)) $\end{claim}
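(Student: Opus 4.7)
My plan is to deduce the claim from the commutativity property of the fixed-point index (\cite{D1972}, Chapter VII, Theorem 5.14), applied to a natural factorization of a best-reply approximation on $P$ through $P'$. The first thing I will do is record the compatibility between the two best-reply correspondences. Because $q^V$ is surjective and $V_n = V'_n \circ q^V$ for each $n$, a direct check using \eqref{identcondition} yields
\[
BR^V_n(p_{-n}) = (q^V_n)^{-1}\bigl(BR^{V'}_n(q^V_{-n}(p_{-n}))\bigr)
\]
for every $p_{-n}$; in particular $E(V) = (q^V)^{-1}(E(V'))$. Fix a right-inverse $j^V: P' \to P$ of $q^V$. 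Since $q^V$ is affine between polytopes its fibers are convex, hence path-connected, and a short piecewise lifting (moving along $j^V$ applied to a path in $q^V(Q)$ and then within fibers, which lie in $E(V)$) shows that $(q^V)^{-1}(q^V(Q))$ is path-connected. Because $Q$ is a component of $E(V)$, this forces $(q^V)^{-1}(q^V(Q)) = Q$.

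Next I will build compatible neighborhoods and a compatible selection. Choose an open neighborhood $U' \subset P'$ of $q^V(Q)$ with $\text{cl}(U') \cap E(V') = q^V(Q)$ and set $U \equiv (q^V)^{-1}(U')$; the identity above yields $\text{cl}(U) \cap E(V) = Q$, so $U$ is admissible for computing $\text{ind}_{BR^V}(Q)$. Pick a continuous $f': P' \to P'$ whose graph lies in an adequate homotopy neighborhood $A'$ of $\text{Graph}(BR^{V'})$ for $q^V(Q)$, so that $\text{ind}_{BR^{V'}}(q^V(Q))$ equals the fixed-point index of $f'|_{U'}$. Define
\[
f \equiv j^V \circ f' \circ q^V : P \to P.
\]
If $f'$ genuinely selected $BR^{V'}$, the compatibility identity would give $f(p) \in (q^V)^{-1}(BR^{V'}(q^V(p))) = BR^V(p)$, so $f$ would select $BR^V$. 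For merely graph-close $f'$, uniform continuity of $j^V$ on the compact set $P'$ transfers this closeness: shrinking $A'$ forces $\text{Graph}(f)$ into any preassigned adequate homotopy neighborhood $A$ of $\text{Graph}(BR^V)$ for $Q$, and in particular $f$ has no fixed points on $\partial U$. Hence $\text{ind}_{BR^V}(Q)$ equals the fixed-point index of $f|_U$.

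Finally, I will invoke commutativity. Setting $u \equiv f' \circ q^V: P \to P'$ and $v \equiv j^V: P' \to P$, one has $v \circ u = f$ and $u \circ v = f'$ (using $q^V \circ j^V = \text{id}_{P'}$). By \cite{D1972}, Chapter VII, Theorem 5.14, the fixed-point sets of $f$ and $f'$ are in bijection via $u$ and $v$, with corresponding components carrying equal indices. A fixed point $p$ of $f$ satisfies $p = j^V(q^V(p))$ and $q^V(p) \in \text{Fix}(f')$; conversely each $p' \in \text{Fix}(f')$ yields $j^V(p') \in \text{Fix}(f)$; and by construction $p \in U$ iff $q^V(p) \in U'$. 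Thus the component of $\text{Fix}(f) \cap U$ associated to $Q$ corresponds to the component of $\text{Fix}(f') \cap U'$ associated to $q^V(Q)$, yielding $\text{ind}_{BR^V}(Q) = \text{ind}_{BR^{V'}}(q^V(Q))$.

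The principal obstacle will be the approximation bookkeeping in the second paragraph: one must fix the neighborhoods $U$, $U'$, and $A$ in advance and then choose $A'$ small enough that the induced $f$ lies inside $A$ and has no spurious fixed points on $\partial U$. This follows from uniform continuity of $j^V$ and $q^V$ on their compact domains together with the adequacy property of $A$ supplied by \cite{AM1989}, but the transfer of Hausdorff-distance estimates from $\text{Graph}(f')$ to $\text{Graph}(f)$ needs to be spelled out carefully.
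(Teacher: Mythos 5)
Your proposal is correct and follows essentially the same route as the paper: identify $\text{Graph}(BR^V)$ with $(q^V\times q^V)^{-1}(\text{Graph}(BR^{V'}))$, take a continuous near-selection $f'$ of $BR^{V'}$, form $f = j^V\circ f'\circ q^V$, and apply Dold's commutativity property to $u=f'\circ q^V$ and $v=j^V$. The ``transfer'' step you flag at the end is handled in the paper exactly by this preimage identity -- one chooses the neighborhood $A'$ of $\text{Graph}(BR^{V'})$ so that $(q^V\times q^V)^{-1}(A')$ lies inside the prescribed adequate neighborhood of $\text{Graph}(BR^V)$ (possible by compactness, since $q^V\times q^V$ is a closed map) -- which is cleaner than the uniform-continuity estimate you sketch but amounts to the same thing.
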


\begin{proof}Firstly, if $\sigma$ is an equilibrium of $V$, then $q^V(\sigma)$ is an equilibrium of $V^{'}$, so $q^V(Q)$ is an equilibrium of $V'$. Fix $U$ a neighborhood of $Q$ in $P$ with cl$_{P}(U) \cap E(V) = Q$. Letting $U' \equiv q^{V}(U)$, $U'$ is open in $P'$ and cl$_{P'}(U') \cap E(V') = q^{V}(Q)$.\footnote{This follows from the fact that the map $q^{V}_n$ is an affine and surjective mapping, so it is an open mapping. This plus the Closed Map Lemma implies that $q^{V}$ is an open and closed map, which implies cl$_{P'}(U') \cap E(V') = q^{V}(Q)$.} Let now $W$ be an open neighborhood of $\text{Graph}(BR^{V})$ of the best reply of $V$ such that the best-reply index of $Q$ can be computed from the fixed-point index at $U$ of any continuous function $h: P \rightarrow P$ with $\text{Graph}(h) \subset W$. Consider now an open neighborhood $W^{'}$ of $\text{Graph}(BR^{V^{'}})$ such that for each $(\sigma', \tau') \in W^{'}$, $(q^{V} \times q^{V})^{-1}(\sigma', \tau') \subset W$. By the definition of the best-reply index, there exists a function $h^{'}: P^{'} \rightarrow P^{'}$ with Graph$(h^{'}) \subset W^{'}$ such that the fixed-point index of $h^{'}|_{U^{'}}$ is well defined and is the best reply index of $q^{V}(Q)$ w.r.t. $V'$. Let $j_n$ be a right inverse of $q^V_n$. By construction, we have that Graph$(j \circ h^{'} \circ q^{V}) \subset W$. This implies that the fixed point index of $(j \circ h^{'} \circ q^V)|_{U}$  equals the best-reply index of $Q$ w.r.t. $V$. Let $h \equiv j \circ h^{'} \circ q^{V}$. Because of the commutativity property of the index in Theorem 5.16 in Chapter VII of \cite{D1972}, we have that $h$ and $h^{'}$ have homeomorphic sets of fixed points and their indices agree: indeed, defining $h_0 \equiv h^{'} \circ q^V$ we have that $j \circ h_0 = h$ and $h_0 \circ j = h^{'}$. This implies that the fixed point index of $h|_U$ equals the fixed-point index of $h'|_{U'}$, which concludes the result. \end{proof}

\begin{proof}[Proof of Proposition \ref{identicaltheories}] 
We first prove (1). Claims \ref{standarddeg}, \ref{eq index and deg}, \ref{eqindex} and \ref{BRindex} imply that $\text{deg}^{PF}_{V}(Q) = \text{ind}_{BR^V}(Q)$, since the standartization is a reduction. For the same reason we have $\text{deg}^{PF}_{V'}(Q') = \text{ind}_{BR^{V'}}(Q')$. Now, Claim \ref{BRindex} shows invariance of the best-reply index to reductions. Therefore, $\text{deg}^{PF}_{V'}(Q') =  \text{deg}^{PF}_{V}(Q)$. The exact same reasoning applied to $\bar V$ gives (1). 

We now show (2). Claim \ref{standarddeg} shows the degree is invariant to standartization. Claim \ref{eq index and deg} shows equality of the index and degree in standard polytope-form games. The commutativity property of the index now immediately gives that the index is invariant to standartizations. Given we have proved (1), we therefore have (2).

In order to obtain a proof of (3), observe again the the degree is invariant to standartization from Claim \ref{standarddeg}. From claims \ref{eq index and deg}, \ref{eqindex} and \ref{BRindex},  (3) now follows.\end{proof}

\subsection{Additional Results on Extensive-form Games}\label{addextensiveform} Throughout the subsection we fix a game tree $\Gamma$, without moves of Nature. This is only for simplicity of exposition, since all results could be straightforwardly generalized by considering Nature as a player (without payoffs), playing a fixed strategy. We maintain the notation of section \ref{extensiveform}. We start with an auxiliary proposition that characterizes the interior of the enabling strategy set $C_n$ of player $n$.

\begin{proposition}
Let $\mathring{C}_n \equiv \{p_n \in [0,1]^{L_n} | (\exists \sigma_n \in$ int$(\Sigma_n))$ s.t. $p_n(i) = \sum_{s \in s_n(i)}\sigma_n(s) \}$. Then $\mathring{C}_n = \text{int}(C_n)$.
\end{proposition}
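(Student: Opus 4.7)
The plan is to recognize $\mathring{C}_n$ as the image of the relative interior of $\Sigma_n$ under the reduction map of Proposition \ref{enablingreduction}, and then invoke the standard convex-analysis fact that affine surjections preserve relative interiors.

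First, I would unpack the definition. Recall from the proof of Proposition \ref{enablingreduction} that the reduction map $q^{e}_n : \Sigma_n \to C_n$ is given by $q^{e}_n(\sigma_n) = \bigl(\sum_{s \in s_n(i)} \sigma_n(s)\bigr)_{i \in L_n}$. By definition, $\mathring{C}_n$ is precisely the image $q^{e}_n(\mathrm{int}(\Sigma_n))$, where $\mathrm{int}(\Sigma_n)$ denotes the relative interior of the simplex $\Sigma_n$, i.e.\ the mixed strategies with full support.

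Second, since $q^{e}_n$ is an affine surjection onto the polytope $C_n$ and $\Sigma_n$ is a nonempty convex polytope, I would apply the classical result in convex analysis (see, e.g., Rockafellar, \emph{Convex Analysis}, Theorem 6.6) stating that for any affine map $A$ and convex set $X$,
\begin{equation*}
A(\mathrm{relint}(X)) = \mathrm{relint}(A(X)).
\end{equation*}
Applied to $A = q^{e}_n$ and $X = \Sigma_n$, this yields $q^{e}_n(\mathrm{int}(\Sigma_n)) = \mathrm{relint}(C_n)$, i.e.\ $\mathring{C}_n = \mathrm{int}(C_n)$ (where $\mathrm{int}(C_n)$ is understood relative to the affine hull of $C_n$, since $C_n$ need not be full-dimensional in $[0,1]^{L_n}$).

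If one prefers a self-contained argument, the two inclusions can be proved directly. For $\mathring{C}_n \subseteq \mathrm{int}(C_n)$: given $p_n = q^{e}_n(\sigma_n)$ with $\sigma_n$ of full support, any direction $d$ in the affine hull of $C_n$ admits a linear lift $\tilde d$ into the affine hull of $\Sigma_n$ (by surjectivity of $q^{e}_n$), and since $\sigma_n \pm \varepsilon \tilde d \in \Sigma_n$ for small $\varepsilon > 0$, we get $p_n \pm \varepsilon d \in C_n$. For $\mathrm{int}(C_n) \subseteq \mathring{C}_n$: pick any $\bar\sigma_n \in \mathrm{int}(\Sigma_n)$ and set $\bar p_n \equiv q^{e}_n(\bar\sigma_n) \in \mathring{C}_n$; given $p_n \in \mathrm{int}(C_n)$, choose $\lambda > 1$ so that $q_n \equiv p_n + (1-\lambda^{-1})(p_n - \bar p_n) \in C_n$, pick any preimage $\tau_n \in \Sigma_n$ of $q_n$ under $q^{e}_n$, and observe that $\sigma_n \equiv \lambda^{-1}\tau_n + (1-\lambda^{-1})\bar\sigma_n$ lies in $\mathrm{int}(\Sigma_n)$ (as a strictly convex combination involving a full-support point) and satisfies $q^{e}_n(\sigma_n) = p_n$.

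The only subtlety — and the main thing to flag — is the consistent interpretation of ``interior'': throughout, both $\mathrm{int}(\Sigma_n)$ and $\mathrm{int}(C_n)$ must be read as relative interiors, since $\Sigma_n$ and $C_n$ are in general lower-dimensional in their ambient Euclidean spaces. Modulo this convention, the statement is a direct application of a classical result, so I would expect no significant technical obstacle.
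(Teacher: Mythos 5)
Your proof is correct, but it takes a genuinely different route from the paper's. The paper dispatches the easy inclusion $\mathring{C}_n \subseteq \text{int}(C_n)$ as ``clear'' and proves the converse topologically: it observes that $q_n^{-1}(\text{int}(C_n))$ is open in $\Sigma_n$ and then asserts it is contained in $\text{int}(\Sigma_n)$, so that \emph{every} preimage of a relative-interior point of $C_n$ has full support. You instead identify $\mathring{C}_n$ as $q^{e}_n(\text{relint}(\Sigma_n))$ and invoke the convex-analysis identity $A(\text{relint}\,X)=\text{relint}(A(X))$ for affine maps, which yields both inclusions simultaneously. Your route is arguably the more robust one: it only requires that \emph{some} preimage be of full support, whereas the paper's intermediate claim ($q_n^{-1}$ of an open subset of $C_n$ lies in $\text{relint}(\Sigma_n)$) is a strictly stronger statement that does not follow merely from continuity and openness in the subspace topology. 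Your explicit flag that both ``interiors'' must be read as relative interiors is also worth keeping. One small algebraic slip in your self-contained version of the second inclusion: to write $p_n=\lambda^{-1}q_n+(1-\lambda^{-1})\bar p_n$ you need $q_n\equiv p_n+(\lambda-1)(p_n-\bar p_n)=\lambda p_n-(\lambda-1)\bar p_n$, not $p_n+(1-\lambda^{-1})(p_n-\bar p_n)$; with that coefficient corrected, the stretching argument goes through exactly as you describe.
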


\begin{proof} 
It it clear that $\mathring{C}_n \subset \text{int}(C_n)$. So we show the converse. First, $\text{int}(C_n)$ is open in $C_n$ so $q_n^{-1}(\text{int}(C_n))$ is open in $\Sigma_n$, by continuity of $q_n$. It implies $q_n^{-1}(\text{int}(C_n)) \subset \text{int}(\Sigma_n)$. Therefore if $p_n \in \text{int}(C_n)$, then $q_n^{-1}(p_n) \subset \text{int}(\Sigma_n)$. Hence there exists $\sigma_n \in \text{int}(\Sigma_n)$ such that $q_n(\sigma_n) = p_n \iff p_n(i) = \Sigma_{s \in s_n(i)}\sigma_n(s)$, for all $i \in L_n$. So $p_n \in \mathring{C_n}$.\end{proof}

\begin{definition}A profile of behavior strategies $b = (b_n)_{n \in \N}$ \textit{induces an enabling profile} $p = (p_n)_{n \in \N}$ if any mixed strategy profile $\sigma = (\sigma_n)_{n \in \N}$ that is equivalent to $b$\footnote{A mixed strategy profile $\sigma$ is \textit{equivalent} to a behavior profile $b$ if for each $n$,  $\sigma_n$ is equivalent to $b_n$. The mixed strategy $\sigma_n$ is \textit{equivalent} to $b_n$ if for any mixed/behavior profile $\sigma_{-n}$ the distribution over terminal nodes induced by $(\sigma_n,\sigma_{-n})$ and $(b_n, \sigma_{-n})$ is the same (see \cite{MSZ2013}, p. 223).} satisfies $p_n(i) = \sum_{s \in s_n(i)}\sigma_n(s), \forall i \in L_n$.\end{definition}

The next proposition establishes the relation between enabling and behavior strategies.

\begin{proposition}\label{ben}The following hold:

\begin{enumerate}
\item  Given a profile of behavior strategies $b = (b_n)_{n \in \N}$, there exists a unique profile of enabling strategies induced by $b$.
\item  Let $(p_n)_{n \in \N}$ be a profile of enabling strategies with $p_n \in \mathring{C}_n$. Then there exists a unique profile of behavior strategies that induces $(p_n)_{n \in \N}$. 
\end{enumerate}

\end{proposition}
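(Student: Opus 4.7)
My plan is to base both parts on the correspondence between behavior strategies, mixed strategies, and realization plans developed in \cite{BS1996} (equations \eqref{behaviortosequence}--\eqref{sequencetobehavior}), together with Kuhn's equivalence theorem \cite{HK1950}.

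For $(1)$, I would set $p_n \equiv \text{proj}_{L_n}(r^n)$, where $r^n \in \mathfrak{P}_n$ is the realization plan of $b_n$ obtained from equation \eqref{sequencetobehavior}. This is a candidate for the induced enabling profile: for any mixed $\sigma_n \in \Sigma_n$ equivalent to $b_n$, Kuhn's theorem implies $\sigma_n$ and $b_n$ share the same realization plan on all of $\mathfrak{S}_n$, and the $L_n$-coordinate $r^n_{\mathfrak{s}_n(i)}$ equals the probability under $\sigma_n$ of a pure strategy consistent with $\mathfrak{s}_n(i)$, i.e., $\sum_{s \in s_n(i)}\sigma_n(s)$. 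Uniqueness of $p_n$ is immediate since the defining identity must hold coordinate by coordinate.

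For $(2)$, existence is short: by definition of $\mathring{C}_n$ there is $\sigma_n \in \text{int}(\Sigma_n)$ with $q_n(\sigma_n) = p_n$, and Kuhn's theorem produces an equivalent behavior strategy $b_n$, uniquely defined because the full-support $\sigma_n$ reaches every information set with positive probability; part $(1)$ then confirms $b_n$ induces $p_n$. For uniqueness, if $b_n$ and $b_n'$ both induce $p_n$ with realization plans $r^n$ and $(r^n)'$, then $\text{proj}_{L_n}(r^n) = \text{proj}_{L_n}((r^n)') = p_n$; the key step is showing the linear projection $\text{proj}_{L_n} \colon \mathfrak{P}_n \to C_n$ is injective. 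Once $r^n = (r^n)'$ is established, the inversion formula $b_n(a_n|u) = r^n_{\mathfrak{s}_u a_n}/r^n_{\mathfrak{s}_u}$, whose denominators are strictly positive because the lifted full-support $\sigma_n$ yields $r^n_{\mathfrak{s}_u} > 0$ at every $u \in U_n$, forces $b_n = b_n'$.

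The main obstacle is the injectivity of $\text{proj}_{L_n}$, which I would establish by downward induction on the forward depth $d(\mathfrak{s}) \equiv \max\{|\mathfrak{s}'|-|\mathfrak{s}| : \mathfrak{s}' \in \mathfrak{S}_n,\ \mathfrak{s} \preceq \mathfrak{s}'\}$. If $d(\mathfrak{s}) = 0$, then no information set of $n$ has $\mathfrak{s}$ as its leading sequence; tracing any path in the game tree from a node defining $\mathfrak{s}$ through moves of other players and Nature must therefore terminate at a terminal node whose induced player-$n$ sequence is $\mathfrak{s}$, so $\mathfrak{s} \in L_n$ and $r^n_{\mathfrak{s}} = p_n(\mathfrak{s})$ is pinned down. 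If $d(\mathfrak{s}) \geq 1$, picking any $u \in U_n$ with $\mathfrak{s}_u = \mathfrak{s}$ and applying the flow equation $r^n_{\mathfrak{s}} = \sum_{a \in A_n(u)} r^n_{\mathfrak{s} a}$ expresses $r^n_{\mathfrak{s}}$ in terms of summands of strictly smaller forward depth, which are determined by the inductive hypothesis; different choices of $u$ yield the same sum because $r^n$ is a realization plan, so no consistency issue arises.
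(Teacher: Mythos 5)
Your proof is correct and follows essentially the same route as the paper's: part (1) via the product/realization-plan formula and equivalence, and part (2) by lifting $p_n$ to a full-support mixed strategy and recovering behavior probabilities as ratios with positive denominators. Your injectivity of $\text{proj}_{L_n}$ on $\mathfrak{P}_n$, proved by downward induction on forward depth, is exactly the paper's backward reconstruction of the weights $\beta_n(u,i)$ from the last-action coordinates, just packaged as a lemma about the sequence form.
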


\begin{proof}
We prove (1). Given a profile of behavior strategies $b$, define $p_n(i) \equiv \Pi_{(u^{'},i^{'}) \preceq (u,i)}b_n(i^{'}|u^{'})$ for each $i \in L_n$. Let $\sigma$ be a profile that is equivalent to $b$. Then equivalence implies that, for each $n \in \N, i \in L_n$,  $\sum_{s \in s_n(i)}\sigma_n(s) = \Pi_{(u^{'},i^{'}) \preceq (u,i)}b_n(i^{'}|u^{'})$. Therefore $p_n \in C_n$ and $b$ induces $p$. Uniqueness follows immediately.

	We prove (2). Let $(p_n)_{n \in \N}$ be an enabling profile with $p_n \in \mathring{C}_n$. Then there exists $\sigma_n \in \text{int}(\Sigma_n)$ satisfying $p_n(i) = \sum_{s \in s_n(i)}\sigma_n(s)$. Define $\beta_n(u,i) \equiv p_n(i)$, where $i \in A_n(u)$ is a last action of player $n$. Then, if $(u^{'},i^{'})$ is an immediate predecessor of $u$ among $n's$ information set, define $\beta_n(u^{'},i^{'}) \equiv \sum_{i \in A_n(u)}\beta_n(u,i)$. Proceeding in this manner, we define $\beta_n(u,i)$ for every pair $(u,i), u \in U, i \in A_n(u)$. Notice that because of the assumption $p_n \in \mathring{C}_n$, it follows that $\beta_n(u,i)>0$ for all $u \in U_n$ and $i \in A_n(u)$. Therefore we can define the behavior strategy $b_n(i | u) \equiv \frac{\beta_n(u,i)}{\beta_n(u^{'},i^{'})}$, where $(u^{'},i^{'}) \prec u$ and $u^{'}$ is an immediate predecessor of $u$.  Now, for any $\tilde{\sigma}_n$ equivalent to $b_n$, it must be that $\sum_{s \in s_n(i)}\sigma_n(s) = \sum_{s \in s_n(i)}\tilde{\sigma}_n(s), \forall i \in L_n$, otherwise we can construct $\sigma^{'}_{-n}$ such that $(\sigma_n, \sigma^{'}_{-n})$ and $(\tilde{\sigma}_n, \sigma^{'}_{-n})$ do not induce the same distributions over terminal nodes. This implies that $(b_n)_{n \in \N}$ induces $(p_n)_{n \in \N}$. 
	
	Now, for uniqueness, suppose $b'$ is a profile of behavior strategies inducing $(p_n)_{n \in \N}$. By the proof of (1), it follows that $p_n(i) = \Pi_{(u^{'},i^{'}) \preceq (u,i)}b'_n(i^{'}|u^{'}), \forall i \in L_n$. For each $u \in U_n$ and $i \in L_n$ such that $i \in A_n(u)$ set $\beta'_n(u,i) = p_n(i)$. Proceeding in the same fashion as we did for $\beta_n$, the numbers $\beta'_n(u',i')>0$ for each $u' \in U_n$ and $i \in A_n(u')$ are uniquely determined. This implies therefore that for each $u \in U$ and $i \in A_n(u)$, $b'_n(i|u) = b_n(i|u)$, which shows uniqueness.  \end{proof}

\begin{remark}Notice that for each $p_n \in C_n$, there exists a behavior strategy profile $(b_n)_{n \in \N}$ inducing $(p_n)_{n \in \N}$, but this behavior strategy need not be unique. This happens when certain last actions have probability 0 for a certain player. Still, whenever we have $p_n(i)>0$, it is possible to proceed as in the proof of (2) Proposition \ref{ben} and derive $b_n(i'|u')$ for each $(u',i') \preceq (u,i), i \in A_n(u)$. For the remaining pairs $(u,i)$, the probabilities $b_n(i|u)$ are undetermined. \end{remark}

\begin{definition}
An enabling profile $p = (p_n)_{n \in \N}$ induces a distribution over terminal nodes $F \in \Delta(Z)$ if for any profile of behavior strategies $(b_{n})_{n \in \N}$ inducing $p$, it implies that  $(b_{n})_{n \in \N}$ induces $F$.
\end{definition}

\begin{corollary} Let $(p_n)_{n \in \N}$ be a profile of enabling strategies with $p_n \in C_n$. Then there exists a unique distribution $F \in \Delta(Z)$ induced by this profile of enabling strategies. Conversely, given $F \in \text{int} (\Delta(Z))$, there exists a unique profile $(p_n)_{n \in \N}$, with $p_n \in \mathring{C}_n$, that induces $F$. \end{corollary}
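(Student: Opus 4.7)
The plan is to reduce the corollary to Proposition \ref{ben} by passing through behavior strategies, and then use the explicit multiplicative formula that expresses $F(z)$ as a product of behavior probabilities along the path to $z$.

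For the first statement, I would fix an enabling profile $p = (p_n)_{n \in \N}$ with $p_n \in C_n$. By the remark following Proposition \ref{ben}, there exists at least one behavior profile $b$ that induces $p$, and this $b$ determines a distribution $F^b \in \Delta(Z)$ via the standard formula $F^b(z) = \prod_{n \in \N} \prod_{(u,i) \preceq z,\, u \in U_n} b_n(i \mid u)$. Using equation \eqref{sequencetobehavior} from the definition of the induced realization plan, the inner product telescopes to $p_n(\ell_n(z))$ for every player $n$ that acts along the path to $z$, and equals $1$ otherwise. Consequently
\[
F^b(z) \;=\; \prod_{n \,:\, A_n(z) \neq \emptyset} p_n(\ell_n(z)),
\]
which depends only on $p$, not on the particular $b$ chosen. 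Hence $F \equiv F^b$ is well-defined and is the unique distribution induced by $p$.

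For the converse, I would take $F \in \text{int}(\Delta(Z))$ and define the marginal $F(t) \equiv \sum_{z \succeq t} F(z)$ at every node $t$; interiority guarantees $F(t) > 0$. Set, for each player $n$, information set $u \in U_n$, and action $i \in A_n(u)$,
\[
b_n(i \mid u) \;\equiv\; \frac{F((u,i))}{F(u)},
\]
where $F((u,i)) \equiv \sum_{z \,:\, (u,i) \preceq z} F(z)$. Perfect recall together with positivity of $F(t)$ ensures that this ratio is a well-defined function of $(u,i)$ and does not depend on the particular node chosen in $u$. The resulting $b$ puts strictly positive mass on every action, hence by part (2) of Proposition \ref{ben} it induces a unique enabling profile $(p_n)_{n \in \N}$ with $p_n \in \mathring{C}_n$. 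Applying the forward direction to this $p$ recovers $F$, and uniqueness of $p$ follows from the fact that two distinct interior enabling profiles would yield distinct distributions via the product formula of the first part.

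The main obstacle is the second step, specifically the verification that the Bayesian conditional $F((u,i))/F(u)$ depends only on the pair $(u,i)$ and not on the specific node in $u$---this is the point where perfect recall must be invoked carefully to guarantee consistency of the induced behavior profile. Once consistency is in hand, the rest is bookkeeping with the explicit multiplicative formula, mirroring the uniqueness argument in part (2) of Proposition \ref{ben}.
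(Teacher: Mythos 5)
Your argument is essentially the paper's own proof: the forward direction via the telescoping product formula expressing $F(z)$ in terms of the $p_n(\ell_n(z))$ (which the paper obtains from part (1) of Proposition \ref{ben}), and the converse by defining $b_n(i\mid u)$ as the conditional probability of $(u,i)$ given $u$ under $F$ and passing back to an enabling profile. One small correction: the step ``a (fully mixed) behavior profile induces a unique enabling profile'' is part (1) of Proposition \ref{ben}, not part (2); part (2) is what you actually need for the uniqueness of $p$, since two interior enabling profiles inducing the same $F$ must each come from a unique behavior profile, and that behavior profile is pinned down by the conditionals of $F$.
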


\begin{proof}The first part of the statement is straightforward by an application of (1) of Proposition \ref{ben}. We prove the second part. Given $F \in \text{int}(\Delta(Z))$ we show there exists a unique behavior strategy profile $(b_n)_{n \in \N}$ that induces $F$. Let $i$ be an action of player $n$ at an information set $u \in U_n$ and define $$b_n(i|u) := \frac{\sum_{z: (u,i) \prec z, u \in U_n}F(z)}{\sum_{z: u \prec z, u \in U_n}F(z)}.$$ This defines the unique behavior strategy $b_n$ and the profile $(b_n)_{n \in \N}$ induces $F$.  Also, by the proof of (1) in Proposition \ref{ben}, $(b_n)_{n \in \N}$ induces a unique enabling profile $(p_n)_{n \in \N}$ with $p_n(i) = \Pi_{(u',i') \preceq (u,i)}b_n(i'|u') >0$ for each $i \in L_n, n \in \N$.\end{proof}

\begin{Proposition}\label{linearsubspaceofpayoffs}
Then there exists a linear subspace $A^{\circ}(\times_n C_n)$ of the multiaffine functions over $\times_n C_n$ such that: 
\begin{enumerate}

\item For any  $(V^{e}_n)_{n \in \N} \in \times_{n \in \N} A^{\circ}(\times_n C_n)$ there exists a unique $G \in \mathcal{G}$, such that for any profile of enabling strategies $(p_n)_{n \in \N}$ and induced distribution $F \in \Delta(Z)$, $V^{e}_n(p_n, p_{-n}) = \sum_{z \in Z}G_n(z)F(z)$. 

\item Conversely, for each $G \in \mathcal{G}$ and $n \in \N$ there exists a unique multiaffine function $V^e_n \in A^{\circ}(\times_m C_m)$ such that for any profile of enabling strategies $(p_m)_{m \in \N}$ and induced distribution $F \in \Delta(Z)$, $V^e_n(p_n, p_{-n}) = \sum_{z \in Z}G_n(z)F(z)$.

\end{enumerate}
\end{Proposition}

\begin{proof}


We first construct the linear subspace $A^{\circ}(\times_{n}C_n)$ of the statement. For each $z \in Z$, there exists an unique path in $\Gamma$ from the root to $z$. Therefore, for each $z \in Z$, there exists a unique pair of set $\N^*(z) \subset \N$ and vector $(i_n)_{n \in \N^*(z)}$ of last actions such that $\ell_n(z) = i_n \in L_n$, for $n \in \N^{*}(z)$ and $\ell_n(z) = \emptyset$, for $n \in \N \setminus \N^{*}(z)$. We call this vector \textit{the unique vector of last actions associated to z}. Let $W \equiv \{ (i_n)_{n \in \N^{*}} | \exists z \in Z$ s.t. $\N^* = \N^*(z) \}$. Define the multiaffine function $V^{e}_n$ over $\times_n C_n$ by: 
 
 $$V^{e}_n(p_1,...,p_N) \equiv \sum_{(i^{*}_{j_1},...,i^{*}_{j_{n}}) \in W} V^{e}_n(i^{*}_{j_1},...,i^{*}_{j_n})p_{j_1}(i^{*}_{j_1})...p_{j_n}(i^{*}_{j_n})$$


Notice that the set of affine functions satisfying the formula above forms a linear subspace of the space of multiaffine functions over $\times_n C_n$ (under pointwise addition and scalar multiplication). Call this subspace $A^{\circ}(\times_n C_n)$.

We prove (1). Let $V^e_n \in A^{\circ}(\times_n C_n)$. We now show that this function defines unique payoffs over terminal nodes of $\Gamma$ for player $n$. For $z \in Z$, consider the unique vector of last actions $(i_n)_{n \in \N^{*}(z)} \in W$ associated to $z$. Define, for each $m \in \N$, $G_m(z) \equiv V_m(i^{*}_{j_1},...,i^{*}_{j_n})$. Let $(p_n)_{n \in \N}$ be a profile of enabling strategies and $F$ the induced distribution over $Z$. For $z \in Z$, if $(i^{*}_{j_1},...,i^{*}_{j_n}) \in W$ is associated to $z$, then it implies that: $$F(z)  = F_{j_1}(z)...F_{j_n}(z) = p_{j_1}(i^{*}_{j_1})...p_{j_n}(i^{*}_{j_n}), $$ where $F_{j_i}(z)  \equiv p_{j_i}(i)$, for any $z \in Z_n(i)$  (cf. \cite{GW2002}). Then $G_n(z)F(z) = V^e_n(i^{*}_{j_1},...,i^{*}_{j_n})p_{j_1}(i^{*}_{j_1})...p_{j_n}(i^{*}_{j_n})$. This implies that 



\begin{equation}\label{equality of payoffs}
\sum_{z \in Z}G_n(z)F(z) = \sum_{(i^{*}_{j_1},...,i^{*}_{j_{n}}) \in W} V^e_n(i^{*}_{j_1},...,i^{*}_{j_n})p_{j_1}(i^{*}_{j_1})...p_{j_n}(i^{*}_{j_n}) 
\end{equation}

Now we show (2). For each $z \in Z$, consider the unique $(i^{*}_{j_1},...,i^{*}_{j_n}) \in W$ associated to $z$. Define $V^e_n(i^{*}_{j_1},...,i^{*}_{j_n}) \equiv G_n(z)$. Define the multiaffine function for player $n \in \N$ by:  $$V^e_n(p_1,...,p_N) \equiv \sum_{(i^{*}_{j_1},...,i^{*}_{j_{n}}) \in W} V^e_n(i^{*}_{j_1},...,i^{*}_{j_n})p_{j_1}(i^{*}_{j_1})...p_{j_n}(i^{*}_{j_n}).$$ Then $V^e_n$ belongs to $A^{\circ}(\times_n C_n)$ and expected payoffs agree. \end{proof}

An immediate consequence of the proof of the Proposition \ref{linearsubspaceofpayoffs} above is the following.

\begin{corollary}\label{representation}
Let $C = \times_n C_n$ and $R: \times_{n \in \N} A^{\circ}(C) \rightarrow \mathbb{R}^{N|Z|}$ defined by $R(V^e) = G$, where $R \equiv \times_n R_n$ and $R_n(V^e_n) = G_n \in \mathbb{R}^{|Z|}$, in which $G_n$ is the unique vector of terminal payoffs obtained in (1) of Proposition \ref{linearsubspaceofpayoffs}. Then $R$ is a linear isomorphism.  \end{corollary}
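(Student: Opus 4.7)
The plan is to extract Corollary \ref{representation} as a direct consequence of Proposition \ref{linearsubspaceofpayoffs}, since the bijective correspondence between $A^{\circ}(\times_n C_n)$ and $\mathcal{G}$ is already contained in items (1) and (2) of that proposition. All that remains is to verify linearity and to check that the explicit map $R$ defined in the corollary coincides with the implicit correspondence arising from the expected-payoff identity.

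First, I would verify linearity of $R$. By the defining formula of $A^{\circ}(\times_n C_n)$, each $V^e_n$ is parameterized by its coefficients $(V^e_n(i^*_{j_1},\ldots,i^*_{j_n}))_{(i^*_{j_1},\ldots,i^*_{j_n}) \in W}$, and pointwise addition and scalar multiplication on $A^{\circ}(\times_n C_n)$ correspond to coordinatewise operations on these coefficients. Since $R_n$ is defined by $R_n(V^e_n)(z) = V^e_n(i^*_{j_1},\ldots,i^*_{j_n})$, where $(i^*_{j_1},\ldots,i^*_{j_n}) \in W$ is the unique vector of last actions associated to $z$, the map $R_n$ is the composition of coefficient extraction with the coordinate relabeling induced by $z \mapsto (i^*_{j_1},\ldots,i^*_{j_n})$. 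Both operations are linear, hence $R = \times_n R_n$ is linear.

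Second, I would observe that bijectivity of $R$ follows directly from Proposition \ref{linearsubspaceofpayoffs}. Item (1) of the proposition, together with its proof, shows that for each $V^e \in \times_n A^{\circ}(\times_n C_n)$ there is a unique $G \in \mathcal{G}$ satisfying the expected-payoff identity $V^e_n(p) = \sum_{z \in Z} G_n(z) F(z)$, and that this $G$ is precisely $R(V^e)$. Item (2) gives the converse: for every $G \in \mathcal{G}$ there exists a unique $V^e \in \times_n A^{\circ}(\times_n C_n)$ satisfying the same identity. Because both constructions are mediated by the same identification of coefficients in $W$ with values on terminal nodes in $Z$, they are mutually inverse, so $R$ is a bijection. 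Combined with the first paragraph, this yields that $R$ is a linear isomorphism.

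There is essentially no obstacle here; all substantive work was done in the proof of Proposition \ref{linearsubspaceofpayoffs}. The only point worth a careful check is that the $G$ constructed from $V^e$ in item (1) and the $V^e$ constructed from $G$ in item (2) are indeed inverse to one another — but this is immediate from the fact that both are defined by matching the coefficient $V^e_n(i^*_{j_1},\ldots,i^*_{j_n})$ with the terminal payoff $G_n(z)$ through the bijection $z \mapsto (i^*_{j_1},\ldots,i^*_{j_n})$ between $Z$ and $W$.
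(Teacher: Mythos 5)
Your proposal is correct and follows the same route as the paper, which simply records the corollary as an immediate consequence of the proof of Proposition \ref{linearsubspaceofpayoffs}: linearity comes from the fact that $R_n$ is coefficient extraction composed with the relabeling induced by the bijection between $Z$ and $W$ (valid here since the subsection fixes a tree without chance moves, so distinct terminal nodes have distinct last-action vectors), and bijectivity is exactly the mutual inverseness of the constructions in items (1) and (2). Nothing further is needed.
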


\subsection{Proof of Theorem \ref{stabilization}}

We start by observing that for the normal-form game $\mathbb{G}$, $\text{deg}^{KM}_{\mathbb{G}}(X) = \text{ind}_{BR^{\mathbb{G}}}(X)$ (cf. \cite{GW2005}).  Let $q^{e}: \S_n \to C_n$ be the reduction map from mixed to enabling strategies and define $U \equiv (q^{e})^{-1}(W)$. Let $P^{\e}_n \equiv (q^{e}_n)^{-1}(C^{\e}_n) \subseteq \S_n \setminus \partial \S_n$. Notice that the game $\mathbb{G}|_{P^{\e}}$ where we restrict the mixed strategy set of each player $n$ in $\mathbb{G}$ to $P^{\e}_n$ is a polytope-form game. We denote by $E(\mG|_{P^{\e}})$ the set of $\e$-restricted equilibria of $\mG|_{P^{\e}}$. We now claim:

\begin{claim}\label{preparation} There exists $\bar \e>0$, such that for each $\e \in (0, \bar \e)$, $U^{\e} \equiv U \cap P^{\e}$ satisfies (cl$_{P^{\e}}(U^{\e}) \setminus U^{\e}) \cap E(\mathbb{G}|_{P^{\e}}) = \emptyset$ and the following equality holds: $$\text{ind}_{BR^{\mathbb{G}}}(X) = \sum_{X' \in U^{\e}}\text{ind}_{BR^{\mathbb{G}|_{P^{\e}}}}(X'),$$ where the sum is over the connected components of fixed points $X' \in U^{\e}$ of $BR^{\mathbb{G}|_{P^{\e}}}$.   \end{claim}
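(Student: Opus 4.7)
The plan is to prove the two assertions in turn. First, for the boundary statement I would argue by a standard upper hemicontinuity argument using that $P^\e_n \to \S_n$ in Hausdorff distance as $\e \downarrow 0$ (which follows because $q^e_n:\S_n\to C_n$ is affine surjective and $C^\e_n\to C_n$ in Hausdorff distance). Suppose for contradiction that there exist $\e_k \downarrow 0$ and $\s^k \in (\text{cl}_{P^{\e_k}}(U^{\e_k}) \setminus U^{\e_k}) \cap E(\mG|_{P^{\e_k}})$. By compactness, along a subsequence $\s^k \to \s^* \in \partial U$. For any $\tau_n \in \S_n$ pick $\tau^k_n \in P^{\e_k}_n$ with $\tau^k_n \to \tau_n$; passing to the limit in $\mG_n(\s^k)\ge \mG_n(\tau^k_n, \s^k_{-n})$ gives $\s^* \in E(\mG)$. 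Since $q^e$ is an open and closed map (affine surjective coordinates), $q^e(\partial U) \subset \partial W$, hence $q^e(\s^*) \in \partial W \cap E(G)$. But $\text{cl}_C(W) \cap E(G)=Q \subset W$ forces $\partial W \cap E(G)=\emptyset$, a contradiction. This gives the desired $\bar \e>0$.

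For the index identity, my plan is to work with the best-reply index via continuous approximations (subsection~\ref{indextheorygeneral}). Fix an adequate homotopy neighborhood $O \supset A$ of $\text{Graph}(BR^{\mG})$ for $X$ in $U \times \S$. For each $\e$, let $f^\e: P^\e \to P^\e$ be a continuous map whose graph lies in an adequate homotopy neighborhood of $\text{Graph}(BR^{\mG|_{P^\e}})$ in $P^\e \times P^\e$; by definition, the fixed-point index of $f^\e$ at each component $X' \subset U^\e$ of its fixed points equals $\text{ind}_{BR^{\mG|_{P^\e}}}(X')$. Choose also a coordinate-wise nearest-point retraction $r_\e : \S \to P^\e$ and set $g^\e \equiv f^\e \circ r_\e: \S \to \S$. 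Because $r_\e$ is the identity on $P^\e$ and $f^\e$ maps into $P^\e$, the fixed points of $g^\e$ in $\S$ coincide with those of $f^\e$ in $P^\e$. By the commutativity property of the fixed-point index (\cite{D1972}, Ch.~VII, Theorem~5.14), the index of $g^\e = f^\e \circ r_\e$ at a component $X'$ equals the index of $r_\e \circ f^\e = f^\e$ at $X'$, so that $\sum_{X'} \text{ind}(g^\e|_U, X') = \sum_{X'} \text{ind}_{BR^{\mG|_{P^\e}}}(X')$, where the sums are over components $X' \subset U^\e$.

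The next step is to observe that $\text{Graph}(BR^{\mG|_{P^\e}})$ (viewed inside $\S \times \S$) Hausdorff-converges to $\text{Graph}(BR^{\mG}|_{P})$ as $\e \downarrow 0$; this uses upper hemicontinuity of best replies under Hausdorff-converging strategy polytopes, essentially the same argument as in the first paragraph. Combined with the chosen approximations $f^\e$ and the retraction $r_\e \to \text{id}_{\S}$, for $\e$ sufficiently small the graph of $g^\e$ restricted to $U$ falls inside the adequate homotopy neighborhood $A$. By the defining property of $A$ (Corollary~2 in \cite{AM1989}) and additivity of the fixed-point index, $\sum_{X'} \text{ind}(g^\e|_U, X') = \text{ind}_{BR^{\mG}}(X)$. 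Combining with the previous display gives the claim.

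The main obstacle is the Hausdorff-convergence argument for $\text{Graph}(g^\e)|_U$ into $A$: one must simultaneously control the proximity of $\text{Graph}(BR^{\mG|_{P^\e}})$ to $\text{Graph}(BR^{\mG})$ (where upper hemicontinuity could a priori allow the graph to shrink but not to grow, which is exactly what is needed, but the argument requires care at strategies where the best reply is multi-valued), the quality of the approximation $f^\e$, and the distortion introduced by $r_\e$. A convenient way to package this is to pick the approximations $f^\e$ with graphs contained in shrinking neighborhoods of $\text{Graph}(BR^{\mG|_{P^\e}})$ and to use that $r_\e$ converges uniformly to $\text{id}_{\S}$ on any fixed compact set, so that $\text{Graph}(g^\e)|_U$ is eventually contained in any pre-specified open neighborhood of $\text{Graph}(BR^{\mG})|_{U}$.
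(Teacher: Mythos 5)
Your proposal is correct and follows essentially the same route as the paper: an upper-hemicontinuity argument for the boundary statement, then continuous approximations of the best replies \`a la \cite{AM1989}, the commutativity property of the fixed-point index to pass between $f^\e\circ r_\e$ on $\S$ and $f^\e$ on $P^\e$, and Hausdorff convergence of the (retracted) best-reply graphs to place the approximating map inside an adequate homotopy neighborhood of $\text{Graph}(BR^{\mG})$. Your version is in fact somewhat more explicit than the paper's on the two steps the paper only asserts (the absence of boundary equilibria for small $\e$ and the graph convergence), and you correctly identify the upper-hemicontinuity direction as the one that matters.
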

 
\begin{proof}For $\e>0$, let $r^{\epsilon}_n : \Sigma_n \rightarrow P^{\epsilon}_n$ be the nearest-point retraction and $r^{\epsilon} \equiv \times_n r^{\epsilon}_n$. Let $i_n: P^{\epsilon}_n \rightarrow \Sigma_n$ be the inclusion map and $i \equiv \times_n i_n$. Define the correspondence $\Gamma_{\epsilon}: \Sigma \rightrightarrows \Sigma$ by $\Gamma_{\epsilon}(\sigma) = (i \circ BR^{\mG|_{P^{\epsilon}}} \circ r^{\epsilon})(\sigma)$. Notice that Graph$(\Gamma_{\epsilon})$ converges (in the Hausdorff-distance) to $\text{Graph}(BR^{\mG})$ as $\epsilon \rightarrow 0$. Let $O$ be an open neighborhood of the Graph$(BR^{\mathbb{G}})$ that does not intersect $\{ (\sigma, \sigma) \in \Sigma \times \Sigma | \sigma$ cl$(U) \setminus U\}$ and according to which the index of the best reply $BR^{\mathbb{G}}$ at $U$ can be computed from any continuous map $h: \Sigma \rightarrow \Sigma $ with $\text{Graph}(h) \subset O$. Then, for $\epsilon>0$ sufficiently small we have that Graph$(\Gamma_{\epsilon}) \subset O$. Taking further $\epsilon >0$ sufficiently small, then $U^{\epsilon}$ contains no equilibria of $\mG|_{P^{\epsilon}}$ in its boundary (in $P^{\epsilon}$). Then there exists a continuous function $h^{\epsilon}: P^{\epsilon} \rightarrow P^{\epsilon}$ such that the fixed point index of $h^{\epsilon}|_{U^{\epsilon}}$ is well defined and equals the index of the best reply of $\mG|_{P^{\epsilon}}$ at $U^{\epsilon}$ (see \cite{AM1989}). Moreover, we can assume  Graph$(i \circ h^{\epsilon} \circ r^{\epsilon}) \subset O$. This implies that the fixed point index of $(i \circ h^{\epsilon} \circ r^{\epsilon})|_{U}$ is equal to the local index of the best reply of $\mG$ at $U$. Finally, the commutativity property of the fixed point index (Theorem 5.16 in Chapter VII of \cite{D1972}) shows that the fixed point index of $(i \circ h^{\epsilon} \circ r^{\epsilon})|_{U}$ equals the fixed point index of $h^{\epsilon}|_{U^{\epsilon}}$, which is the local index of the best reply of $\mG|_{P^{\epsilon}}$ at $U^{\epsilon}$. \end{proof}

Fix from now on $\e>0$ according to Claim \ref{preparation}. The reduction map $q^{\e} \equiv q^{e}|_{P^{\e}}: P^{\e} \to C^{\e}$ defines a reduction of $\mathbb{G}|_{P^{\e}}$, which we denote by $V^{\e}$. From the proof of Proposition \ref{identicaltheories}, we have  $\text{ind}_{BR^{\mathbb{G}}|_{P^{\e}}}(X') = \text{ind}_{BR^{V^{\e}}}(q^{\e}(X')) = \text{deg}^{PF}_{V^{\e}}(q^{\e}(X'))$. We now show $\text{deg}^{PF}_{V^{\e}}(q^{\e}(X')) = \text{deg}^{\e}_{G}(q^{\e}(X'))$, which concludes the proof. 

Let $\mathfrak{r}^{\e}_n:  \mathbb{R}^{L_n} \rightarrow C^{\epsilon}_n$ be the nearest-point retraction to $C^{\epsilon}_n$. Let $\mathfrak{r}^{\e} \equiv \times_n \mathfrak{r}^{\e}_n$. Consider also the map $\omega^{\e}_n: C^{\epsilon}_n \rightarrow \mathbb{R}^{L_n}$ defined by $\omega^\e_n(p_n) \equiv p_n + \nu_{n}(p_{-n})$ with $\omega^{\e} \equiv \times_n \omega^{\e}_n$ and let $\Phi^\e_G: \times_n C^{\epsilon}_n \rightarrow \times_n C^{\epsilon}_n$ be given by $\Phi^\e_G(p) = \mathfrak{r}^{\e} \circ \omega^{\e}$. Lemma 5.1 in \cite{GW2002} shows that a profile of enabling strategies $(p_n)_{n \in \N}$ is an equilibrium of the extensive-form game $G$ with perturbed enabling strategies $C^{\e}$  if and only if it is a fixed point of the map $\Phi^\e_G$. The map $\Phi^\e_G$ is the analogous version of the GPS-map formulated to the extensive-form game $G$. As the map $\Phi^\e_G$ is jointly continuous on $G$ and $p$, it follows that $(G,p) \mapsto \Phi^\e_G(p)$ is a Nash-map.  Let  $\phi^\e: \mathcal{G} \times C^{\epsilon} \rightarrow C^{\epsilon}$ be defined as $\phi^\e(G,p) = \Phi^\e_G(p)$.

It is without loss of generality to assume that $A^{\circ}(C^{\epsilon}) = A^{\circ}(C)$, by taking $\e>0$ smaller if necessary. Using then the linear isomorphism $R: \times_n A^{\circ}(C^{\epsilon}) \rightarrow \mathbb{R}^{N|Z|} $ from Corollary \ref{representation}, let $\tilde{\phi}^\e: A^{\circ}(C^{\epsilon}) \times C^{\epsilon} \rightarrow C^{\epsilon}$ be defined by $\tilde{\phi}^\e = \phi \circ (R \times id_{C^{\epsilon}})$. Then $\tilde{\phi}^\e$ is also a Nash-map for polytope-form games with payoff functions in $A^{\circ}(C^{\epsilon})$ for each player $n$. Notice that $\phi^\e(G, \cdot) = \tilde{\phi}^\e(V^e, \cdot)$, where $R^{-1}(G) = V^{e}$, so the fixed points and the indices assigned to these fixed points are the same according to the two Nash maps $\phi^\e$ and $\tilde{\phi}^\e$. 

Now recall from subsection \ref{polytope-form degree} that $T: \times_n A^{\circ}(C^{\epsilon}) \rightarrow \Re^{ND}$ is a linear isomorphism from the multiaffine functions over $C^{\epsilon}$ to $\Re^{ND}$, where $D$ is appropriately defined. Let $P^{s,\e}_n$ be the standard polytope resulting from an affine map $e^\e_n$ of $C^{\epsilon}_n$ to $\mathbb{R}^{d_n}$, where the dimension of $C^{\epsilon}_n$ is $d_n -1$.  Let $V^{s,\e} = (\N, (P^{s,\e}_n)_{n \in \N}, (V^s_n)_{n \in \N})$ be the standard polytope-form game obtained from $V^{\e} = (\N, (C^{\epsilon}_n)_{n \in \N}, (V^{e}_n)_{n \in \N})$. Now let $\bar{\phi}^\e: \Re^{ND} \times P^{s,\e} \rightarrow P^{s,\e}$ be defined by $\bar{\phi}^\e =  e^\e \circ \tilde{\phi}^\e \circ  (T^{-1} \times (e^{\e})^{-1})$. Notice that $\bar{\phi}^\e(V^{s,\e}, \cdot) = e^{\e} \circ \tilde{\phi}^\e(V^{\e}, \cdot) \circ (e^{\e})^{-1}$ which implies, by the comutativity property of the index, that the fixed point sets of $\tilde{\phi}^{\e}_{V^{\e}}$ and $\bar{\phi}^\e_{V^{s,\e}}$ are homeomorphic and their indices agree. Also, by construction $\bar{\phi}^{\e}: \Re^{ND} \times P^{s,\e} \rightarrow P^{s,\e}$ is a Nash map.

It now follows that: $$\text{deg}_{V^{\e}}(q^{\e}(X')) =  \text{deg}_{V^{s,\e}}(e^{\e}(q^{\e}(X'))) = \text{ind}_{V^{s,\e}}(e^{\e}(q^{\e}(X'))) =$$ $$=\text{ind}(e^{\e}(q^{\e}(X')), \bar{\phi}^\e_{V^{s,\e}}) = \text{ind}(q^{\e}(X'), \tilde{\phi}^\e_{V^{\e}}) = \text{ind}(q^{\e}(X'), \Phi^{\e}_G),$$ where the first and second follow from Proposition \ref{identicaltheories}, the third, fourth and fifth from our reasoning above. 

We claim $\text{ind}(q^{\e}(Q), \Phi^\e_G) = \text{deg}^{\e}_G(q^{\e}(Q))$, which concludes the proof. We denote by $G \bigoplus g$, where $g = (g_n)_{n \in \N}$ and $g_n \in \mathbb{R}^{L_n}$, the extensive-form game where, for each player $n$, the payoffs over terminal nodes are given by $ (G_n(z) + g_n(\ell_n(z)))$, if $\ell_n(z) \neq \emptyset$ and $G_n(z)$, if $\ell_n(z) = \emptyset$ . Let $\mathcal{E}^\e_{G} = \{(g,p) \in \times_n \mathbb{R}^{L_n} \times  C^{\epsilon} \mid$ $p$ is an $\e$-restricted equilibrium of the game $G \bigoplus g \}$. Define now $\theta^\e_G: \mathcal{E}^\e_G \rightarrow \times_{n} \mathbb{R}^{L_n}$ by $(\theta^\e_{G})_n(g,p) = (p_n(i) + \nu_n(i,p_{-n}) + g_n(i))_{i \in L_n}$. Using the same reasoning as in Claim \ref{eq index and deg} we can show that $\theta^\e_G$ is a homeomorphism and $(\theta^{\e}_G)^{-1}(g) = (f^\e(g), r(g))$, where $f^\e$ is the displacement of the permuted GPS map of $\Phi^\e_G$. Define $\text{proj}_{g}: \mathcal{E}^{\e}_{G} \rightarrow \times_n \mathbb{R}^{L_n}$ as proj$_g(g,p) = g$. Then it implies that $\text{proj}_g \circ (\theta^{\e}_G)^{-1}(g) = f^\e(g)$. 

Let now $\Theta^{GW}_{\e}: \mathcal{E}^{GW}_{\e} \rightarrow \mathcal{G}$ be the homeomorphism presented in Theorem 5.2 in \cite{GW2002}. As in \cite{KM1986}, we can reparametrize the graph of equilibria $\mathcal{E}^{GW}_{\e}$ writing an element $(G,p) \in \mathcal{E}^{GW}_{\e}$ as $(\tilde{G},\tilde g,p)$, where $G$ is uniquely written as $G = \tilde{G} \bigoplus \tilde g$, with $\tilde{G}_n(z) \equiv G_n(z) - \mathbb{E}[G| Z_n(\ell_n(z))]$ and $\tilde g_n(z) \equiv \mathbb{E}[G| Z_n(\ell_n(z))]$,  if $\ell_n(z) \neq \emptyset$ and $\tilde{G}_n(z) = G_n(z)$, if otherwise. Notice that such a vector $\tilde g$ of the decomposition can be assumed to be in $\times_n \mathbb{R}^{L_n}$ since, for $z,z' \in Z$ with $\ell_n(z) = i = \ell_n(z')$ we have that $\tilde g_n(z) = \tilde g_n(z')$ and if $z \in Z$ is such that $\ell_n(z) = \emptyset$ we have that $\tilde g_n(z) = 0$. Then $\Theta^{GW}_{\e}$ is defined by $\Theta^{GW}_{\e}(\tilde{G}, \tilde g, p) = (\tilde{G}, t)$, where $\tilde{G} \equiv (\tilde{G}_n)_{n \in \N}$, $\tilde{G}_n \equiv (\tilde{G}_n(z))_{z \in Z}$  and $t \in \times_n \mathbb{R}^{L_n}$ with $t_n(i) \equiv p_n(i) + \nu_n(i,p_{-n}), i \in L_n$. The payoff over terminal nodes represented in the vector $(\tilde{G}, t)$ is given by $\tilde{G}_n(z) + t_n(\ell_n(z))$. Notice that $\Theta^{GW}_{\e}$ is the identity in the first coordinate, similarly to what happened with the homeomorphism $\theta^s$ in Lemma \ref{ST}. 

For the fixed game $G \in \mathcal{G}$ consider now its decomposition $(\tilde{G}, \tilde g)$. Define the graph $\mathcal{E}^\e_{\tilde{G}} \equiv \{ (g',p) \in \times_n \mathbb{R}^{L_n} \times (\times_n C^{\epsilon}_n) \mid$ $p$ is an $\e$-equilibrium of $\tilde{G} \bigoplus g' \}$ and $\theta^{'}_\e: \mathcal{E}^\e_{\tilde{G}} \rightarrow \times_n \mathbb{R}^{L_n}$ defined by $\theta^{'}_\e(g',p) = t$, where $t$ satisfies $\Theta^{GW}_{\e}(\tilde{G},g',p) = (\tilde{G}, t)$. Therefore, $\theta^{'}_\e$ is a homeomorphism. Let $\text{proj}^{'}: \mathcal{E}^\e_{\tilde{G}} \rightarrow \times_n \mathbb{R}^{L_n}$ be the projection over the $g'$-coordinate. We can now define the local degree of the $\e$-restricted equilibrium component $q^{\e}(X')$ of game $G$ as the local degree of the projection $\text{proj}^{'}|_{W}$, where $W$ is an open neighborhood in $\mathcal{E}^\e_{\tilde{G}}$ of $\{g\} \times q^{\e}(X')$ that contains no other pair $(g,p)$ in its boundary. This local degree, by the same argument as in the proof of  Claim \ref{eq index and deg}, agrees with the degree of $q^{\e}(X')$ computed from $\text{proj}_{\mathcal{G}} : \mathcal{E}^{GW}_{\e} \rightarrow \mathcal{G}$.

Note now that $\theta^\e_G(\{0\} \times q^{\e}(X')) = \theta^{'}_\e(\{g\} \times q^{\e}(X'))$. Letting $f^{'}_\e \equiv \text{proj}' \circ (\theta'_{\e})^{-1}$, we have therefore that $f^\e = f^{'}_\e - g$. Therefore $f^\e$ and $f'_\e$ have the same local degrees. Since $f^\e$ is the displacement map of the permuted $GPS$ map of $\Phi^\e_G$, it follows that the local degree of $f^\e$ and the index of the $GPS$ map are equal, which implies ind$(q^{\e}(X'),\Phi^\e_G) = \text{deg}^{\e}_G(q^{\e}(X')).$

\renewcommand{\bibfont}{\small}
\bibliographystyle{abbrvnat}
\bibliography{references}

\end{document}